\tikzstyle{background}=[rectangle,fill=gray!10, inner sep=0.1cm, rounded corners=0mm]
\tikzstyle{nloc}=[draw, text badly centered, rectangle, rounded corners, minimum size=2em,inner sep=0.5em]
\tikzstyle{background}=[rectangle,fill=gray!10, inner sep=0.1cm, rounded corners=0mm]
\tikzstyle{loc}=[draw,rectangle,minimum size=1.4em,inner sep=0em]
\tikzset{
    gluon/.style={decorate,draw=black,
        decoration={coil,amplitude=1pt, segment length=5pt}} 
}
\tikzset{
    gluonew/.style={decorate,draw=black,
        decoration={coil,amplitude=1pt, segment length=2pt}} 
}
\tikzset{
    gluon1/.style={decorate,draw=black,
        decoration={coil,amplitude=3pt, segment length=3pt}} 
}
\newcommand{\mtpda}[1]{MTPDA}
\newcommand{\mpda}{\ensuremath{\mathsf{MPDA}}\xspace}
\newcommand{\tmpda}{\ensuremath{\mathsf{TMPDA}}\xspace}
\newcommand{\ws}{\mathsf{WS}}
\newcommand{\remove}{\mathsf{remove}}
\newcommand{\append}{\mathsf{append}}
\newcommand{\trunc}{\mathsf{trunc}}
\newcommand{\last}{\mathsf{last}}
\newcommand{\maxi}{\mathsf{max}}
\newcommand{\hole}{\mathsf{hole}}
\newcommand{\pushhole}{\texttt{hole}}
\newcommand{\textsub}{\textsubscript}
\newcommand{\textsup}{\textsuperscript}
\newcommand{\nop}{\ensuremath{\mathsf{nop}}\xspace}
\newcommand{\push}{}
\newcommand{\pop}{}
\newcommand{\op}{\ensuremath{\mathsf{op}}\xspace}
\newcommand{\src}{\ensuremath{\mathsf{src}}\xspace}
\newcommand{\tgt}{\ensuremath{\mathsf{tgt}}\xspace}
\newcommand{\trim}{\ensuremath{\mathsf{BHIM}}}
\newcommand{\cmax}{c\textsubscript{max}}
\newcommand{\sol}{\texttt{SetOfLists}}
\newcommand{\wsr}{\texttt{WR}}
\newcommand{\wnr}{\mathsf{WR}}
\newcommand{\wsrt}{\texttt{WRT}}
\newcommand{\atcrit}{\ensuremath{\mathcal{A}_{Tcrit}^{\emptyset}}}
\newcommand{\acrit}{\ensuremath{\mathcal{A}_{crit}^{\emptyset}}}
\newcommand{\acritn}{\ensuremath{\mathcal{A}_{crit}}}
\newcommand{\aprodp}{\ensuremath{\mathcal{A}_{prodcon}^{(M,N)}}}
   \def\@citecolor{blue}%
   \def\@urlcolor{blue}%
   \def\@linkcolor{blue}%
\def\orcidID#1{\smash{\href{http://orcid.org/#1}{\protect\raisebox{-1.25pt}{\protect\includegraphics{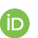}}}}}
\begin{document}
\title{Revisiting Underapproximate Reachability for Multipushdown
  Systems\thanks{Partly supported by UMI ReLaX, DST/CEFIPRA/INRIA
    project EQuaVe \& TCS.}}

\author{S. Akshay\inst{1} \and
Paul Gastin\inst{2} \and
S Krishna\inst{1}\and Sparsa Roychowdhury\inst{1}\orcidID{0000-0003-3583-7612}}
\authorrunning{S. Akshay, P. Gastin, S. Krishna, S. Roychowdhury}

\institute{IIT Bombay, Mumbai, India \and
ENS Paris-Saclay, Paris, France\\}
\maketitle             
\begin{abstract}
Boolean programs with multiple recursive threads can be captured as
  pushdown automata with multiple stacks.  This model is Turing complete, and
  hence, one is often interested in analyzing a restricted class that still
  captures useful behaviors.  In this paper, we propose a new class of bounded
  underapproximations for multi-pushdown systems, which subsumes most existing
  classes.  We develop an efficient algorithm for solving the under-approximate
  reachability problem, which is based on efficient fix-point computations.  We
  implement it in our tool \trim{}  and illustrate its applicability by
  generating a set of relevant benchmarks and examining its performance.  As an
  additional takeaway \trim{} solves the binary reachability problem in pushdown
  automata.  To show the versatility of our approach, we then extend our
  algorithm to the timed setting and provide the first implementation that can
  handle timed multi-pushdown automata with closed guards.

\keywords{Multipushdown Systems \and Underapproximate Reachability \and Timed pushdown automata.}
\end{abstract}
\section{Introduction}
The reachability problem for pushdown systems with multiple stacks is known to be undecidable. However, multi-stack pushdown automata (MPDA hereafter) represent a theoretically concise and analytically useful model of multi-threaded recursive programs with shared memory. As a result, several previous works in the literature have proposed different under-approximate classes of behaviors of MPDA that can be analyzed effectively, such as \emph{ Round Bounded}, \emph{Scope Bounded}, \emph{Context Bounded} and \emph{Phase Bounded}~\cite{la2007robust,la2010language,torre2016scope,cyriac2012mso,la2011reachability,la2012scope}. From a practical point of view, these underapproximations have led to efficient tools including, GetaFix~\cite{la2009analyzing}, SPADE~\cite{patin2007spade}. It has also been argued (e.g., see~\cite{Qadeer08}) that such bounded underapproximations suffice to find several bugs in practice.  In many such tools efficient fix-point techniques are used to speed-up computations.

We extend known fix-point based approaches by developing a new algorithm that
can handle a larger class of bounded underapproximations than the well-known bounded context
and bounded scope underapproximations for multi-pushdown systems while remaining efficiently implementable.  Our algorithm works for a new class of underapproximate behaviors called \emph{hole bounded} behaviors, which subsumes context/scope bounded underapproximations, and is orthogonal to phase bounded underapproximations. A ``hole'' is a maximal sequence of push operations of a fixed stack, interspersed with well-nested sequences of any stack. Thus, in a sequence $\alpha=\beta \gamma$ where 
 $\beta=[push_1(\textcolor{red}{push_2}\textcolor{blue}{push_3}$
$\textcolor{blue}{pop_3}\textcolor{red}{pop_2})push_1(\textcolor{blue}{push_3}\textcolor{blue}{pop_3})]^{10}$
and {\small $\gamma=\textcolor{red}{push_2} push_1 \textcolor{red}{pop_2} pop_1(pop_1)^{20}$,
$\beta$} is a hole with respect to stack 1. 
The suffix $\gamma$ has 2 holes (the $push_2$ and the $push_1$).  
Thus we say that $\alpha$ is 3-hole bounded. On the other hand, the number of context switches (and scope bound) in $\alpha$ is $> 50$.  A ($k$-)hole bounded sequence  is one such, where,  at any point of the computation, the number of ``open'' holes are bounded at this point (by $k$). We show that the class of hole bounded sequences subsumes most of the previously defined classes of underapproximations  and is, in fact, contained in the very generic class of tree-width bounded  sequences. This immediately shows decidability of the reachability problem for our class.

  Analyzing the more generic class  of tree-width bounded sequences is often much more difficult; for instance, building bottom-up tree automata for this purpose does not
 scale very well as it explores a large  (and often useless) state
 space. Our technique is radically  different from using tree automata. Under the hole bounded assumption, we pre-compute information regarding well-nested sequences and holes using fix-point computations and use them in our algorithm. Using efficient data structures to implement this approach, we develop a tool (\trim{}) for  \textcolor{red}{B}ounded \textcolor{red}{H}ole reachability \textcolor{red}{i}n \textcolor{red}{M}ulti-stack pushdown systems.
 
\noindent{\bf Highlights of \trim{}}.

\noindent $\bullet$ Two significant aspects  of the fix-point approach in \trim{} are: (i) we efficiently  solve the binary reachability problem for   pushdown automata. i.e.,  \trim{} computes all pairs of states $(s,t)$ such that $t$ is reachable from $s$ with empty stacks. This allows us to go beyond reachability and handle some liveness questions; (ii) we pre-compute the set of pairs of states that are endpoints of holes. This allows us to greatly limit the search for an accepting run.

\noindent $\bullet$  While the fix-point approach solves (binary) reachability efficiently, it does not a priori produce a witness of reachability.  We remedy this situation by proposing a backtracking algorithm, which cleverly 
uses the computations done in the fix-point algorithm,  to generate a witness efficiently.

\noindent $\bullet$ \trim{} is parametrized with respect to the hole bound:  if
non-emptiness can be checked or witnessed by a well-nested sequence (this is an
easy witness and \trim{} looks for easy witnesses first, then
gradually increases complexity, if no easy witness is found), then it is sufficient to have the hole bound 0. Increasing this complexity measure as required to certify non-emptiness gives an efficient implementation, in the sense that we search for harder witnesses only when no easier witnesses (w.r.t this complexity measure) exist. In examples described in the experimental section, a small (less than 4) bound suffices and we expect this to be the case for most practical examples. 

\noindent $\bullet$ Finally, we extend our approach to handle timed multi-stack pushdown systems. This shows the versatility of our approach and also requires us to solve several technical challenges which are specific to the timed setting.  Implementing this approach in \trim{} makes it, to the best of our knowledge, the first tool that can analyze timed multi-stack pushdown automata with closed guards.
 
 We analyze the performance of \trim{} in practice, by considering
benchmarks from the literature, and generating timed variants of some of
them. One of our benchmarks is a variant of the Bluetooth example \cite{chaki2006verifying,patin2007spade}, where \trim{} was able to catch a known race detection error.
Another interesting benchmark is a model of a parameterized multiple producer consumer example, having parameters $M, N$ on the quantities of two items $A, B$ produced.  Here, \trim{} could detect bugs by finding witnesses having just 2 holes,
while, it is unlikely that existing tools working on scope/context bounded underapproximations can handle them as the number of scope/context switches is dependent on $M,N$ (in fact, it is twice the least common multiple of $M$ and $N$). In the timed setting, one of the main challenges has been the unavailability of timed benchmarks; even in the untimed setting, many benchmarks were unavailable due to their proprietary nature. Due to lack of space, proofs, technical details and parametric plots of experiments are in~\cite{supp}.

   \noindent{\bf Related Work}.  
   Among other under-approximations, scope bounded~\cite{torre2016scope} subsumes context and round bounded underapproximations, and it also paves path for
   GetaFix~\cite{la2009analyzing}, a tool to analyze recursive (and multi-threaded) boolean programs. As mentioned earlier hole boundedness strictly subsumes scope boundedness. On the other hand, GetaFix uses symbolic approaches via BDDs, which is orthogonal to the improvements made in this paper. Indeed, our next step would be to build a symbolic version of \trim{} which extends the hole-bounded approach to work with symbolic methods. Given that \trim{} can already handle synthetic examples with 12-13 holes (see~\cite{supp}), we expect this to lead to even more drastic improvements and applicability.
   For sequential programs, a summary-based
   algorithm is used in~\cite{la2009analyzing}; summaries are like our
   well-nested sequences, except that well-nested sequences admit contexts from
   different stacks unlike summaries.  As a result, our class of bounded hole
   behaviors generalizes summaries.  Many other different theoretical
   results like phase bounded~\cite{la2007robust}, order
   bounded~\cite{atig2012model} which gives interesting underapproximations of
   \mpda{}, are subsumed in tree-width bounded behaviors, but they do not seem to have practical implementations. 
   Adding real-time information to pushdown automata by using clocks or timed stacks has been considered, both in the discrete and dense-timed settings. Recently, there has been a flurry of theoretical results in the topic~\cite{bouajjani1994automatic,AbdullaAS12,DBLP:conf/lata/AbdullaAS12,AGK16,AkshayGKS17}. However, to the best of our knowledge none of these algorithms have been successfully implemented (except~\cite{AkshayGKS17} which implements a tree-automata based technique for single-stack timed systems) for multi-stack systems. One reason is that these algorithms do not employ scalable fix-point based techniques, but instead depend on region automaton-based search or tree automata-based search techniques.

\section{Underapproximations in \mpda}\label{sec:2}
\vspace*{-2mm}

    A multi-stack pushdown automaton (\mpda)  is a tuple 
    $M=(\mathcal{S},\Delta, s_0,\mathcal{S}_f,$ $n,\Sigma,\Gamma)$ 
        where, 
        $\mathcal{S} $ is a finite non-empty set of locations,
        $\Delta$ is a finite set of transitions,
        $s_0\in \mathcal{S}$ is the initial location,
        $\mathcal{S}_f \subseteq \mathcal{S}$ is a set of final locations,
        $n\in \mathbb{N}$ is the number of stacks,
        $\Sigma $ is a finite input  alphabet,
        and $\Gamma$ is a finite stack alphabet which contains $\bot$. 
        A transition $t \in \Delta$ can be represented as a tuple
        $(s,\text{\op{}},a,s')$, where, $s,s' \in \mathcal{S}$ are
        respectively, the source and destination locations of the
        transition $t$, $a \in \Sigma$ is the label of the transition,
       and \op{} is one of the following
        operations
            (1) \nop, or no stack operation,
        (2) $\push(\downarrow_i{\alpha})$
        which pushes $\alpha \in \Gamma$
            onto stack $i \in \{1,2,\ldots,n\}$,
            (3) $\pop(\uparrow_{i}{\alpha})$ which pops stack $i$ if the top
            of stack $i$ is $\alpha \in \Gamma$.
For a transition $t=(s,\op,a,s')$ we write $\src(t)=s,\tgt(t)=s'$ and $\op(t)=\op$. At the moment we ignore the action label $a$ but this will be useful later when we go beyond reachability to model checking. A \emph{configuration} of the \mpda{} is a tuple
$(s,\lambda_1,\lambda_2,\ldots,\lambda_n)$ such that, $s \in \mathcal{S}$ is
the current location and $\lambda_i \in \Gamma^*$
represents the current content of $i^{th}$ stack. 
The semantics of the \mpda{}  is  defined as follows: a run is accepting if it starts from the initial state and reaches a final state with all stacks empty.
The language accepted by a \mpda{} is  defined as the set of 
 words generated by the accepting runs of the \mpda.
Since the reachability problem for \mpda{} is  
Turing complete, we consider under-approximate reachability.   
 A sequence of transitions is called \textbf{complete} if each push
in that sequence has a 
matching pop and vice versa.
A \textbf{well-nested} sequence denoted $ws$ is defined inductively as follows: 
a possibly empty sequence of \nop-transitions is $ws$, and so is the
sequence $t ~ws  ~t'$ where 
$\op(t)=\push({\downarrow}_i{\alpha})$ and
$\op(t')=\pop({\uparrow}_i{\alpha})$ are a  matching pair of push and pop operations of stack $i,\forall i \in \{1\ldots n\}$. Finally the concatenation of two well-nested sequences is a well-nested sequence, i.e., they are closed under concatenation. The set of all well-nested sequences defined by an MPDA is denoted $\ws$. If we visualize this by drawing edges between pushes and their corresponding pops, well-nested sequences have no crossing edges,
as in \includegraphics[scale=0.2]{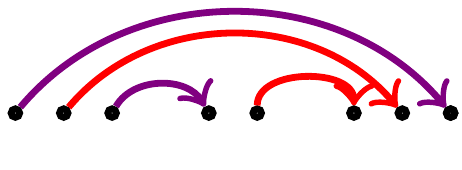} and
\includegraphics[scale=0.2]{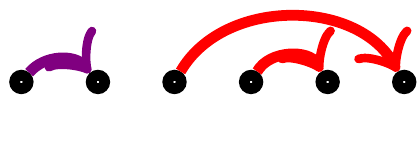}, where we have two stacks,
depicted with red and violet edges.  We emphasize that a well-nested
sequence can have well-nested edges from any stack.  In a sequence
$\sigma$, a push (pop) is called a \textbf{pending} push (pop)  if
its matching pop (push) is not in the same sequence $\sigma$.

\noindent{\bf{Bounded Underapproximations}}. 
As mentioned in the introduction, different bounded under-approximations  have
been considered in the literature to get around the Turing
completeness of \mpda{}. 
  During a computation, a context is a sequence of
    transitions where only one stack or no stack is used. In \emph{ context bounded} computations the  number of contexts are bounded~\cite{qadeer2004kiss}.
    A \emph{round} is a sequence of (possibly empty) contexts for stacks 
      $1,2,\ldots,n$.  \emph{Round bounded} computations restrict the total number of rounds
      allowed~\cite{la2010language,AGK16,AkshayGKS17}.
      \emph{Scope bounded} computations generalize bounded
      context computations. Here, the context changes within any push and
      its corresponding pop is bounded~\cite{la2010language,la2011reachability,la2012scope}.
      A \emph{phase} is a contiguous sequence
      of transitions in a computation,  where we restrict pop to only
      one stack, but there are no restrictions on the pushes~\cite{la2007robust}. 
      A  phase bounded computation is one where the number of phase changes is bounded.

\noindent{\bf Tree-width}. A generic way of looking at them is to consider classes which have a bound on the tree-width~\cite{madhusudan2011tree}.  In fact, the notions of split-width/clique-width/tree-width of communicating finite state machines/timed push down systems has been explored in~\cite{DBLP:journals/corr/abs-1903-03773},~\cite{cyriac-thesis}. The behaviors of the underlying system are then represented as graphs.   It has been shown in these references that if the  
 family of  graphs arising from the behaviours of the underlying
 system (say $S$) have a bounded tree-width,  then the reachability
 problem is decidable  for $S$ via, tree-automata.  However, this does not immediately give rise to an efficient implementation. The tree-automata approach usually gives non-deterministic or bottom-up tree automata, which when implemented in practice (see~\cite{AkshayGKS17}) tend to blow up in size and explore a large and useless space. Hence there is  a need for efficient algorithms, which exist for more specific underapproximations such as context-bounded (leading to fix-point algorithms and their implementations~\cite{la2009analyzing}).

\begin{figure}[b]
\scalebox{.8}{
\begin{tikzpicture}[->,thick]
\tikzset
  {dest/.style={circle,draw,minimum width=0.005mm,inner sep=0mm}}
 \node[state, draw=white,initial,initial text={}] (s) at (-1.8,0) {$s_0$};
 \draw (-1,-0.3) rectangle (1,0.5);
 \draw[->] (-1.3,0) -- (-1.05,0);
 \node[dest] (s0) at (-.9,0) {};
  \node[dest] (s1) at (-.7,0) {};
 \node[dest] (s2) at (-0.5,0) {};
  \node[dest] (s3) at (-0.1,0) {};
 \node[dest] (s6) at (0.5,0) {};
 \node[dest] (s7) at (.1,0) {};
  \node[dest] (s4) at (0.7,0) {};
 \node[dest] (s5) at (0.9,0) {};
 \node[dest] (s5) at (0.9,0) {};
\put(0.1,-16.9){$ws_1$};
 \path(s1) edge[draw=red,bend left=50] node[above] {} node{}(s4);
\path(s2) edge[draw=blue,bend left=60] node[above] {} node{}(s3);
\path(s0) edge[draw=blue,bend left=50] node[above] {} node{}(s5);
 \path(s7) edge[draw=red,bend left=90] node[above] {} node{}(s6);
\draw[->] (1,0) -- (1.2,0);
\node[circle,dashed,minimum width=0.05mm,inner sep=0.5mm] (p1) at (1.35,0) {$\textcolor{red}{\downarrow^{1}_1}$};
\node[circle,dashed,minimum width=0.05mm, inner sep=0.5mm] (p2) at (1.8,0) {$\textcolor{red}{\downarrow^2_1}$};
\draw[->] (1.45,0) -- (1.65,0);

 \draw (2.15,-0.2) rectangle (2.95,0.3);
 \draw[->] (1.95,0) -- (2.15,0);
 \node[dest] (t0) at (2.21,0) {};
  \node[dest] (t1) at (2.5,0) {};
 \node[dest] (t2) at (2.65,0) {};
  \node[dest] (t3) at (2.85,0) {};
  \path(t0) edge[draw=red,bend left=50] node[above] {} node{}(t1);
\path(t2) edge[draw=blue,bend left=60] node[above] {} node{}(t3);
\node[circle,dashed,minimum width=0.05mm, inner sep=0.5mm] (p3) at (3.35,0) {$\textcolor{red}{\downarrow^3_1}$};
 \node[dest,draw=white] (dum2) at (4.5,0) {};
\put(66,-16.9){$ws_2$};
\put(110,-16.9){$ws_3$};
\put(170,-16.9){$ws_4$};

\draw[->] (2.95,0) -- (3.2,0);
\draw (3.7,-0.2) rectangle (4.5,0.3);
 \draw[->] (3.49,0) -- (3.7,0);
 \node[dest] (r0) at (3.8,0) {};
  \node[dest] (r1) at (4,0) {};
 \node[dest] (r2) at (4.2,0) {};
  \node[dest] (r3) at (4.4,0) {};
  \path(r0) edge[draw=red,bend left=50] node[above] {} node{}(r3);
\path(r1) edge[draw=blue,bend left=60] node[above] {} node{}(r2);

\node[circle,dashed,minimum width=0.05mm, inner sep=0.5mm] (p4) at (4.85,0) {$\textcolor{blue}{\downarrow^1_2}$};
\node[circle,dashed,minimum width=0.05mm, inner sep=0.5mm] (p5) at (5.35,0) {$\textcolor{blue}{\downarrow^2_2}$};
\draw[->] (4.5,0) -- (4.7,0);
 \draw[->] (4.95,0) -- (5.2,0);
 
\draw (5.65,-0.2) rectangle (6.8,0.3);
 \draw[->] (5.45,0) -- (5.65,0);
 \node[dest] (q0) at (5.75,0) {};
  \node[dest] (q1) at (5.95,0) {};
 \node[dest] (q2) at (6.15,0) {};
 \node[dest] (q3) at (6.35,0) {};
   \node[dest] (q4) at (6.5,0) {};
 \node[dest] (q5) at (6.7,0) {};
   \path(q0) edge[draw=red,bend left=50] node[above] {} node{}(q1);
\path(q2) edge[draw=blue,bend left=60] node[above] {} node{}(q5);
\path(q3) edge[draw=blue,bend left=60] node[above] {} node{}(q4);

\node[circle,dashed,minimum width=0.05mm, inner sep=0.5mm] (p9) at (7.15,0) {$\textcolor{red}{\uparrow^3_1}$};
\node[circle,dashed,minimum width=0.05mm, inner sep=0.5mm] (p9) at (7.65,0) {$\textcolor{red}{\uparrow^2_1}$};
\put(240,-16.9){$ws_5$};
\node[dest,draw=white] (dum5) at (6.95,0) {};
\node[dest,draw=white] (dum6) at (9.1,0) {};

 \draw[->] (6.8,0) -- (7,0);
\draw[->] (7.3,0) -- (7.5,0);

\draw (8,-0.2) rectangle (9.2,0.3);
 \draw[->] (7.75,0) -- (8,0);
 \node[dest] (d0) at (8.1,0) {};
  \node[dest] (d1) at (8.3,0) {};
 \node[dest] (d2) at (8.5,0) {};
 \node[dest] (d3) at (8.7,0) {};
   \node[dest] (d4) at (8.9,0) {};
 \node[dest] (d5) at (9.1,0) {};
   \path(d0) edge[draw=blue,bend left=50] node[above] {} node{}(d1);
\path(d2) edge[draw=red,bend left=60] node[above] {} node{}(d5);
\path(d3) edge[draw=red,bend left=60] node[above] {} node{}(d4);

\node[circle,dashed,minimum width=0.05mm, inner sep=0.5mm] (p6) at (9.5,0) {$\textcolor{red}{\downarrow^4_1}$};
\node[circle,dashed,minimum width=0.05mm, inner sep=0.5mm] (p7) at (9.95,0) {$\textcolor{red}{\downarrow^5_1}$};

\node[circle,dashed,minimum width=0.05mm, inner sep=0.5mm] (p8) at (10.5,0) {$\textcolor{blue}{\uparrow^2_2}$};
\node[circle,dashed,minimum width=0.05mm, inner sep=0.5mm] (p9) at (10.9,0) {$\textcolor{red}{\uparrow^5_1}$};
\node[circle,dashed,minimum width=0.05mm, inner sep=0.5mm] (p10) at (11.3,0) {$\textcolor{blue}{\uparrow^1_2}$};
\node[circle,dashed,minimum width=0.05mm, inner sep=0.5mm] (p9) at (11.7,0) {$\textcolor{red}{\uparrow^4_1}$};
\node[circle,dashed,minimum width=0.05mm, inner sep=0.5mm] (p9) at (12.1,0) {$\textcolor{red}{\uparrow^1_1}$};
\node[circle,dashed,minimum width=0.05mm, inner sep=0.5mm] (p10) at (12.6,0) {$s_f$};
\node[dest,draw=white] (dum7) at (11.6,-.2) {};
\node[dest,draw=white] (dum8) at (12.21,0.2) {};

\draw[->] (9.2,0)--(9.38,0);
\draw[->] (9.6,0)--(9.78,0);
\draw[->] (10.05,0)--(10.28,0);
 \draw[->] (10.55,0)--(10.73,0);
 \draw[->] (10.95,0)--(11.15,0);
\draw[->] (11.38,0)--(11.55,0);
\draw[->] (11.77,0)--(11.95,0);
\draw[->] (12.15,0)--(12.35,0);

\node[fill=red!50, rectangle, inner sep=0.5, fill opacity=0.2,fit= (p1)(dum2)](H1){};
\node[dest,draw=white] (dum3) at (6.76,0) {};

\node[fill=blue!50, rectangle, inner sep=0.5, fill opacity=0.2,fit= (p4)(dum3)](H1){};

\node[dest,draw=white] (dum4) at (9.8,0) {};

\node[fill=red!50, rectangle, inner sep=0.5, fill opacity=0.2,fit=
(p6)(p7)](H1){};

\end{tikzpicture}
}
\caption{A run $\sigma$ with 2 holes (2 red patches) of the red
  stack and 1 hole (one blue  patch) of the blue stack.
}\label{run:holes1}
\end{figure}

\subsection{ A new class of under-approximations}\label{sec:lbh}
Our goal is to bridge the gap between having practically efficient algorithms and handling more expressive classes of under-approximations for reachability of multi-stack pushdown systems. To do so, we define a bounded approximation which is expressive enough to cover previously defined practically interesting classes (such as context bounded etc), while at the same time allowing efficient decidable reachability tests, as we will see in the next section.

\begin{definition}(Holes).\label{defn:hole}
  Let  $\sigma$ be  complete sequence of transitions, of length $n$ in a \mpda, and let $ws$ be a  well-nested sequence. 
  \begin{itemize}
\item  A \textbf{hole} of stack $i$  is a maximal factor of $\sigma$ of the form $(\downarrow_{i} ws)^{+}$, where $ws\in \ws$.       
The maximality of the hole of stack $i$ follows from the fact that any possible extension ceases to be a hole of stack $i$; that is, the only possible events following a maximal hole of stack $i$ are a push $\downarrow_j$ of some stack $j \neq i$, or a pop of some stack $j \neq i$. In general, whenever we speak about a hole, the underlying stack is clear. 
 \item A push $\downarrow_i$ in a hole (of stack $i$)  is called
   a pending push at (i.e., just before) a position $x\leq n$, if its matching pop occurs in $\sigma$ at a position $z > x$. 
   
\item A hole (of stack $i$) is said to be \textbf{open} at a position $x\leq n$, if there is a pending push $\downarrow_i$ of the hole at $x$.  Let $\#_x(\hole)$ denote the number of open holes at position $x$.  The \textbf{hole bound} of $\sigma$ is defined as $\maxi_{1 \leq x \leq |\sigma|} \#_x(\hole)$.
\item A \emph{hole segment} of stack $i$ is a prefix of a  hole of stack $i$, ending in a $ws$, while an  \emph{atomic hole segment} of stack $i$ is just the segment of the form $\downarrow_i  ws$.

 \end{itemize} 
\end{definition}

As an example, consider the sequence $\sigma$ in
Figure~\ref{run:holes1} of transitions of a \mpda{} having  stacks 1,2 (denoted respectively red and  blue). We use superscripts for each push, pop of each stack to distinguish the $i$th push, $j$th pop and so on of each stack. 
 There are two holes of stack 1 (red stack) denoted by the red  patches, and one hole of stack 2 (blue stack) denoted by the blue patch. The 
subsequence  \textcolor{red}{$\downarrow^1_1\downarrow^2_1ws_2$} of the first hole is not a maximal factor, since it can be extended by \textcolor{red}{$\downarrow^3_1ws_3$} in the run $\sigma$, extending the hole. 
Consider the position in $\sigma$ marked with $\textcolor{blue}{\downarrow^1_2}$. At this position, there is an open hole of the red stack (the first red patch), 
and there is an open hole of the blue stack (the blue patch). Likewise, at the position
  $\textcolor{red}{\uparrow^5_1}$, there are 2 open holes of the red stack  (2 red patches) 
and one open hole of the blue  stack 2 (the blue patch). The
hole bound of $\sigma$ is 3. The green patch consisting of $\textcolor{red}{\uparrow^{3}_1}$, 
$\textcolor{red}{\uparrow^{2}_1}$ and $ws_5$ is a pop-hole of stack 1. Likewise, the pops
$\textcolor{blue}{\uparrow^{2}_2}$, $\textcolor{red}{\uparrow^{5}_1}$, 
$\textcolor{blue}{\uparrow^{1}_2}$ are all pop-holes (of length 1) of stacks 
2,1,2 respectively.

\begin{definition}\textsc{(Hole Bounded Reachability Problem)} Given a
  \mpda{} and  $K \in \mathbb{N}$, the $K$-hole bounded reachability
 problem is the following\@: Does there exist a  $K$-hole
  bounded accepting run of the \mpda{}? 
\end{definition}
\begin{proposition}\label{prop:bh}
   The tree-width of $K$-hole bounded \mpda{} behaviors is at most $(2K+3)$.  
  \end{proposition}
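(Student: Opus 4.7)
The plan is to exhibit, for every $K$-hole bounded run $\sigma$, a split-width decomposition of the behavior graph $G_\sigma$ with split-width at most $2K+4$, and then to invoke the standard split-width/tree-width correspondence from~\cite{cyriac-thesis} to conclude a tree-width bound of $2K+3$. Here $G_\sigma$ has one vertex per transition of $\sigma$, successor edges between consecutive transitions, and a matching edge between every push and its corresponding pop.

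The base case is that a well-nested sequence $ws \in \ws$ has split-width $2$: peel off the outermost matched push--pop pair (keeping only those two endpoints on the frontier), and recurse on the sub-sequence strictly inside the pair and on the remainder to its right. So every $ws$ can be synthesized while keeping only two endpoints active.

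I would then describe the decomposition inductively along $\sigma$, maintaining at any prefix position $x$ a frontier composed of: (i) two \emph{global} endpoints marking the left and right boundaries of the part already processed, and (ii) for every one of the at most $K$ holes currently open at $x$, two endpoints recording the outermost pending push of the hole and the (future) vertex where that push will eventually be matched by its pop. The crucial observation is that the pending pushes of a single hole, being all pushes of the same stack, are matched in strict LIFO order; hence the column of pending pushes together with the column of their matching pops forms a nested sub-structure that can be assembled recursively using only two frontier vertices per hole, by the same peel-off argument used for $ws$.

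To finish, each atomic hole segment $\downarrow_i ws$ and each interstitial $ws$ block is plugged on top of the current frontier: inner $ws$ blocks are handled independently with split-width~$2$ and then merged back into the surrounding frontier; opening a new hole contributes a fresh pair of endpoints, which is charged against the increment in the open-hole count; closing a hole (i.e., matching its outermost pending push) merges its two endpoints back into the global frontier. At any stage the frontier thus contains at most $2$ global endpoints, plus $2$ endpoints for each of the at most $K$ open holes, plus $2$ slack endpoints for the $ws$ block currently being synthesized, giving split-width $\leq 2K+4$ and hence tree-width $\leq 2K+3$. The hard part is verifying that the plug-in operations across several simultaneously open holes and their interleaved $ws$ sub-blocks can always be scheduled without reserving extra endpoints beyond this budget; this relies on the maximality of holes (so that a hole's pushes are contiguous up to interspersed $ws$) and on the LIFO stack discipline, which together guarantee that each merge is available exactly when needed.
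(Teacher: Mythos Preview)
Your high-level strategy---bounding split-width by $2K{+}4$ and invoking the split-width/tree-width correspondence---matches the paper's, and the target count is correct. But the frontier invariant you propose is not maintainable. For an open hole $H$ with pushes $p_1,\ldots,p_m$ (and matching pops $q_1,\ldots,q_m$, where $q_j$ pops $p_j$), you keep the outermost pair $(p_1,q_1)$ on the frontier. Processing left to right, however, the first pop of $H$ you meet is $q_m$, and neither $p_m$ nor $q_m$ is on your declared frontier, so the stack edge $(p_m,q_m)$ cannot be attached. Your ``same peel-off argument used for $ws$'' does not transfer either: peeling $(p_1,q_1)$ leaves the interval between them as a subproblem, but another open hole's outermost pair may straddle this cut. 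In the $2$-hole bounded run $\downarrow_1\downarrow_1\downarrow_2\downarrow_2\,\uparrow_1\uparrow_2\uparrow_1\uparrow_2$ the two outermost pairs span positions $(1,7)$ and $(3,8)$, which cross, so the holes cannot be peeled independently and the recursion does not stay within the $2$-per-hole budget. This is precisely the ``hard part'' you flag but leave open.

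The paper resolves it by running the coloring game \emph{right to left}. Eve repeatedly colours and removes the rightmost remaining vertex; when it is a pop, she also removes the matching atomic hole segment $\downarrow_i\,ws$, spending two colours on the boundary of the resulting gap. Because the pops of a fixed hole are met right to left in the order $q_1,q_2,\ldots$ (matching $p_1,p_2,\ldots$), the removed atomic segments always form a contiguous prefix of that hole; hence each touched hole contributes exactly one gap, and the number of gaps at any stage equals the number of holes open at the current rightmost position, i.e.\ at most $K$. Two colours per gap plus a constant for the piece currently being split off give $2K{+}4$ colours, i.e.\ tree-width at most $2K{+}3$. Replacing your left-to-right scheme by this right-to-left processing closes the gap.
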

 With this, from~\cite{madhusudan2011tree}\cite{AGK16}\cite{AkshayGKS17}, decidability and complexity follow. Thus,
\begin{corollary}
  \label{cor:main}
The $K$-hole bounded reachability problem for \mpda{} is decidable in
$\mathcal{O}(|\mathcal{M}|^{2K+3})$ where, $\mathcal{M}$ is the size of the underlying \mpda{}.
\end{corollary}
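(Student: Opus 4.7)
The plan is to derive the corollary as an immediate consequence of Proposition~\ref{prop:bh} combined with the known algorithmic meta-theorem for tree-width bounded behaviors of multi-pushdown systems from \cite{madhusudan2011tree,AGK16,AkshayGKS17}. The role of Proposition~\ref{prop:bh} is structural (it pins down the tree-width), while the complexity comes from an off-the-shelf tree-automaton construction operating on tree decompositions of behavior graphs.

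First, I would encode each accepting run $\sigma$ of $\mathcal{M}$ as its behavior graph, a graph whose vertices are the events along $\sigma$ and whose edges capture the sequential linear order together with the matching push/pop pairs of each stack. By Proposition~\ref{prop:bh}, when $\sigma$ is $K$-hole bounded the tree-width of this behavior graph is at most $2K+3$. Hence the set of behavior graphs witnessing $K$-hole bounded reachability in $\mathcal{M}$ is a subset of the class of graphs (over the alphabet of transitions of $\mathcal{M}$) of tree-width at most $2K+3$.

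Next, I would invoke the tree-automata meta-theorem from \cite{madhusudan2011tree}, as instantiated for multi-pushdown systems in \cite{AGK16,AkshayGKS17}: given an \mpda{} $\mathcal{M}$ and a bound $w$, one can construct a tree automaton $\mathcal{T}_{\mathcal{M},w}$ running on width-$w$ tree decompositions that accepts exactly those decompositions whose behavior graph corresponds to an accepting run of $\mathcal{M}$. The states of $\mathcal{T}_{\mathcal{M},w}$ record, for each bag of the decomposition, the control states of $\mathcal{M}$ attached to the $O(w)$ boundary vertices together with the push/pop matching obligations crossing the bag. This yields $|\mathcal{M}|^{O(w)}$ states; emptiness of a tree automaton is linear in its size, so reachability reduces to checking non-emptiness of $\mathcal{T}_{\mathcal{M},w}$ in time $|\mathcal{M}|^{O(w)}$. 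Substituting $w = 2K+3$ from Proposition~\ref{prop:bh} produces the claimed $\mathcal{O}(|\mathcal{M}|^{2K+3})$ bound and, in particular, decidability.

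The only non-routine point is to make sure that the witness class of behavior graphs is recognizable by the tree automaton, i.e., that the automaton can additionally verify the hole-bound constraint on the fly. This is not really an obstacle though, since the hole bound is already absorbed into the tree-width by Proposition~\ref{prop:bh}: any decomposition of width $\leq 2K+3$ that encodes a valid accepting run of $\mathcal{M}$ is automatically a witness for $K$-hole bounded reachability, because the linear order and matching edges of the graph already determine whether the corresponding sequence is $K$-hole bounded, and this property is local to bags of the decomposition. Thus the decomposition-based verification done in \cite{AGK16,AkshayGKS17} carries over verbatim and the complexity bound follows.
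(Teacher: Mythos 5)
Your proposal matches the paper's own justification, which is simply to combine Proposition~\ref{prop:bh} (tree-width at most $2K+3$ for $K$-hole bounded behaviors) with the tree-automata results of \cite{madhusudan2011tree,AGK16,AkshayGKS17} yielding an $|\mathcal{M}|^{O(w)}$ decision procedure. You fill in more detail than the paper does (e.g., noting that the $K$-hole bound itself must be checkable by the tree automaton, which is fine since it is an MSO-expressible property of the behavior graph), but the route is essentially identical.
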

 
 Next, we turn to the expressiveness of this class with respect to the classical underapproximations of \mpda: first, the \textbf{hole} bounded class  strictly subsumes
\textbf{scope} bounded which already subsumes \textbf{context} bounded and \textbf{round} bounded classes. Also \textbf{hole} bounded \mpda{} and \textbf{phase} bounded \mpda{} are orthogonal.
\begin{proposition}\label{prop:first}
  Consider a \mpda{} $M$.  For any $K$, let $L_K$ denote a set of sequences
  accepted by $M$ which have number of rounds or number of contexts or scope
  bounded by $K$.  Then there exists $K'\leq K$ such that $L_K$ is $K'$
  hole bounded.
  Moreover, there exist languages which are $K$ hole bounded for some constant $K$, which are not $K'$ round or context or scope bounded for any $K'$.  Finally, there exists a language which is accepted by phase bounded \mpda{} but
  not accepted by hole bounded \mpda{} and vice versa. \label{prop:one}
\end{proposition}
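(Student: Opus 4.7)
The plan is to prove the three parts of the proposition in turn, as each requires a different argument. Since context and round bounded computations are already known to be subsumed by scope bounded ones (with appropriate parameter relations), the subsumption claim reduces to showing that every $K$-scope bounded run is $K'$-hole bounded with $K' \leq K$. I would fix any position $x$ in a scope-$K$ bounded run and argue that the number of simultaneously open holes at $x$ is at most $K$. Each open hole contributes at least one pending push at $x$ that we select as a witness; ordering these witnesses left-to-right along $\sigma$ and considering the leftmost one, its scope must pay for each additional hole, because two distinct maximal holes of the form $(\downarrow_i\, ws)^+$ must be separated by a transition that induces a context switch within the scope of the leftmost witness push. A careful accounting then gives the bound $K$.

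For the second claim (strict subsumption), I would exhibit the language $L = \{(\downarrow_1 (\downarrow_2 \uparrow_2))^n \uparrow_1^n \mid n \geq 1\}$, where pushes and pops carry an arbitrary stack symbol and $L$ is recognized by a trivial two-stack \mpda{}. Because $\downarrow_2 \uparrow_2 \in \ws$, the push prefix is of the form $(\downarrow_1\, ws)^n$, which is a single hole of stack $1$; at any position at most one hole of stack $2$ is also open (inside the current $\downarrow_2 \uparrow_2$), giving a global hole bound of $2$ independent of $n$. In contrast, the first $\downarrow_1$ is matched by the last $\uparrow_1$, and the contexts between them alternate between stacks $1$ and $2$ a number of times linear in $n$, so the scope, context, and round bounds of this \mpda{} are all unbounded, yielding the required separation.

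For the third claim (orthogonality with phase-bounded \mpda), I would give two witness languages. For phase-bounded but not hole-bounded, take $L^{(1)} = \{(\downarrow_1 \downarrow_2)^n \uparrow_2^n \uparrow_1^n \mid n \geq 1\}$. The stack contents are LIFO-consistent, and each word decomposes into three phases (all pushes, then the stack-$2$ pops, then the stack-$1$ pops), so $L^{(1)}$ is $3$-phase bounded. Yet right after the push prefix, each of the $2n$ pushes is a maximal hole of length one, since its neighbour is a push of the other stack whose matching pop lies far to the right and therefore does not form an enclosing well-nested block, giving $2n$ simultaneously open holes. Conversely, for hole-bounded but not phase-bounded, take $L^{(2)} = \{(\downarrow_1 \downarrow_2 \uparrow_1 \uparrow_2)^n \mid n \geq 1\}$. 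Within each iteration at most two holes are open at once, and both are closed before the iteration ends, so $L^{(2)}$ is $2$-hole bounded for all $n$; but the popping stack alternates between $1$ and $2$ every two transitions, forcing at least $2n$ phases.

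The main obstacle is the precise relation $K' \leq K$ in the first claim: different conventions for scope (contexts visited vs.\ context switches, whether endpoints are counted) shift the bound by additive constants, and one must verify that selecting the leftmost witness pending push per open hole and tracking only the scope of that single witness yields exactly the scope bound and not some larger multiple. The other two claims reduce to verifying the three concrete example languages; the main pitfall there is parsing the maximality clause of the hole definition carefully, so that a purported ``hole'' is not accidentally merged with or shrunk by its neighbours.
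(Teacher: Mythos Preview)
Your approach to the first claim mirrors the paper's exactly: pick the leftmost open hole at a position, take a pending push from it, and observe that each of the remaining open holes forces at least one context switch within that push's scope, so $k'$ open holes imply at least $k'$ context switches and hence $k'\le K$. Your caveat about counting conventions is well-placed but ultimately harmless here.

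For the strictness and orthogonality claims you choose different, and considerably simpler, witness languages than the paper. The paper uses a single language
\[
L^{bh}=\{a^{n}b^{n}(a^{p_1}c^{p_1+1}b^{p'_1}d^{p'_1+1}\cdots a^{p_n}c^{p_n+1}b^{p'_n}d^{p'_n+1})\mid n,p_i,p'_i\in\mathbb{N}\}
\]
to witness both ``hole-bounded but not scope-bounded'' and ``hole-bounded but not phase-bounded'', and $L'=\{(ab)^n c^n d^n\}$ (essentially your $L^{(1)}$ with the pop blocks swapped) for ``phase-bounded but not hole-bounded''. Your $L=\{(\downarrow_1\,\downarrow_2\uparrow_2)^n\uparrow_1^{\,n}\}$ and $L^{(2)}=\{(\downarrow_1\downarrow_2\uparrow_1\uparrow_2)^n\}$ are cleaner minimal examples; the paper's choice of $L^{bh}$ is heavier but has the economy of reusing one language for two separations (and it also appears later as a running benchmark). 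Two small corrections: in your $L$, the maximal stack-$1$ hole actually extends past the push prefix, absorbing all but the last $\uparrow_1$ (since each absorbed $\uparrow_1$ turns the trailing atomic segment into a larger well-nested block), though the hole bound of $2$ you state remains correct; and your $L^{(1)}$ is $2$-phase bounded rather than $3$, since the push-only prefix can be merged into the first pop phase. Neither affects the validity of your argument.
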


\begin{proof}
  We first recall that if a language $L$ is 
  $K$-round, or $K$-context bounded, then it is also $K'$-scope bounded for some  $K'\leq K$~\cite{la2011reachability,la2010language}. 
  Hence, we only show that scope bounded systems are
  subsumed by hole bounded systems.

Let $L$ be a  $K$-scope bounded language, and let $M$ be a MPDA accepting $L$.  
Consider a run $\rho$ of $w \in L$ in $M$. Assume that at any point $i$ in the
run $\rho$,  $\#_i(\pushhole s)=k'$, and towards a contradiction, let,
$k' > K$. Consider the leftmost open hole in $\rho$ 
which has a pending push $\downarrow_p$ whose pop $\uparrow_p$ is to the right of $i$.  Since $k'>K$ is the number of open holes at $i$, there are at least $k'>K$ context changes in between $\downarrow_p$  and $\uparrow_p$. 
 This contradicts the $K$-scope bounded assumption, and hence $k' \leq K$.

\noindent To show the strict containment, consider the visibly pushdown 
language~\cite{alur2004visibly} given by 
$L^{bh} = \{a^{\textcolor{red}{n}}b^{\textcolor{red}{n}}(a^{p_1}c^{p_1+1}b^{p'_1}d^{p'_1+1}\cdots 
a^{p_{\textcolor{red}{n}}}c^{p_{\textcolor{red}{n}}+1}b^{p'_{\textcolor{red}{n}}}d^{p'_{\textcolor{red}{n}}+1})\mid n ,p_1,p'_1, \dots, p_n, p'_n \in \mathbb{N}\}$. A possible word $w \in L^{bh}$ is $a^3b^3~a^2c^3b^2d^3~a^2c^3bd^2~ac^2bd^2$ with 
$a,b $  representing push in stack 1,2 
respectively and $c,d$ representing  the corresponding matching pop from stack 1,2.
A run $\rho$ accepting the word $w \in L^{bh}$ will start with a sequence of pushes of stack 1 followed
by another sequence of pushes of stack 2. Note that, the number of the pushes $n$ 
is same in both stacks. Then there is a group $G$ consisting of a 
well-nested sequence of
stack 1 (equal $a$ and $c$) followed by a pop of the stack 1 (an extra $c$),   another well-nested
sequence of stack 2 (equal $b$ and $d$) and a pop of the stack 2 (an extra $d$), 
repeated $n$ times.  From the definition of the
$\pushhole$, the total number of holes required in $G$ is
0. But, we need 1 hole for the sequence of $a$'s and another for the sequence of  $b$'s at the
beginning of the run, which creates at most 2 holes during the
run. Thus, the hole bound for any accepting run $\rho$ 
is 2, and the language $L^{bh}$ is 2-hole bounded.

However, $L^{bh}$ is not $k$-scope bounded for any $k$.  Indeed, for each
$m\geq1$, consider the word $w_m=a^m b^m(ac^{2}bd^{2})^m\in L^{bh}$.  It is easy
to see that $w_m$ is $2m$-scope bounded (the matching $c,d$ 
of each $a,b$ happens $2m$ context switches later) 
but not $k$-scope bounded for $k<2m$. It can be seen that $L^{bh}$ is not $k$-phase bounded either. Finally, $L'=\{(ab)^nc^nd^n \mid n \in \mathbb{N}\}$ with $a,b$ and $c,d$ respectively being push and pop of stack 1,2 is not hole-bounded but 2-phase bounded. 
\qed\end{proof}
\section{A Fix-point Algorithm for Hole Bounded Reachability}\label{sec:algo}
In the previous section, we showed that hole-bounded underapproximations are a decidable subclass for reachability, by showing that this class  has a bounded tree-width. However, as explained in the introduction, this does not immediately give a fix-point based algorithm, which has been shown to be much more efficient for other more restricted sub-classes, e.g., context-bounded. 
In this section, we provide such a fix-point based algorithm for the hole-bounded class and explain its advantages. Later we discuss its versatility by showing extensions and evaluating its performance on a suite of benchmarks.

We describe the algorithm in two steps: first we give a simple fix-point 
based algorithm for the problem of 0-hole or \emph{well-nested reachability}, i.e, reachability by a well-nested sequence without any holes. For the 0-hole case,  our algorithm computes the \emph{reachability relation}, also called the \emph{binary reachability problem~\cite{dang2000binary}}. That is, we accept all pairs of states $(s,s')$ such that there is a well-nested run from $s$ with empty stack to $s'$ with empty stack.
Subsequently, we combine this binary reachability for well-nested sequences with an efficient graph search to obtain an  algorithm for $K$-hole bounded reachability.

\noindent{\bf{Binary well-nested reachability for \mpda}}. 
Note that single stack PDA are a special case, since all runs are indeed well-nested. 
\begin{enumerate}
 \item {\bf Transitive Closure}:
   Let $\mathcal{R}$ be the set of tuples of the form $(s_i,s_j)$ representing  that state $s_j$ is   reachable from state $s_i$ via a \nop{} discrete transition. Such a sequence from $s_i$ to $s_j$ is trivially \emph{well-nested}.  We take the $\mathsf{TransitiveClosure}$ of $\mathcal{R}$ 
   using Floyd-Warshall algorithm~\cite{cormen2009introduction}.
   The resulting set $\mathcal{R}_c$ of tuples
   answers the
   binary reachability for finite state automata (no stacks).
 \item {\bf Push-Pop Closure}: 
   For stack operations, consider a push transition on
   some stack (say stack $i$) of symbol $\gamma$, enabled from a state
   $s_1$, reaching state $s_2$. If there is a matching pop transition
   from a state $s_3$ to $s_4$, which pops the same stack symbol
   $\gamma$ from the stack $i$
   and if we have $(s_2,s_3) \in \mathcal{R}_c$, then we can add the tuple
   $(s_1,s_4)$ to $\mathcal{R}_c$.  
   The function $\mathsf{WellNestedReach}$ (Algorithm~\ref{alg:pushpopc}) repeats this process 
   and the transitive closure described above
   until a fix-point is reached.  Let us denote the resulting set of tuples by
   $\wnr$. Thus,
\end{enumerate}   

   \begin{lemma}
   $(s_1,s_2)\in \wnr $ iff $\exists$ a well-nested run
   in the \mpda{} from $s_1$ to $s_2$.  
   \label{lemm:wellnested}
 \end{lemma}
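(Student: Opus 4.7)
The plan is to prove both directions separately by induction, matching the inductive definition of well-nested sequences to the inductive structure of the fix-point computation.

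For the soundness direction ($\Rightarrow$), I will proceed by induction on the iteration at which the tuple $(s_1,s_2)$ is first inserted into $\wnr$. In the base case, $(s_1,s_2)$ is added either because $s_1=s_2$ (a trivially empty/\nop{} well-nested run) or because there is a single \nop-transition from $s_1$ to $s_2$; both are well-nested by definition. For the inductive step, a pair is added only by one of two rules. If $(s_1,s_2)$ is added by the transitive-closure step from $(s_1,s_3)$ and $(s_3,s_2)$, then by the induction hypothesis there exist well-nested runs $ws$ from $s_1$ to $s_3$ and $ws'$ from $s_3$ to $s_2$; since $\ws$ is closed under concatenation, $ws \cdot ws'$ witnesses $(s_1,s_2)$. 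If $(s_1,s_2)$ is added by the push-pop closure step, then there exist transitions $t=(s_1,\push(\downarrow_i \gamma),s_2')$ and $t'=(s_2'',\pop(\uparrow_i \gamma),s_2)$ and a pair $(s_2',s_2'') \in \wnr$; by the induction hypothesis, a well-nested run $ws$ goes from $s_2'$ to $s_2''$, and then $t\, ws\, t'$ is well-nested by the matching-pair clause of the definition.

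For the completeness direction ($\Leftarrow$), I will induct on the structure, specifically on the length (in number of transitions) of the well-nested run $\sigma$ from $s_1$ to $s_2$. If $\sigma$ is the empty sequence, then $s_1=s_2$ and the pair is added during the initialization. If $\sigma$ is a single \nop-transition, it is in $\mathcal{R}$ and hence in $\wnr$. For the inductive step, I will use the standard decomposition of a well-nested sequence: either (i) $\sigma = \sigma_1 \cdot \sigma_2$ where both $\sigma_1, \sigma_2$ are non-empty well-nested runs, in which case by the induction hypothesis both halves contribute pairs to $\wnr$, and the transitive-closure rule yields $(s_1,s_2)$; or (ii) $\sigma$ cannot be split nontrivially, meaning that its first push $t=(s_1,\push(\downarrow_i\gamma),s_1')$ is matched with its very last pop $t'=(s_2',\pop(\uparrow_i\gamma),s_2)$ and $\sigma = t\, ws\, t'$ with $ws$ a shorter well-nested run from $s_1'$ to $s_2'$; by the induction hypothesis $(s_1',s_2') \in \wnr$, and then the push-pop closure rule adds $(s_1,s_2)$. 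Here the small technical point to justify is precisely that any non-decomposable well-nested sequence of length $\geq 1$ must begin with a push whose matching pop is its last transition — this follows directly from the absence of crossing edges in well-nested sequences.

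The main obstacle, and the only nontrivial step, is establishing the case analysis above for the completeness direction: namely, arguing that every non-trivially-splittable well-nested sequence has its first and last transitions as a matching push-pop pair enclosing a (shorter) well-nested interior. Once this structural lemma is in hand (a direct consequence of the inductive definition and closure under concatenation), both directions are routine inductions mirroring the two closure rules of $\mathsf{WellNestedReach}$. Since the fix-point terminates (the state space $\mathcal{S}\times\mathcal{S}$ is finite and $\wnr$ grows monotonically), the algorithm correctly decides binary well-nested reachability.
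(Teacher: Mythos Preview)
Your proposal is correct and follows essentially the same approach as the paper: the paper's own proof (in the appendix) is merely an informal sketch stating that the initialization captures stack-free reachability, the push-pop wrapping step adds exactly the pairs obtained by enclosing an already-computed well-nested segment in a matching push/pop, and the transitive closure plus iteration to fix-point guarantees all well-nested pairs are eventually found. Your two inductions (on iteration number for soundness, on run length with the standard concatenation-vs-wrapping decomposition for completeness) are precisely the rigorous version of that sketch.
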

\begin{algorithm}[t]
  
  {
    \scriptsize
        \SetKwProg{Return}{return}{\string;}{}
        \SetKwProg{Fn}{Function}{\string:}{}
        \SetKwFunction{NumberOfStacks}{NumberOfStacks}
        \SetKwFunction{NumberOfHoles}{NumberOfHoles}
        \SetKwFunction{pushpopclosure}{WellNestedReach}
        \SetKwFunction{concatclosure}{ConcatClosure}
        \SetKwFunction{addpop}{AddPop$_i$}
        \SetKwFunction{addhole}{AddHole$_i$}
        \SetKwFunction{pushclosure}{AtomicHoleSegment$_i$}
        \SetKwFunction{pushclosureC}{HoleSegment$_i$}
        \SetKwFunction{addcomplete}{AddWellNested}
        \SetKwFunction{Transitive}{TransitiveClosure}
        \SetKwFunction{isempty}{IsEmpty}
        \SetKwComment{Comment}{\textbackslash\textbackslash}{}
        \SetKwRepeat{Do}{do}{while}
        \Fn{\isempty{$M=(\mathcal{S},\Delta, s_0,\mathcal{S}_f,$ $n,\Sigma,\Gamma) ,K$}}{
        \KwResult{True or False}
            \wsr{} := \pushpopclosure{$M$};
            \Comment{\textcolor{red}{Solves binary reachability for
                pushdown system}}
            
            \If{some $(s_0,s_1)\in$ \wsr{} with $s_1 \in \mathcal{S}_f$ }{
                \Return{False}{}
              }
              \ForAll{$ i \in [n]$}{
               $AHS_i := \emptyset$; $Set_i := \emptyset$\;
 \ForAll{$(s,{\downarrow}_{i}(\alpha),a,s_1)\in \Delta$ and $ (s_1,s') \in$ \wsr}{
    $AHS_i:= AHS_i \cup \{(i, s,\alpha,s')\}$; $Set_i := Set_i\cup\{(s,s')\}$\;
 }
 $HS_i := \{(i,s,s')\mid (s,s') \in \Transitive{$Set_i$} \}$\;
 }
 $\mu:= [s_0]$; $\mu.\NumberOfHoles := 0$\;
 $\sol_{new} := \{\mu\}$; $\sol:=\emptyset$\;
 \Do{$\sol_{new}\neq \emptyset$}{
    $\sol:= \sol \cup \sol_{new}$\;
    $\sol_{todo} := \sol_{new}$; $\sol_{new} := \emptyset$\;
    \ForAll{ $\mu' \in \sol_{todo}$}{
       \If{$\mu'.\NumberOfHoles < K$}{
          \ForAll{$i \in[n]$}{
             \Comment*[h]{\textcolor{red}{~Add hole for stack i}}\\
             $\sol_{h} := \addhole(\mu',HS_i) \setminus \sol$\;
             $\sol_{new} := \sol_{new} \cup \sol_h$\;
          }
       }

       \If{$\mu'.\NumberOfHoles > 0$}{
          \ForAll{$i \in [n]$}{
             \Comment*[h]{\textcolor{red}{~Add pop for stack i}}\\
             $\sol_p := \addpop(\mu', M, AHS_i, HS_i,\wsr{}) \setminus \sol$\;
             $\sol_{new} := \sol_{new} \cup \sol_p$\;
             \ForAll{$\mu_3 \in \sol_p $}{
                \If{$\mu_3.last \in \mathcal{S}_f$ and $\mu_3.\NumberOfHoles = 0$}
                   {\Return(\Comment*[h]{\textcolor{red}{If reached destination state}}){False}{}}
             }
          }
       }
    }
 }
 \Return{$True$}{}
 } 
 }  
 \caption{Algorithm for Emptiness Checking of hole bounded \mpda{} }\label{alg:isempty}
    
\end{algorithm}

 \noindent{\bf{Beyond well-nested reachability}}. 
 A naive algorithm for $K$-hole bounded reachability for $K>0$ is to start from
 the initial state $s_0$, and do a Breadth First Search (BFS),
 nondeterministically choosing between extending with a well-nested segment, creating
 hole segments (with a pending push) and closing hole segments (using pops).  We
 accept when there are no open hole segments and reach a final state; this gives
 an exponential time algorithm.  Given the exponential dependence on the
 hole-bound $K$ (Corollary~\ref{cor:main}), this exponential blowup is
 unavoidable in the worst case, but we can do much better in practice.  In
 particular, the naive algorithm makes arbitrary non-deterministic choices
 resulting in a blind exploration of the BFS tree.

 In this section, we use the binary well-nested reachability algorithm
 as an efficient subroutine to limit the search in BFS to its
 reachable part (note that this is quite different from DFS as well since we do not just go down one path). The crux is that at any point, we create a new
 hole for stack $i$, \emph{only} when (i) we know that we cannot reach
 the final state without creating this hole and (ii) we know that we can
 close all such holes which have been created.  Checking (i) is easy, since   we just use the
 $\wnr$ relation for this. Checking (ii) blindly would correspond to
 doing a DFS; however, we precompute this information and simply look
 it up, resulting in a constant time operation after the precomputation. 
 \medskip

 \noindent {\bf Precomputing hole information.} Recall that a \emph{hole} of stack $i$ is a maximal sequence of the form $(\downarrow_i ws)^+$, where $ws$ is a well-nested sequence and $\downarrow_i$ represents a push of stack $i$. A \emph{hole segment} of stack $i$ is a prefix of a  hole of stack $i$, ending in a $ws$, while an  \emph{atomic hole segment} of stack $i$ is just the segment of the form $\downarrow_i  ws$.
A \emph{hole-segment} of stack $i$ which starts from state $s$ in the MPDA and ends in state $s'$, can be represented by the triple $(i,s, s')$, that we call a \emph{hole triple}.
We compute the set $HS_i$ of all hole triples  $(i,s,s')$ such that starting at 
$s$, there is a hole segment of stack $i$ which ends at state $s'$, as detailed in lines (5-9) of Algorithm~\ref{alg:isempty}. In doing so, we also compute the set $AHS_i$ of  all atomic hole segments of stack $i$ and store them as tuples of the form $(i,s_p, \alpha, s_q)$ such that $s_p$ and $s_q $ are the \mpda{} states respectively at the left and right end points of an atomic hole  segment of stack $i$, and $\alpha$ is the symbol pushed on stack $i$  ($s_p \xrightarrow{\downarrow_i(\alpha)ws}s_q$).
\medskip

\noindent{\bf A guided BFS exploration.} We start with a list $\mu_0=[s_0]$ consisting of the initial state 
and construct a BFS exploration tree whose nodes are lists of bounded length. 
A list is a 
 sequence of states and hole triples
 representing a 
 $K$-hole bounded run 
 in a concise form. If $H_i$ represents a hole triple for stack $i$, then 
 a list is a sequence of the form $[s, H_i, H_j, H_k, H_i, \dots, H_{\ell}, s']$. 
  The simplest kind of list is a single state $s$.   
 For example, a list with 3 holes 
of stacks $i, j, k$ is
 $\mu = [s_0$,\textcolor{red}{$(i,s,s')$},\textcolor{red}{$(j,r,r')$},\textcolor{red}{$(k,t,t')$},$t'']$. 
  The hole triples (in red) denote open  holes in the list.
The maximum number of open holes in a list is bounded, making the length of the list also bounded.  Let $\last(\mu)$ represent the last element of the list $\mu$. This is always a state. For a node $v$ storing list $\mu$ in the BFS tree, if $v_1,\ldots v_k$ are its children, then the corresponding lists $\mu_1,\ldots \mu_k$ are obtained by extending the list $\mu$ by one of the following operations:
\begin{enumerate}
\item {\bf Extend $\mu$  with a hole}. 
  Assume there is a hole of some stack $i$, which starts at
  $\last(\mu)=s$, and ends at $s'$. 
    If the list at the parent node $v$ is
  $\mu=[\dots, s]$, then for all $(i,s, s') \in HS_i$, we obtain the list
  $\trunc(\mu)\cdot \append[(i,s, s'), s']$
  at the child node (i.e., we remove the last element $s$ of $\mu$, then append to this list the hole triple  $(i,s, s')$, followed by $s'$).   

\item {\bf Extend $\mu$  with a pop}.
 Suppose there is a transition  $t = (s_k, {\uparrow}_{i}(\alpha),a,
 s_k')$  from $\last(\mu)=s_k$, where $\mu$ is of the form {\small $[s_0,\dots, (h,u,v),$ $
  \textcolor{red}{(i,s, s')}$, $(j,t,t')
  \dots,s_k]$}, such that there is no hole triple of stack $i$ after $\textcolor{red}{(i,s, s')}$, we extend the run by matching this pop (with its push). However, to obtain the last pending push of stack $i$ corresponding to this hole, just $HS_i$ information is not enough since we also need to match the stack content. Instead,  we check if we can split the hole \textcolor{red}{$(i,s,s')$} into
   (1) a hole triple $(\textcolor{red}{i,s,s_a)}\in HS_i$, and
 (2)  a tuple $(i,s_a,\alpha,s') \in AHS_i$.
  If both (1) and (2)  are possible, then the pop transition $t$ corresponds to
  the last pending push of the hole \textcolor{red}{$(i,s,s')$}.  
   $t$ indeed  matches the  pending push recorded in the atomic hole $(i,s_a,\alpha,s')$ in
  $\mu$,  enabling the firing of transition  $t$ from
  the state $s_k$, reaching $s'_k$.  In this case, we add the child
  node with the list $\mu'$ obtained from $\mu$ as follows.  We replace
  (i) $s_k$ with $s_k'$, and (ii) $(i,s,s')$ with
  \textcolor{red}{$(i,s, s_a)$}, respectively signifying firing of the transition $t$ and 
  the ``shrinking'' of the hole, by shifting the  end point of the hole segment 
  to the left.  When we obtain the hole triple $(i,s,s)$ (the start and end
  points of the hole segment coincide), we may have
  uncovered the last pending push 
  and thereby ``closed'' the hole segment completely.  
  At this point, we may choose to remove $(i,s,s)$ from the list, obtaining 
  $[s_0,\dots,(h,u,v)$, $(j,t,t')
  \dots,s_k']$. For every such $\mu'= [s_0, \ldots, (h,u,v),
  (i,s,s_a), (j,t,t'), \ldots, s_k']$ and  all  $(s_k',s_m) \in WS$ we also extend 
  $\mu'$ to $\mu''= [s_0, \ldots, (h,u,v),
  (i,s,s_a), (j,t,t'), \ldots, s_m]$. 
  Notice that the size of the list in the child node obtained on a pop, is either the same as the list in the parent, or
  is smaller. The details can be found in Appendix~\ref{app:algo}.
 \end{enumerate}
 
The number of lists is bounded since the number of states and the
length of the lists are bounded.  The BFS exploration tree will thus
terminate. Combining the above steps gives us
Algorithm~\ref{alg:isempty}, whose correctness gives us:
\begin{theorem}
  \label{thm:mainalgo}
  Given a \mpda{} and a positive integer $K$,  Algorithm~\ref{alg:isempty} terminates and answers ``false'' iff there exists a $K$-hole bounded accepting run of the \mpda.
\end{theorem}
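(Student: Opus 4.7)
The plan is to split the proof into termination and the two directions of correctness, relying on Lemma~\ref{lemm:wellnested} (binary well-nested reachability via $\wnr$) and an analogous statement for the precomputed hole segments: $(i,s,s')\in HS_i$ iff there is a hole segment of stack $i$ from $s$ to $s'$. The latter follows directly from the construction in lines (5--9): the set $AHS_i$ captures all atomic hole segments $\downarrow_i ws$ via one push of stack $i$ followed by a well-nested block certified by $\wnr$, and $HS_i$ is their transitive closure, which is precisely the inductive definition of a hole segment.

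For termination, I would argue that each list $\mu$ stored in $\sol$ alternates between states of $\mathcal{S}$ and hole triples drawn from the finite set $\bigcup_i HS_i$, and contains at most $K$ open holes at any time, so its length is bounded by $2K{+}1$. Hence the universe of possible lists is finite. Since the outer loop adds to $\sol_{new}$ only lists not already in $\sol$, the do--while loop must terminate.

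For soundness (the algorithm returns false $\Rightarrow$ a $K$-hole bounded accepting run exists), I would prove the invariant that whenever a list $\mu=[s_0, H_1, s_1, H_2, \dots, s_m]$ is added to $\sol$, there is an actual run of the \mpda{} from $s_0$ to $s_m$ whose projection onto the abstraction matches $\mu$: each hole triple $H_j=(i_j,p_j,q_j)$ is realized by a genuine hole segment of stack $i_j$ (by the $HS_i$ lemma), and the bits of the run between consecutive elements are well-nested (by Lemma~\ref{lemm:wellnested}). Crucially, on the \textbf{pop} extension, when we split $(i,s,s')$ into $(i,s,s_a)\in HS_i$ together with $(i,s_a,\alpha,s')\in AHS_i$ and fire the matching pop, the stack content just before the pop really does have $\alpha$ on top, because the suffix after $s_a$ in the hole is the last atomic hole segment, and everything produced afterwards in the list is either a hole of a different stack or a subsequent hole of stack $i$ whose pending pushes of stack $i$ have not yet been popped--so the just-uncovered $\alpha$ is indeed on top of stack $i$. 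When we reach a list with last state in $\mathcal{S}_f$ and zero open holes, we can splice the witnessing runs for each component to obtain an accepting $K$-hole bounded run.

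For completeness, I would take a $K$-hole bounded accepting run $\rho$ and decompose it, scanning from left to right, into its maximal holes of each stack interleaved with well-nested blocks and pops (this is essentially the definition of hole-bound from Definition~\ref{defn:hole}). I would then produce the corresponding list at each prefix of $\rho$ by replacing each maximal completed hole-segment up to the scan point by its hole-triple, each closed well-nested block by nothing (absorbed into $\wnr$), and each pop by the shrink-or-remove operation described in the algorithm. An induction on the number of BFS rounds shows that each of these intermediate lists is enumerated by the algorithm: extending a hole on stack $i$ corresponds to \addhole{} with a triple from $HS_i$, closing a pending push corresponds to \addpop{}, and well-nested blocks are absorbed into the $\wnr$-extension step within \addpop{}. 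Since the hole bound of $\rho$ is at most $K$, the guard \texttt{NumberOfHoles} $< K$ never blocks the required extension. Therefore the algorithm eventually enumerates a list whose last state is in $\mathcal{S}_f$ and whose \texttt{NumberOfHoles} is $0$, and returns false.

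The main obstacle I anticipate is the pop case: formally justifying that shrinking the hole $(i,s,s')$ into $(i,s,s_a)$ plus an atomic hole $(i,s_a,\alpha,s')$ faithfully mirrors the stack discipline, i.e., that the top-of-stack at the pop moment is exactly $\alpha$ and that no other open hole interferes. This requires tracking that pending pushes belonging to different (open) holes of the same stack $i$ are nested in last-in-first-out order along the list, which is guaranteed by the invariant that we only pop from the \emph{rightmost} open hole triple of stack $i$ (``no hole triple of stack $i$ after $(i,s,s')$'' in the algorithm's pop rule). Once this invariant is set up cleanly, both directions fall into place routinely.
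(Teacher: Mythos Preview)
The paper offers no formal proof beyond the algorithm description and a one-line termination remark; your three-part outline (termination via finiteness of lists, soundness via a list-to-run invariant, completeness by decomposing a given $K$-hole bounded run into the sequence of lists the BFS should enumerate) is exactly what is needed and matches the paper's informal reasoning.

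There is one gap to close in your completeness direction. You say that well-nested blocks are absorbed into the $\wnr$-extension inside $\mathsf{AddPop}_i$, but that step is only enabled when $\mu.\mathtt{NumberOfHoles}>0$. An accepting run may begin with a non-trivial well-nested prefix $ws_0$ from $s_0$ to some $s\neq s_0$ before the first hole (cf.\ $ws_1$ in Figure~\ref{run:holes1}); Algorithm~\ref{alg:isempty} as written seeds the BFS with $\mu=[s_0]$ and from there can only apply $\mathsf{AddHole}_i$, which requires a hole segment---hence a push transition---out of $\last(\mu)=s_0$. For an \mpda{} whose initial location has only \nop-transitions this first step is blocked, so your induction on the run decomposition stalls immediately. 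You should either make the easy fix explicit---seed $\sol_{new}$ with $\{\,[s]\mid (s_0,s)\in\wnr\,\}$ rather than $\{[s_0]\}$---or state that this is the intended reading (the paper is silent on the point). A smaller cosmetic remark: in the paper a list has shape $[s_0,H_1,\ldots,H_k,s']$ with consecutive hole triples (glued end-to-start when first created, possibly separated after shrinks), not the alternating form $[s_0,H_1,s_1,H_2,\ldots,s_m]$ you wrote; this does not affect your argument, but your invariant is easier to state once you adopt the paper's shape and record that the ``bit of run'' from the end of $H_j$ (in its current, possibly shrunk, form) to the start of $H_{j+1}$ consists of already-matched pops interleaved with well-nested pieces.
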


\noindent{\bf{Complexity of the Algorithm}}. 
The maximum number of states of the system is
$ |\mathcal{S}| $.  
The time complexity of transitive closure is $\mathcal{O}(|\mathcal{S}|^3)$,  using a 
Floyd-Warshall implementation. 
The time complexity of computing $\mathsf{WellNestedReach}$ 
which uses the transitive closure, is  $\mathcal{O}(|\mathcal{S}|^5) + \mathcal{O}(|\mathcal{S}|^2 \times (|\Delta| \times |\mathcal{S}|))$. To compute $AHS$ for $n$ stacks
the time complexity is $\mathcal{O}(n\times |\Delta| \times
|\mathcal{S}|^2)$ and to compute $HS$ for $n$ stacks the complexity is
$\mathcal{O}(n\times |\mathcal{S}|^2)$.
For multistack systems,  
each list keeps track of   (i) the number of hole segments($\leq K$),  
and (ii) information pertaining to holes (start, end points of holes, and which stack the hole corresponds to). 
    In the worst case, this will be $(2K +
2)$ possible states in a list, as we are keeping the states at the
start and  end points of all the hole segments and a stack
per hole.  
So, there are 
$\leq |\mathcal{S}|^{2K+3} \times n^{K+1}$ lists. 
In the worst case, when there is no $K$-hole bounded run,  we may end up generating all possible lists for a given
bound $K$ on the hole segments. The time complexity is thus bounded above by 
$\mathcal{O}(|\mathcal{S}|^{2K+3}\times n^{K+1}
+ |\mathcal{S}|^5 + |\mathcal{S}|^3 \times |\Delta|)$.  

\noindent{\bf{Beyond Reachability}}.
We can solve the usual safety questions in the (bounded-hole) underapproximate setting, by checking for underapproximate reachability on the product of the given system with the complement of the safe set.  
  Given the way Algorithm~\ref{alg:isempty} is designed,  the fix-point algorithm allows us to go beyond reachability.  In particular, we can solve several (increasingly difficult) variants of the repeated reachability problem, without much modification.

Consider the question : For a given state $s$ and \mpda{}, does there exist a run $\rho$ starting from $s_0$
        which visits $s$ infinitely often? This is decidable if we can decompose $\rho$ into a finite prefix 
        $\rho_1$ and an infinite suffix $\rho_2$ s.t. 
        (1) both $\rho_1, \rho_2$ are well-nested, or (2) $\rho_1$ is $K$-hole bounded complete (all stacks empty), and 
        $\rho_2$ is well-nested, or (3) $\rho_1$ is $K$-hole bounded, and $\rho_2=(\rho_3)^{\omega}$, where 
        $\rho_3$ is $K$-hole bounded.   
   It is easy to see that      
        (1) is solved by two calls to $\mathsf{WellNestedReach}$ and
        choosing non-empty runs. (2)
        is solved by a call to Algorithm~\ref{alg:isempty}, modified so that we reach $s$, and then calling $\mathsf{WellNestedReach}$. 
        Lastly, to solve (3), first modify  Algorithm~\ref{alg:isempty}
        to check reachability to $s$ with possibly non-empty stacks. Then 
        run the modified algorithm twice : first start from $s_0$ and reach $s$; second 
        start from $s$ and reach $s$ again.

\section{Generating a Witness}\label{sec:witness}
We next focus on the question of generating a witness for an accepting run when our algorithm guarantees non-emptiness. This question is important to address from the point of view of applicability: if our goal is to see if bad states are reachable, i.e., non-emptiness corresponds to presence of a bug, the witness run gives the trace of how the bug came about and hence points to what can be done to fix it (e.g., designing a controller). We remark that this question is difficult in general. While there are naive algorithms which can explore for the witness (thus also solving reachability), these do not use fix-point techniques and hence are not efficient. On the other hand, since we use fix-point computations  to speed up our reachability algorithm, finding a witness, i.e., an explicit run witnessing reachability, becomes non-trivial.
Generation of a witness in the case of well-nested runs is simpler than the 
case when the run has holes, and requires us to ``unroll'' pairs  $(s_0,s_f) \in \wsr$
recursively and generate the sequence of transitions responsible for $(s_0,s_f)$, as detailed in Algorithm~\ref{alg:uniquepath}. 

 \noindent {\bf Getting Witnesses from Holes}\label{para:witness}. Now we move on to the more complicated case of behaviours having holes.  
Recall that in BFS exploration we start from the states reachable from $s_0$ by well-nested sequences,  and explore subsequent states obtained either
from  (i) a hole creation, or (ii) a pop
operation on a stack. Proceeding in this manner, if we reach a final
configuration  (say $s_f$), with all holes closed (which implies empty stacks), then we declare non-emptiness. 
To generate a witness, we start from the final state $s_f$ reachable in
the run (a leaf node in the BFS exploration tree) and
\emph{backtrack} on the BFS
exploration tree till we reach the initial state $s_0$.  This results in generating a witness run in the reverse, from the right  to the left. 
 
 \noindent$\bullet$ Assume that the current node of the BFS tree was obtained using a pop operation.  There
  are two possibilities to consider here (see below) depending on whether this pop
  operation closed or shrunk some hole.  Recall that each hole has a left end
  point and a right end point and is of a specific stack $i$, depending on 
  the pending pushes $\downarrow_i$ it has.  So, if the \mpda{} has 
  $k$ stacks, then a list in the exploration tree can have $k$ kinds of
  holes.  The witness algorithm uses $k$ stacks called \emph{witness stacks} to correctly implement the
  backtracking procedure, to deal with $k$ kinds of holes. 
  Witness stacks should not be confused with the stacks of the \mpda{}.  
     
 \noindent$\bullet$    Assume that the current pop operation is closing a hole  \includegraphics[scale=0.2]{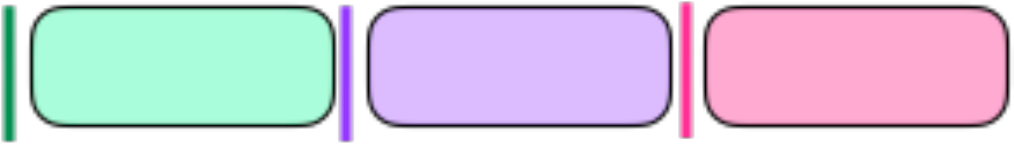} of kind $i$ as in Figure~\ref{spit-hole}. 
    This hole consists of the atomic holes 
    \includegraphics[scale=0.2]{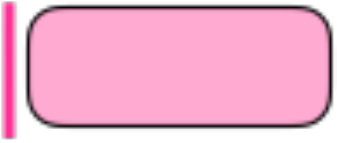},  \includegraphics[scale=0.2]{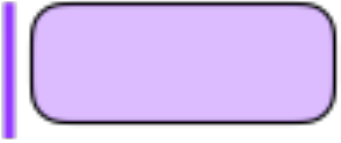} and 
     \includegraphics[scale=0.2]{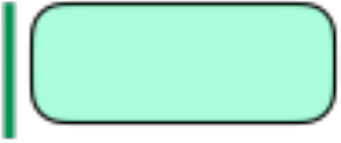}. The atomic hole  \includegraphics[scale=0.2]{ah3.pdf} consists 
     of the push \includegraphics[scale=0.3]{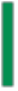} and the well-nested sequence 
     \includegraphics[scale=0.2]{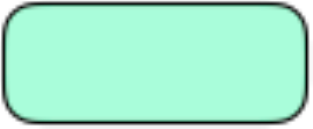} (same for the other two atomic holes).
    Searching among possible push transitions, we identify the matching push
    \includegraphics[scale=0.3]{green-push.pdf}  associated with the current pop, resulting in closing the hole.   On backtracking, this leads  to a
    parent node with the atomic hole \includegraphics[scale=0.2]{ah3.pdf} having as left end point, the push  
    \includegraphics[scale=0.3]{green-push.pdf}, and the right end
    point as the target of the $ws$ \includegraphics[scale=0.2]{green-ws.pdf}. 
     We push onto the witness stack $i$,
    a barrier (a delimiter symbol $\#$) followed by the matching
    push transition \includegraphics[scale=0.3]{green-push.pdf} 
    and then the $ws$,  \includegraphics[scale=0.2]{green-ws.pdf}.  The barrier segregates the contents of the
    witness stack when we have two pop transitions of the same
    stack in the reverse run, closing/shrinking two different holes.  

\begin{wrapfigure}[19]{l}{4.77cm}
  \vspace{-.65cm}
\includegraphics[scale=0.2]{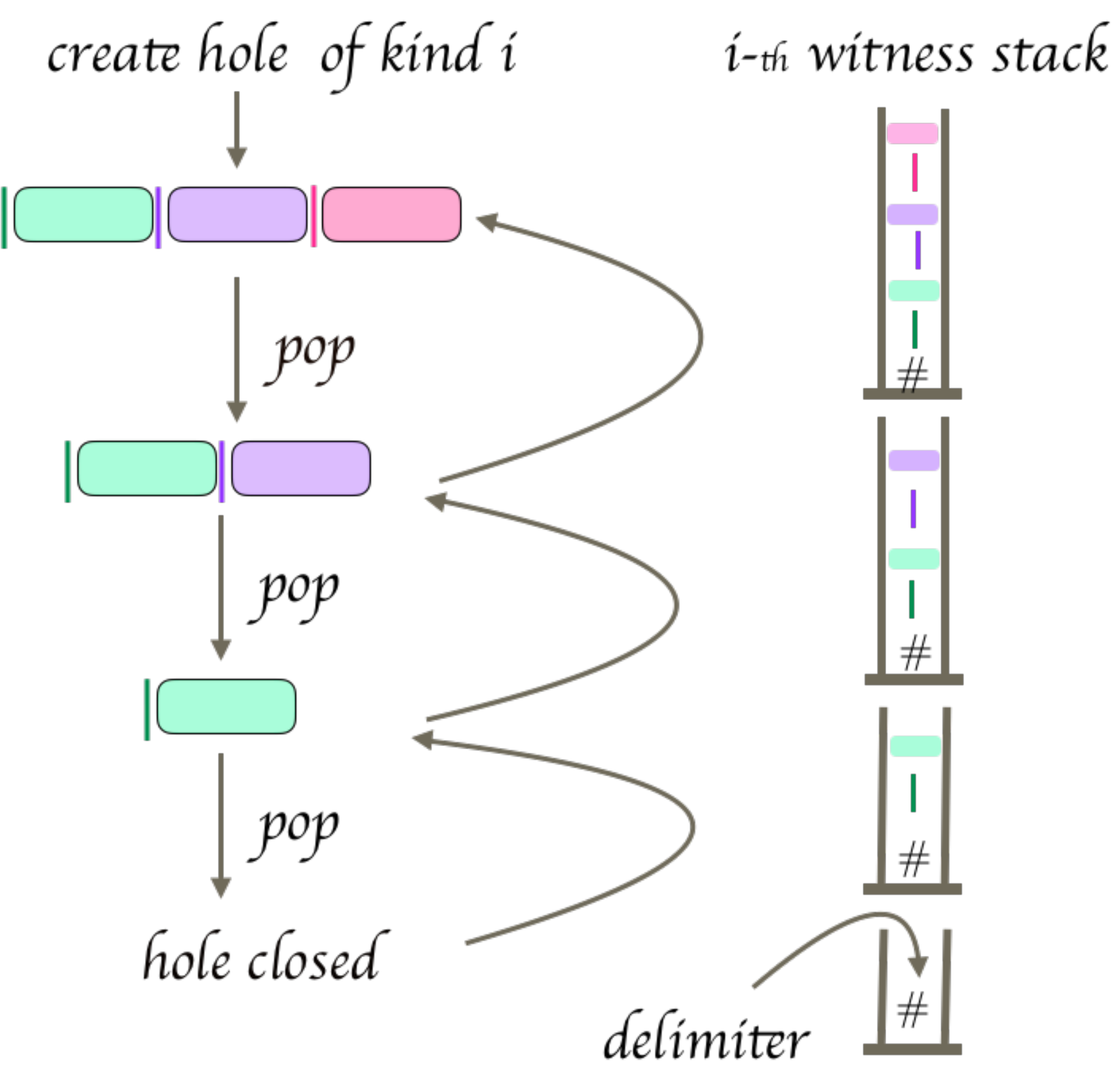}
\caption{Backtracking to spit  out the hole \protect\includegraphics[scale=0.1]{hole.pdf} 
in reverse. The transitions of the atomic hole \protect\includegraphics[scale=0.1]{ah1.pdf} are first written in the reverse order, 
followed by 
those of \protect\includegraphics[scale=0.1]{ah2.pdf} in reverse, and then
of  \protect\includegraphics[scale=0.1]{ah3.pdf} in reverse.
}\label{spit-hole}
\end{wrapfigure}

      \noindent$\bullet$
     Assume that the current pop operation is shrinking a hole of kind
     $i$.
 The list at the present node has this hole, and its parent will have a larger
     hole (see Figure~\ref{spit-hole}, where the parent node of \includegraphics[scale=0.2]{ah3.pdf}
      has \includegraphics[scale=0.2]{ah3.pdf} \includegraphics[scale=0.2]{ah2.pdf}).  
      As in the
     case above, we first identify the matching push transition, and check if it
     agrees with the push in the last atomic hole segment in the parent.  If so, we populate
     the witness stack $i$ with the rightmost atomic hole segment of the parent
     node (see Figure~\ref{spit-hole}, \includegraphics[scale=0.2]{ah2.pdf} is populated in
     the stack).  Each time we find a pop on backtracking the exploration tree,
     we find the rightmost atomic hole segment of the parent node, and keep
     pushing it on the stack, until we reach the node which is obtained as a
     result of a hole creation.  Now we have completely recovered the
     entire hole information by backtracking, and fill the witness stack with the reversed atomic hole segments which constituted this hole.
     Notice that when we finish processing a hole of kind $i$, then the
  witness   stack $i$ has the hole reversed inside it, followed by a
  barrier.  The next  hole of the same kind $i$ will be treated in the
  same manner.

 \noindent$\bullet$
  If the current node of the BFS tree is obtained by creating a hole of kind $i$ in the fix-point algorithm, then we pop the contents of witness stack $i$ till we reach a 
  barrier.  This spits out the atomic hole segments of the hole from the right
  to the left, giving us a sequence of push transitions, and the respective $ws$
  in between.  The transitions constituting the $ws$ are retrieved
  and added.  Notice that popping
  the witness stack $i$ till a barrier spits out the sequence of transitions in the correct reverse order while backtracking.

\section{Adding Time to Multi-pushdown systems}\label{sec:timed}
In this section, we briefly describe how the algorithms described in section~\ref{sec:algo} can be extended to work in the timed setting. Due to lack of space, we focus on some of the significant challenges and advances, leaving the formal details and algorithms to the supplement~\cite{supp}.
A \tmpda{} extends a \mpda{} $\mathcal{S}$ with a set $\mathcal{X}$ of clock variables. Transitions check constraints which are conjunctions/disjunctions  of constraints (called closed guards in the literature) of the form $x \leq c$ or $x \geq c$ for $c \in \mathbb{N}$ and $x$ any clock from   $\mathcal{X}$.  Symbols pushed on stacks ``age'' with time elapse; that os, they store the time elapsed since they were pushed onto the stack.  A pop is successful only when the age of the symbol lies within a certain interval. The acceptance condition is as in the case of \mpda{}. 

The first main challenge in adapting the algorithms in section~\ref{sec:algo} to the timed setting was to take care of all possible time elapses along with the operations defined in Algorithm~\ref{alg:isempty}. The usage of closed guards in \tmpda{} means that it suffices to explore all runs with integral time elapses (for a proof see e.g., Lemma 4.1 in~\cite{AGK16}). Thus configurations are pairs of states with valuations that are vectors of non-negative integers, each of which is bounded by the maximal constant in the system. Now, to check reachability we need to extend all the precomputations (transitive closure, well-nested reachability, as well as atomic and non-atomic hole segments) with the time elapse information. To do this, we use a weighted version of the Floyd-Warshall algorithm by  storing time elapses during precomputations. This allows us to use this precomputed \emph{timed} well-nested reachability information while performing the BFS tree exploration, thus ensuring that any explored state is indeed reachable by a timed run. In doing so, the most challenging part is extending the BFS tree wrt a pop.  Here, we not only have to find a split of a hole into an atomic hole-segment and a hole-segment as in Algorithm~\ref{alg:isempty}, but also need to keep track of possible partitions of time, making the 
algorithm quite challenging.  

\noindent{\textbf{Timed Witness:}}
  As in the untimed case, we generate a witness certifying non-emptiness of 
  \tmpda{}. But, producing a witness for the fix-point computation as discussed earlier requires unrolling. The fix-point computation generates a pre-computed set $\wsrt{}$  of  tuples $((s,\nu),t,(s',\nu'))$, where $s,s'$ are states   $t$ is time elapsed in the  well-nested sequence and $\nu,\nu' \in \mathbb{N}^{|\mathcal{X}|}$ are integral  valuations, i.e., integer values taken by clocks. This set of tuples does not have information about the intermediate transitions and time-elapses. To handle this, using the pre-computed information, we define a lexicographic progress measure which ensures termination of this search.  
      The main idea is as follows: the first progress measure is to check if there a time-elapse $t$ transition possible between $(s,\nu)$ and $(s',\nu')$ and if so, we print this out. If not, $\nu' \neq \nu+t$, and some set of clocks have been reset in the transition(s) from $(s, \nu)$ to $(s', \nu')$. 
  The second progress measure looks at the sequence of transitions from   $(s, \nu)$ to $(s', \nu')$, consisting of reset transitions (at most the number of clocks) that result in $\nu'$ from $\nu$. If neither the first nor the second progress measure apply, then $\nu=\nu'$, and we are left to explore the last progress measure, by exploring at most  $|\mathcal{S}|$ number of transitions from   $(s, \nu)$ to $(s', \nu')$.
 Using this progress measure, we can seamlessly extend the witness generation to the timed setting. The challenges involved therein, can be seen in the full version \cite{supp}. 

  \section{Implementation and Experiments}\label{sec:exp}
We implemented a tool \trim{} ({\bf{B}}ounded {\bf{H}}oles {\bf{I}}n {\bf{M}}PDA) in C++ based on
Algorithm~\ref{alg:isempty}, which takes an \mpda{} and a constant
$K$ as input and returns \emph{True} iff there exists a 
 $K$-hole bounded run from the start state to an accepting state of the
\mpda{}.  In case there is such an accepting run, \trim{} generates one such, with minimal number of holes.  For a given hole bound $K$, \trim{} first tries to produce a witness with 0 holes, and iteratively tries to obtain a witness by  increasing the bound 
on holes till $K$. In most cases, \trim{} found the witness before reaching the bound $K$. Whenever \trim{}'s witness had $K$ holes, it is guaranteed that there are no witnesses with a smaller number of holes. 

To evaluate the performance of \trim{},  we looked at some available benchmarks 
and modeled them as \mpda. We also added timing constraints to some examples such that they can be modeled as \tmpda{}.
Our tests were run on a GNU/Linux system with Intel\textsuperscript{\textregistered} Core\textsuperscript{TM} i7--4770K CPU @ 3.50GHz, and 16GB of RAM. Details of all examples here, as well as 
an additional example of a linux kernel bug
  can be found \cite{supp}.

\noindent{ $\bullet$ \bf Bluetooth Driver~\cite{qadeer2004kiss}}.\label{sbs:btdriver} The  Bluetooth device driver example~\cite{qadeer2004kiss},
 has an arbitrary number of  threads, working with a  shared memory.  
We model this  using a 2-stack pushdown system, where a system state
represents the current valuation of the global variables, and the stacks are used to
maintain the call-return between different functions, as well as to 
keep track of context switches between threads. 
  A known error as pointed out
in~\cite{qadeer2004kiss} is a race condition between two threads where one
thread tries to write to a global variable and the other thread tries to read from it.  \trim{} found this error, with a well-nested witness. A timed extension of this example was also  considered, where, a witness was  obtained again with hole bound 0.  

\noindent{$\bullet$  \bf Bluetooth Driver v2}~\cite{chaki2006verifying,patin2007spade}.\label{sbs:btdriver2}
A modified version of  \hyperref[sbs:btdriver]{Bluetooth driver}
is considered ~\cite{chaki2006verifying,patin2007spade}, where a
counter is maintained to count the number of threads actively using
the driver. We model this with a 
A two stack \mpda{}. With a well-nested witness, \trim{} found the error of  
 interrupted I/O, where  the stopping thread kills the driver while the
other thread is busy with I/O. 

\noindent{$\bullet$  \bf A Multi-threaded Producer Consumer Problem (OLD)}.\label{sbs:producerconsumer}
The Producer consumer problem (see e.g.,~\cite{silberschatz2018operating})
is a classic example of concurrency and synchronization. An
interesting scenario is when there are multiple producers and
consumers. Assume that two ingredients called `A' and `B' are produced in a production line
in batches, where a batch can produce arbitrarily many items, but it is
fixed for a day. Further, assume that 
 (1) two units of `A' and one unit of `B' make an item called `C'; 
(2) the production line starts by producing 
a batch of A's  and then in the rest of the day, it keeps producing B's 
in batches, one after the other. During the day, `C's are churned out 
using `A' and `B' in the proportion mentioned above  and, if we run
out of `A's, we obtain an error; there is no problem if `B' is
exhausted, since a fresh batch producing `B' is commenced. This idea
can be imagined as a real life scenario where item `A' represents an
item which is very expensive to produce but can be produced in large
amount but the item `B' can be produced frequently, but it has to be
consumed quickly, else it becomes useless.
For $m,n,k \in \mathbb{N}$, consider words of the form 
$a^m (b^{k} (c^2d)^{k})^n$ where,   
$a$ represents the production of one unit of `A', $b$ represents the production 
of one unit of `B', $c$ represents consumption of one unit of `A' and
$d$ represents consumption of one unit of `B'. `m' represents the
production capacity of `A' for the day and `$k$' represents production
capacity of `B'(per batch) for the day, `n' represents the number
batches of `B' produced in a day. Unless $m \geq 2nk$, we will obtain an error.    This is  easily modeled 
 using a 2 stack visibly multi pushdown automaton where $a,b$ are push symbols of stack 1, 2 respectively and $c,d$ are pop symbols of stack 1, 2 respectively. 
 Let $L_{m,k,n}$ be the set of words of the above form s.t. $
 2nk < m $.  It can be seen that $L_{m,k,n}$ does not have any
 well-nested word in it. The number of context switches(also, scope bound) in words of
 $L_{m,k,n}$ depends on the parameters $k$ and $n$. However, $L_{m,k,n}$ is
 2 hole-bounded : at any position of the word, 
 the open holes come from the unmatched sequences of $a$ and $b$ seen so far.  
 \trim{} checked for the non-emptiness of $L_{m,k,n}$ with a witness 
 of hole bound 2.

\noindent{$\bullet$  \bf A Multi-threaded Producer Consumer Problem}.\label{sbs:producerconsumer_par}
The producer consumer problem (see e.g.,~\cite{silberschatz2018operating}) is a classic example of concurrency and synchronization. An interesting scenario is when there are multiple producers and consumers. Assume that two ingredients called 'A' and 'B' are produced in a production line in batches (of $M$ and $N$ respectively). These parameters $M$ and $N$ are fixed for each day but may vary across days. There is another consumer machine that (1) consumes one unit of 'A' and one unit of 'B' in that order; (2)  repeats this process until all ingredients are consumed. In between if one of the ingredients runs out, then we non-deterministically produce more batches of the ingredient and then continue. To avoid wastage the factory aims to consume all ingredients produced in a day, hence the problem of interest is to check if all A's and B's produced in a day are consumed. We can model this factory using a two-stack pushdown system, one stack per product, $A, B$, where the sizes of the batches,  $M>0$ and $N>0$ respectively, are parameters. The production and consumption of the `A's and `B's are modeled using push and pop in the respective stack. For a given $M$ and $N$, the language accepted by the system is non-empty iff there is a run where all the produced `A's and `B's are consumed. The language accepted by the two-stack pushdown system is given by $L_{M,N}=((a^M+b^N)^+ (\bar{a}\bar{b})^+)^+$, where $a,b$ represent respectively,  the push on stack 1, 2 and $\bar{a}, \bar{b}$ represent the pop on stack 1, 2 and hence must happen equal number of times.

For any $M, N>0$, any accepting run of the two stack pushdown system cannot be well-nested. Further, in an accepting run, the minimum number of items produced (and hence its length) must be a multiple of $LCM(M,N)$. As the consumption of `A's and `B's happen in an order one by one i.e., in a sequence where consumption of `A' and `B' alternate, the minimum number of context changes (and the scope bound) required in an accepting run depends on $M$ and $N$ (in fact it is $O(2\times LCM(M,N))$. On the other hand, the shortest accepting run is 2-hole bounded: at any position of the word,  the open holes come from the unmatched sequences of $a$ and $b$ seen so far. Thus for any $M,N{>}0$, \trim{} was able to check for non-emptiness of $L_{M,N}$ with a witness of hole bound 2.

\noindent{ $\bullet$  \bf Critical time constraints~\cite{bhave2016perfect}}.\label{sbs:abcd} This is one of the timed examples, where we consider the language $L^{crit} = \{a^yb^zc^yd^z\mid y,z \geq 1\}$ with time constraints between occurrences of symbols.  The first $c$ must appear after 1 time-unit of the last $a$, the first $d$ must appear within 3 time-units  of the last $b$, and the last $b$ must appear within 2 time units from the start, and the last $d$ must appear at 4 time units.
$L^{crit}$ is accepted by a \tmpda{} with two timed stacks.
$L^{crit}$ has no well-nested word, is 4-context bounded, but only 2
hole-bounded.

\noindent{$\bullet$  \bf{Concurrent Insertions in Binary Search Trees}}.\label{sbs:bst}
Concurrent insertions in  binary search trees is a very important problem in database management systems. ~\cite{kung1980concurrent,chaki2006verifying} proposes an algorithm to solve this problem for concurrent implementations. However, incorrect implementation of  locks allows a thread to overwrite others. We modified the algorithm \cite{kung1980concurrent} to capture this bug, and modeled it as \mpda{}. \trim{} found the bug with a witness of hole-bound 2. 

\noindent{$\bullet$  \bf Maze Example}.\label{sbs:maze}
Finally we consider a robot navigating a maze, picking items; an extended (from single to multiple stack) version of the example from~\cite{AkshayGKS17}. 
In the untimed setting, a witness for non-emptiness was obtained with 
hole-bound 0, while in the extension with time, the witness had a hole-bound 2.

  \begin{table}[t]
    \resizebox{\textwidth}{!}{
    \centering
    
     \begin{tabular}{c| c| c| c| c| c| c|c } 
     \toprule
     Name & Locations & Transitions & Stacks &  Holes & Time
                                                        Empty (mili sec) &
                                                                     Time
                                                                          Witness
                                                                           (mili
                                                                          sec)
       & Memory(KB) \\ [0.5ex] 
       \hline
           
     \hyperref[sbs:btdriver]{Bluetooth} & 45&89 &2 &0 &149.3&0.241
 & 6934 \\
       \hline

       \hyperref[sbs:btdriver2]{Bluetooth v2} &47 &134 &2 &0 &92.2 &0.176
       & 5632\\
         \hline
       \hyperref[sbs:producerconsumer]{MultiProdCons(OLD)} & 11 & 18 & 2 & 2 &
                                                               11.1&0.1

                                                                   &
                                                                     1796
       \\

       \hline

        \hyperref[sbs:producerconsumer_par]{MultiProdCons(3,2)} & 7 & 11 & 2 & 2 &
                                                               126.529&0.281

                                                                   &
                                                                     5632
       \\
       \hline
       \hyperref[sbs:producerconsumer_par]{MultiProdCons(24,7)} & 32 & 34 & 2 & 2 &
                                                               1879.33&10.63

                                                                   &
                                                                     21836
       \\
        \hline
       \hyperref[sbs:dmtarget]{dm-target} & 22 & 27 & 2 & 2 &
                                                               26.483&0.279

                                                                   &
                                                                     6624
       \\
       \hline

        \hyperref[sbs:bst]{Binary Search Tree} &29 & 78 & 2 & 2 &
                                                               60.8&5.1
                                                                   &
                                                                     5143
       \\
         \hline
     \hyperref[sbs:abcd]{untimed-$L^{crit}$ } & 6 & 10 & 2 &2&14.9&0.7 &4692\\ 
       \hline
            \hyperref[sbs:maze]{untimed-Maze} &9 & 12 &2 &0 & 8.25&0.07 &5558\\
     \hline
    
       \hyperref[sec:2]{$L^{bh}$} (from Sec.~\ref{sec:lbh}) & 7 & 13 & 2 &2  & 22.2&0.6 &4404 \\

         \hline

       
     \bottomrule
     \end{tabular}
    }
     \caption{Experimental results: Time Empty and Time
       Witness column represents no. of milliseconds needed for emptiness checking
       and to generate witness respectively. }\label{tbl:experiment}
   \end{table}
    
\begin{table}[t]
    \centering
   \resizebox{\textwidth}{!}{
     \begin{tabular}{c| c| c| c| c| c| c| c| c| c|c} 
     \toprule
     Name & Locations & Transitions & Stacks & Clocks &
                                                        c\textsubscript{max}
       &Aged(Y/N)& Holes & Time Empty(mili sec)& Time Witness (mili sec)  & Memory(KB) \\ [0.5ex] 
       \hline
           
     \hyperref[sbs:btdriver]{Bluetooth} & 45 &89&2&0&2&Y&0& 152.8 & 0.119
&5568\\
     \hline

     \hyperref[sbs:abcd]{$L^{crit}$} & 6 & 10 & 2 &2&8 & Y&2 & 9965.2 &3.7 & 203396\\ 
       \hline
            \hyperref[sbs:maze]{Maze} &9 & 12 &2 &2 &5&Y &2 &349.3&0.31&11604\\

\hline

     \bottomrule
     \end{tabular}
    }

     \caption{Experimental results of timed examples. The column \cmax{} is defined as the
       maximum constant in the automaton, and Aged denotes if the
       stack is timed or not}\label{tbl:experimentTimed}
    \end{table}

\noindent{\bf{Results and Discussion}}. 
    The performance of \trim{}  is presented in Table~\ref{tbl:experiment} for untimed examples and in Table~\ref{tbl:experimentTimed} for timed examples. 
 
    Apart from the results in the tables, to check the robustness 
   of \trim{} wrt parameters like the number of locations, transitions, stacks, holes and  clocks (for \tmpda{}), we looked at examples with an empty language, by making accepting states non-accepting in the examples
    considered so far.   This forces \trim{} to explore all possible paths 
   in the BFS tree, generating the lists at all nodes.    The scalability 
   of \trim{} wrt all these parameters are in~\cite{supp}. 
    
\noindent{\bf \trim{} Vs. State of the art}. What makes \trim{} 
stand apart wrt the existing state of the art tools is that 
(i) none of the existing tools handle underapproximations  captured by bounded holes, (ii) none of the existing tools work with multiple stacks in the timed setting (even closed guards!). The state of the art research in underapproximations wrt untimed multistack pushdown systems has produced some robust tools like GetaFix which handles multi-threaded programs with bounded context switches.
    While we have adapted some of the examples from GetaFix, the latest available version of GetaFix has some issues in handling those examples\footnote{we did get in touch
      with one of the authors, who confirmed this.}. Likewise, SPADE, MAGIC and the counter implementation \cite{HagueL12}  are currently not maintained, resulting in a non-comparison of \trim{} and these tools. Most examples handled by \trim{} correspond to non-context bounded, or non-scope bounded, or timed languages which are beyond GetaFix : the 2-hole bounded witness found by \trim{} for the language   $L_{9, 5}$ for the multi producer consumer case cannot be found by GetaFix/MAGIC/SPADE with less than ~90 context switches. In the timed setting, the \hyperref[sbs:maze]{Maze} example which has a 2 hole-bounded witness where the robot visits certain locations equal number of times is beyond \cite{AkshayGKS17}, which can handle only single stack.
      
  \section{Future Work}
     As immediate future work, we are working on \trim{} {\bf{v2}} to be symbolic, inspired from GetaFix. The current avatar of \trim{} showcases the efficiency of fix-point techniques extended to larger bounded underapproximations; indeed going symbolic will make \trim{} much more robust and scalable. This version will  also include a parser to handle boolean programs, allowing us to evaluate larger repositories of available benchmarks. 

\noindent{\emph{Acknowledgements}.} We would like to thank Gennaro
Parlato for the discussions on GetaFix and for providing us benchmarks
and anonymous reviewers for more pointers. 

 \bibliographystyle{splncs04}
 \bibliography{biblio}

\begin{thebibliography}{10}
\providecommand{\url}[1]{\texttt{#1}}
\providecommand{\urlprefix}{URL }
\providecommand{\doi}[1]{https://doi.org/#1}

\bibitem{AbdullaAS12}
Abdulla, P.A., Atig, M.F., Stenman, J.: Dense-timed pushdown automata. In:
  Proceedings of the 27th Annual {IEEE} Symposium on Logic in Computer Science,
  {LICS} 2012, Dubrovnik, Croatia, June 25-28, 2012. p. 35–44 (2012),
  \url{https://doi.org/10.1109/LICS.2012.15}

\bibitem{DBLP:conf/lata/AbdullaAS12}
Abdulla, P.A., Atig, M.F., Stenman, J.: The minimal cost reachability problem
  in priced timed pushdown systems. In: Language and Automata Theory and
  Applications - 6th International Conference, {LATA} 2012, {A} Coru{\~{n}}a,
  Spain, March 5-9, 2012. Proceedings. pp. 58--69 (2012),
  \url{https://doi.org/10.1007/978-3-642-28332-1\_6}

\bibitem{DBLP:journals/corr/abs-1903-03773}
Akshay, S., Gastin, P., Jug{\'{e}}, V., Krishna, S.N.: Timed systems through
  the lens of logic. In: 34th Annual {ACM/IEEE} Symposium on Logic in Computer
  Science, {LICS} 2019, Vancouver, BC, Canada, June 24-27, 2019. pp. 1--13
  (2019)

\bibitem{supp}
Akshay, S., Gastin, P., Krishna, S., Roychowdhury, S.: Revisiting
  underapproximate reachability for multipushdown systems (2020),
  \url{https://arxiv.org/abs/2002.05950}

\bibitem{AGK16}
Akshay, S., Gastin, P., Krishna, S.N.: {Analyzing Timed Systems Using Tree
  Automata}. {Logical Methods in Computer Science}  \textbf{{Volume 14, Issue
  2}} (May 2018), \url{https://lmcs.episciences.org/4489}

\bibitem{AkshayGKS17}
Akshay, S., Gastin, P., Krishna, S.N., Sarkar, I.: Towards an efficient tree
  automata based technique for timed systems. In: 28th International Conference
  on Concurrency Theory, {CONCUR} 2017, September 5-8, 2017, Berlin, Germany.
  pp. 39:1--39:15 (2017), \url{https://doi.org/10.4230/LIPIcs.CONCUR.2017.39}

\bibitem{alur2004visibly}
Alur, R., Madhusudan, P.: Visibly pushdown languages. In: Proceedings of the
  thirty-sixth annual ACM symposium on Theory of computing. pp. 202--211. ACM
  (2004)

\bibitem{atig2012model}
Atig, M.F.: {Model-Checking of Ordered Multi-Pushdown Automata}. {Logical
  Methods in Computer Science}  \textbf{{Volume 8, Issue 3}} (Sep 2012).
  \doi{10.2168/LMCS-8(3:20)2012}

\bibitem{bhave2016perfect}
Bhave, D., Dave, V., Krishna, S.N., Phawade, R., Trivedi, A.: A perfect class
  of context-sensitive timed languages. In: International Conference on
  Developments in Language Theory. pp. 38--50. Springer, Berlin, Heidelberg
  (2016)

\bibitem{bouajjani1994automatic}
Bouajjani, A., Echahed, R., Robbana, R.: On the automatic verification of
  systems with continuous variables and unbounded discrete data structures. In:
  International Hybrid Systems Workshop. pp. 64--85. Springer (1994)

\bibitem{chaki2006verifying}
Chaki, S., Clarke, E., Kidd, N., Reps, T., Touili, T.: Verifying concurrent
  message-passing {C} programs with recursive calls. In: International
  Conference on Tools and Algorithms for the Construction and Analysis of
  Systems. p. 334–349. Springer (2006)

\bibitem{cormen2009introduction}
Cormen, T.H., Leiserson, C.E., Rivest, R.L., Stein, C.: Introduction to
  algorithms. MIT press (2009)

\bibitem{cyriac-thesis}
Cyriac, A.: Verification of communicating recursive programs via split-width.
  (V{\'{e}}rification de programmes r{\'{e}}cursifs et communicants via
  split-width). Ph.D. thesis, {\'{E}}cole normale sup{\'{e}}rieure de Cachan,
  France (2014), \url{https://tel.archives-ouvertes.fr/tel-01015561}

\bibitem{cyriac2012mso}
Cyriac, A., Gastin, P., Kumar, K.N.: {MSO} decidability of multi-pushdown
  systems via split-width. In: International Conference on Concurrency Theory.
  pp. 547--561. Springer, Berlin, Heidelberg (2012)

\bibitem{dang2000binary}
Dang, Z., Ibarra, O.H., Bultan, T., Kemmerer, R.A., Su, J.: Binary reachability
  analysis of discrete pushdown timed automata. In: International Conference on
  Computer Aided Verification. p. 69–84. Springer (2000)

\bibitem{HagueL12}
Hague, M., Lin, A.W.: Synchronisation- and reversal-bounded analysis of
  multithreaded programs with counters. In: Computer Aided Verification - 24th
  International Conference, {CAV} 2012, Berkeley, CA, USA, July 7-13, 2012
  Proceedings. p. 260–276 (2012),
  \url{https://doi.org/10.1007/978-3-642-31424-7\_22}

\bibitem{kung1980concurrent}
Kung, H., Lehman, P.L.: Concurrent manipulation of binary search trees. ACM
  Transactions on Database Systems (TODS)  \textbf{5}(3),  354--382 (1980)

\bibitem{la2007robust}
La~Torre, S., Madhusudan, P., Parlato, G.: A robust class of context-sensitive
  languages. In: Logic in Computer Science, 2007. LICS 2007. 22nd Annual IEEE
  Symposium on. pp. 161--170. IEEE (2007)

\bibitem{la2010language}
La~Torre, S., Madhusudan, P., Parlato, G.: The language theory of bounded
  context-switching. In: Latin American Symposium on Theoretical Informatics.
  pp. 96--107. Springer (2010)

\bibitem{la2011reachability}
{La Torre}, S., Napoli, M.: Reachability of multistack pushdown systems with
  scope-bounded matching relations. In: International Conference on Concurrency
  Theory. p. 203–218. Springer (2011)

\bibitem{la2009analyzing}
La~Torre, S., Parthasarathy, M., Parlato, G.: Analyzing recursive programs
  using a fixed-point calculus. ACM Sigplan Notices  \textbf{44}(6),  211--222
  (2009)

\bibitem{madhusudan2011tree}
Madhusudan, P., Parlato, G.: The tree width of auxiliary storage. In: ACM
  SIGPLAN Notices. vol.~46, pp. 283--294. ACM (2011)

\bibitem{patin2007spade}
Patin, G., Sighireanu, M., Touili, T.: Spade: Verification of multithreaded
  dynamic and recursive programs. In: International Conference on Computer
  Aided Verification. pp. 254--257. Springer (2007)

\bibitem{Qadeer08}
Qadeer, S.: The case for context-bounded verification of concurrent programs.
  In: Model Checking Software, 15th International {SPIN} Workshop, Los Angeles,
  CA, USA, August 10-12, 2008, Proceedings. pp.~3--6 (2008),
  \url{https://doi.org/10.1007/978-3-540-85114-1\_2}

\bibitem{qadeer2004kiss}
Qadeer, S., Wu, D.: Kiss: keep it simple and sequential. ACM sigplan notices
  \textbf{39}(6),  14--24 (2004)

\bibitem{silberschatz2018operating}
Silberschatz, A., Gagne, G., Galvin, P.B.: Operating system concepts. Wiley
  (2018)

\bibitem{torre2016scope}
Torre, S.L., Napoli, M., Parlato, G.: Scope-bounded pushdown languages.
  International Journal of Foundations of Computer Science  \textbf{27}(02),
  215--233 (2016)

\bibitem{la2012scope}
Torre, S.L., Parlato, G.: {Scope-bounded Multistack Pushdown Systems:
  Fixed-Point, Sequentialization, and Tree-Width}  \textbf{18},  173--184
  (2012). \doi{10.4230/LIPIcs.FSTTCS.2012.173}

\end{thebibliography}

\vfill

{\small\medskip\noindent{\bf Open Access} This chapter is licensed under the terms of the Creative Commons\break Attribution 4.0 International License (\url{http://creativecommons.org/licenses/by/4.0/}), which permits use, sharing, adaptation, distribution and reproduction in any medium or format, as long as you give appropriate credit to the original author(s) and the source, provide a link to the Creative Commons license and indicate if changes were made.}

{\small \spaceskip .28em plus .1em minus .1em The images or other third party material in this chapter are included in the chapter's Creative Commons license, unless indicated otherwise in a credit line to the material.~If material is not included in the chapter's Creative Commons license and your intended\break use is not permitted by statutory regulation or exceeds the permitted use, you will need to obtain permission directly from the copyright holder.}

\medskip\noindent\includegraphics{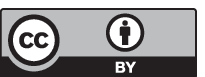}

 \newpage
\appendix
\centerline{\LARGE \textbf{Appendix}}
\section{Details for Section~\ref{sec:2}}\label{sec:proofs}
\subsection{Proposition~\ref{prop:bh}}
\label{app:bh}

We use the notion of  Tree Terms (TTs)~\cite{AkshayGKS17} to compute the tree-width of a
given graph. Where a minimal finite set of colors are used to color the
vertices and then partition the graph in two partitions such that the
cut vertices are colored. The aim of this approach is to decompose a
graph to ``atomic'' tree terms. We cannot use a color more than once
in a partition of graph, unless we \emph{forget} it. This can be
modeled as a game between two player, Adam and Eve. Where, Eve's goal
is to reach atomic terms with minimum finite number of colors, and
Adam's goal is to make Eve's life difficult by choosing a more
demanding partition.

To prove that the a model has bounded tree-width we will try to
capture the runs of the model in terms of graphs (Multiply nested
words~\cite{la2012scope}) and play the game mentioned above.
\subsection*{Tree Width of Hole Bounded Multistack Pushdown Automaton}
 We will capture the behaviour (any run $\rho$) of $K-$hole bounded
 multistack pushdown systems as a graph $G$ where, every node
 represents a transition $t \in \Delta$ and the edge between the nodes
 can be of the following types.

 \begin{itemize}
     \item Linear order $\preccurlyeq$ between the transitions gives the order in which the transitions are fired in the system. We will use
    $\preccurlyeq^+$ to represent transitive closure of $\preccurlyeq$.
       
     \item The other type of edges represent the push pop relation
       between two transitions. Which means, if a transition $t_1$ have
       a push operation in the stack $i$  and transition $t_2$
       has the corresponding pop of the stack $i$, matching the push
       on stack $i$ of transition $t_1$, then we
       have an edge $t_1 \curvearrowright^s t_2$  between them, which
       will represent the push-pop relation.
     \end{itemize}

     To prove that the tree width of the class of graph $G$ is bounded,
     we will use coloring game~\cite{AkshayGKS17} and show that we need bounded number of
     colors to split any graph $g \in G$ to atomic tree terms.

     Eve will start from the right most node of the graph by coloring
     it. The last node of the graph can be any one of the following,
     \begin{itemize}
     \item End point of a well-nested sequence
       \item Pop transition $t_{pp}$ of stack $i$ , such that, the push
         $t_{ps}$ is coming from nearest \emph{hole} of stack $i$.
       \end{itemize}
       \begin{enumerate}
\item If the endpoint colored is the end point of a well-nested sequence
then Eve can remove the well-nested sequence by adding another color to the first point of the well-nested sequence. 

If we look at the well-nested part, using just one more color we can
split it to atomic tree terms~\cite{AkshayGKS17}. 

But the other part still remains a graph of class $G$ so Adam will
choose this partition for Eve to  continue
the coloring game.

\item If the last point of the graph $G$ is a pop point $t_{pp}$ as discussed
  earlier, then the corresponding push $t_{ps}$ can come from a open
  \emph{hole} or a closed \emph{hole}.
  \begin{itemize}
  \item If it is coming from a closed
  \emph{hole} then,  Eve will add color to the corresponding push $t_{ps}$ along
  with the transition $t_{q}$ such that, $t_{ps} \preccurlyeq^+ t_{q}$
  and $t_{ps} \-- t_{q}$ is a well-nested sequence, which forms a atomic
  hole segment $(\uparrow ws)$ where, $\uparrow$ represents the push pop edge $t_{ps}
  \curvearrowright^s t_{pp}$ and $ws$ represents the well-nested sequence
  $t_{ps} \-- t_{q}$. This operation requires  $2$ colors. Please note that, the right end of the
  \emph{hole} which got colored after removal of $t_{ps}\--t_{q}$ is another
  push of the \emph{hole}, because \emph{hole} are defined as a sequence
  $(\uparrow ws)^+$. 
  \item If the push is coming from open \emph{hole} then the push
    transition $t_{ps}$ is already colored from previous operation as
    discussed above, hence Eve will add another color  $t_{q'}$ to mark
    the next well-nested sequence $t_{ps} \-- t_{q'}(ws')$ in the right of $t_{ps}$. Now, Eve can remove the stack edge
    $t_{pp} \curvearrowright t_{ps}$ along with the well-nested
    sequence $ws'$. This operation  widens the \emph{hole}. 
  \end{itemize}
  In both the above operations, the graph has two components
  one with a stack edge $t_{pp} \curvearrowright t_{ps}$ and another one
  with a well-nested sequence. Which require at most $1$ color extra to split into
  atomic tree terms. On the remaining part Eve
  will continue playing the game from right most point.
\end{enumerate}
Here, we claim that at any point of time of the coloring game, there will
be  $2K + 2 $ active
colors for $K \geq 1$ and $ K \in \mathbb{N}$. Every step of the game
splits the graph in two part, and one part always can be split into
atomic tree terms with at most $3$ colors. The remaining part will
require at most $2$
colors for every open \emph{hole} in the left of the right most point
of the graph. As the number of open \emph{hole} is bounded by $K$, so we can not have more
than $K$ open \emph{holes} in the left of any point. So, $2K $ colors to mark the
\emph{holes}. So, total number of colors needed to break any such
graph to atomic tree terms is $2K+4$.

\subsection{Proposition~\ref{prop:one}}
We describe the missing details in proposition \ref{prop:one}. 
\begin{enumerate}
\item \textcolor{red}{$L^{bh}$ cannot be accepted by any K-bounded phase
  \mpda{}}.

Recall that, $L^{bh} =  \{a^nb^n(a^{q_i}c^{q_i+1}b^{q'_j}d^{q'_j+1})^n| n ,q_i,q'_j \in \mathbb{N}~
  \forall i,j \in [n]\}$, and $a,b$ represents push in stack 1,2 respectively
  and $c,d$ represents the corresponding pops from stack 1,2.
For all $m$, consider the word $w_1 =
a^mb^m(a^lc^{l+1}b^{l'}d^{l'+1})^m$. Here, clearly the number of phases is
$K=2m$. Now if $w_1$  is accepted by some phase bounded \mpda{} M
then it must have $2m$ as the bound on the phases which will not be
sufficient to accept $w_2(a^{m+1}b^{m+1}(a^lc^{l+1}b^{l'}d^{l'+1})^{m+1})\in
L^{bh}$. 
\item \textcolor{red}{$L'=\{(ab)^nc^nd^n \mid n \in \mathbb{N}\}$ cannot be accepted by any
  K-hole bounded \mpda{}}. 
  
For any $m \in \mathbb{N}$ assume a word $w_1 =
(ab)^mc^md^m \in L'$, where $a,b$ represents push in stack 1,2
respectively and $c,d$ represents the corresponding pops from stack 1,2. Clearly, this can be accepted by a bounded \emph{hole} multistack
pushdown automata $M$ with bound = $2m$. Now if $L'$ is accepted by
$M$ then it must also accept, $w_2 = (ab)^{m+1}c^{m+1}d^{m+1}$. However, the number of \emph{holes} required to accept $w_2$ 
is  $2(m+1) > 2m$.  This contradicts 
the assumption that  $M$ accepts the language. 
\end{enumerate}

\section{Details for Section~\ref{sec:algo}}\label{app:algo}
In this section, we provide all the subroutines mentioned in Section~\ref{sec:algo} and used in Algorithm~\ref{alg:isempty} for \mpda{}. 
We start by presenting Algorithm~\ref{alg:pushpopc} which computes the well-nested reachability relation, i.e., it computes the set $\wnr$ of all pairs of states $(s,s')$ such that there is a well-nested sequence from $s$ to $s'$.
  \begin{algorithm}[]\label{alg:pushpopc}
\SetKwProg{Return}{return}{\string;}{}
\SetKwProg{Fn}{Function}{\string:}{}
\SetKwFunction{pushpopclosure}{WellNestedReach}
\SetKwFunction{closure}{ConcatClosure}
 \SetKwComment{Comment}{\textbackslash\textbackslash}{}
\SetKwFunction{Transitive}{TransitiveClosure}
 \Fn{\pushpopclosure{$M=(\mathcal{S},\Delta, s_0,\mathcal{S}_f,$ $n,\Sigma,\Gamma)$}}{
 \KwResult{ \wsr$ := \{(s,s')| s' $ is
   reachable from $s$ via a well-nested sequence $\}$}
     $\mathcal{R}_c := \{(s,s) | s \in
     \mathcal{S}\}$\;
       \ForAll{$(s_1,\text{\op{}},a,s_2) \in \Delta $ with $\text{\op{}}$ =
         \nop}{
           $\mathcal{R}_c := \mathcal{R}_c\cup \{(s_1,s_2) \}$;~\Comment*[h]{\textcolor{red}{Transitions with
         \nop{} operation}}\\
         }

    $\mathcal{R}_c$ :=
    \Transitive{$\mathcal{R}_c$};~\Comment*[h]{\textcolor{red}{Using
      Floyd-Warshall Algorithm}}\\
         
 \While{True}{
   \wsr $ :=  \mathcal{R}_c$\;
     \ForAll{$(s,\downarrow_{i}(\alpha),a,s_1)\in \Delta$ }{
         \ForAll{$ (s_1,s_2) \in $\wsr}{
             \ForAll{$(s_2,\uparrow_i(\alpha),a,s')\in \Delta$ }{
              $\mathcal{R}_c :=\mathcal{R}_c  \cup \{(s,s')\}$;~\Comment*[h]{\textcolor{red}{Wrap well-nested sequence
                with matching push-pop}}\\
         }
         }
     }
     $\mathcal{R}_c$:= \Transitive{$\mathcal{R}_c$}\;
     \If{ $\mathcal{R}_c \setminus $\wsr{} $ = \emptyset$}{
         break;~\Comment*[h]{\textcolor{red}{Break when no new
             well-nested sequence added}}\\
     }
 }
 }
 \Return{\wsr{}}{}
 \caption{Well Nested Reachability}
  \end{algorithm}
  The proof of correctness of this algorithm (and thus
  Lemma~\ref{lemm:wellnested}) is easy to see. First, line 5 the set
  $\mathcal{R}_c$ contains all pairs $(s,')$ such that $s'$ is
  reachable from $s$ in the MPDA without using the stack. Then for
  every push transition from a state $s$ we check in lines 8-11
  whether there is an (already computed) well-nested sequence that can
  reach a state $s'$ with a corresponding pop transition and if so we
  add $(s,s')$. We take the transitive closure and repeat this
  process, hence guaranteeing that at fixed point we will have all
  well-nested pairs, i.e., $\wnr$.
  
\begin{algorithm}[]\label{alg:addhole}
\SetKwProg{Return}{return}{\string;}{}
\SetKwProg{Fn}{Function}{\string:}{}
\SetKwFunction{NumberOfHoles}{NumberOfHoles}
\SetKwFunction{pushclosure}{PushClosure$_i$}
 \SetKwComment{Comment}{\textbackslash\textbackslash}{}
\SetKwFunction{pushclosureC}{\ensuremath{PushClosure^*}}
\SetKwFunction{addH}{AddHole$_i$}
\Fn{\addH{$\mu$, $HS_i$}}{
\KwResult{$Set$, a set of lists.}

$Set := \emptyset$\;
\ForAll{$(i,s,s')\in HS_i$ with $s=\last(\mu)$}{
  $\mu' := copy(\mu)$;~\Comment*[h]{\textcolor{red}{Create a copy of
      the list $\mu$}}\\
   $ \trunc(\mu')$;~\Comment*[h]{\textcolor{red}{ $\trunc(\mu)$ is defined as $\remove(\last(\mu))$)}}\\
   $\mu'.\append[(i,s, s'), s']$;~\Comment*[h]{\textcolor{red}{Append
       to the list $\mu'$}}\\
   $\mu'.\NumberOfHoles := \mu.\NumberOfHoles + 1$\;
   $Set := Set \cup \{\mu'\}$\;
}
}
\Return{$Set$}{}
\caption{ AddHole}
\end{algorithm}

\begin{algorithm}[h!]\label{alg:extendwithPop}
\SetKwProg{Return}{return}{\string;}{}
\SetKwProg{Fn}{Function}{\string:}{}
\SetKwFunction{pushcomplete}{PushComplete$_i$}
\SetKwFunction{pushclosure}{PushClosure$_i$}
\SetKwFunction{addpop}{AddPop$_i$}
 \SetKwComment{Comment}{\textbackslash\textbackslash}{}
\Fn{\addpop{$\mu,M=(\mathcal{S},\Delta, s_0,\mathcal{S}_f,$
     $n,\Sigma,\Gamma)$, $AHS_i$, $HS_i$,\wsr}}{
\KwResult{$Set$, a set of lists}
$Set := \emptyset$\;
$\textcolor{red}{(i,s_1,s_3)} :=
lastHole_i(\mu)$;~\Comment*[h]{\textcolor{red}{Get the last open hole
    of stack $i$}}\\
\ForAll{$ (i,s_1,s_2) \in HS_i$, 
    $(s_2,\alpha,s_3)\in AHS_i$,
    $(s,{\uparrow}_{i}(\alpha),s') \in \Delta$, $s = \last(\mu)$ and $(s',s'') \in $ \wsr}{
    $\mu' := copy(\mu)$\;
  $ \trunc(\mu')$\;
  $\mu'.\append(s'')$\;
  \If{$(s_1=s_2)$}{
    $\mu'' := copy(\mu)$\;
     $ \trunc(\mu'')$\;
     $\mu''.\append(s'')$\;
     $\mu''.\mathsf{remove}(\textcolor{red}{(i,s_1,s_3)})$;~\Comment*[h]{\textcolor{red}{Remove
       the hole $(i,s_1,s_2)$ from the list $\mu''$ }}\\
     $\mu''.\NumberOfHoles := \mu.\NumberOfHoles$-1\;
     $Set := Set \cup \{\mu''\}$\;
  }
  $\mu'.\mathsf{replace}(\textcolor{red}{(i, s_1,s_3)},$ by
        \mbox{\qquad}$\textcolor{blue}{(i,s_1,
          s_2)})$;~\Comment*[h]{\textcolor{red}{Replace bigger hole
            $(i,s_1,s_3)$ by new smaller hole $(i,s_1,s_2)$}}\\
  $Set := Set \cup \{\mu'\}$\;

}
}
\Return{$Set$}{}
\caption{Extend with a pop}
\end{algorithm}
\paragraph{Details of Algorithm~\ref{alg:addhole}} For a given list
$\mu$ Algorithm~\ref{alg:addhole} tries to extend the list $\mu$ by adding
a hole of a stack $i$. This is achieved by checking the last state $s_{last}$
the list $\mu$ and finding all possible hole  in $HS_i$ that start
with $s_{last}$ and appending the hole followed by a suitable
well-nested sequence to $\mu$.
\paragraph{Details of Algorithm~\ref{alg:extendwithPop}}
For a given list $\mu$ this algorithm tries to extend $\mu$ with a pop
operation. The algorithm starts with extracting the last hole($H_i$) of stack
$i$. Due to the well-nested property, the pop (which is not
part of a well-nested sequence) must be matched with the first pending push in the last hole of
stack $i$ in $\mu$.
Then the algorithm checks for all atomic hole-segments $AHS_i$
and hole-segments $HS_i$ s
of the stack $i$, such that, the hole $H_i$ can be partitioned in
$HS_i$ and $AHS_i$. Then the push in $AHS_i$ is matched with the
matched pop operation and the hole is now shrunk into $HS_i$. So, the
algorithm replaces $H_i$ with $HS_i$. If the $H_i$ is same as some
$AHS_i$ then, the hole can be closed and hence it removes the hole
from the list. In this case it also reduces the count of the number of
holes in the list. Note that without the pre-computation of $AHS_i$
and $HS_i$ this part of the algorithm is fairly difficult. Using the
pre-computation allow us to use simple table look ups when the states are
known, this takes only constant time.

\section{Details for Section~\ref{sec:witness}}\label{app:witness}
The algorithm for witness generation, as discussed in the  main part 
of the paper, does a backtracking on the BFS tree. When we encounter a 
node in the BFS tree extending the list with a pop, creating a hole, we use the last state in the list, the transition information from the node, and the witness stack 
for backtracking.  During the backtracking we also need to know the
sequence of transitions responsible for the well-nested sequences,
which can be generated using the Algorithm~\ref{alg:uniquepath}. The
backtracking Algorithm~\ref{alg:holeuniquepath} is discussed in the
following example.

\begin{algorithm}[!h]\label{alg:uniquepath}
 
\SetKwProg{Return}{return}{\string;}{}
\SetKwProg{Fn}{Function}{\string:}{}
\SetKwFunction{Witnss}{Witness}
\SetKwFunction{useless}{Witness}
\SetKwFunction{timeLapse}{TimeLapse}
\SetKwFunction{useful}{UseFulPath}
\SetKwFunction{firable}{Firable}
\SetKwFunction{usefulTr}{UsefulTransition}

\Fn{\useless{$s_1,s_2,M=(\mathcal{S},\Delta, s_0,\mathcal{S}_f,$ $n,\Sigma,\Gamma)$,\wsr{}}}{
\KwResult{ A sequence of transitions for a run resulting the
  well-nested sequence \wsr{}}
\If{$s_1 == s_2$}{
  \Return{$\epsilon$}{}
}
\If{$\exists  t= (s_1,\nop{},a,s_2)\in \Delta$}{
  \Return{$t$}{}
  }
  \ForAll{$s',s'' \in \mathcal{S}$}{
  
   \If{     $((s_1 \neq s') \vee (s'' \neq s_2)) \wedge (s',s'')\in$ \wsr
     $\wedge \exists t=(s_1,\downarrow_i(\alpha),a,s') \in \Delta \wedge$
     $\exists t_2 =(s'',\uparrow_i(\alpha),a',s_2) \in \Delta$ }{
     path=\useless{$s',s'',M$,\wsr}\;
     \Return{$t.path.t_2$}{}
     }
   }
   \ForAll{$s \in \mathcal{S}$}{
     \If{$(s \neq s_1 \vee s \neq s_2)\wedge (s,s_1) \in \wsr \wedge
       (s,s_2) \in \wsr $}{
       path1=\useless{$s_1,s,M$,\wsr}\;
       path2 = \useless{$s,s_2,M$,\wsr}\;
       \Return{path1.path2}{}
     
     }

}
}

\caption{Well-nested witness generation for MPDA}

\end{algorithm}

\begin{algorithm}[!h]\label{alg:holeuniquepath}
 
\SetKwProg{Return}{return}{\string;}{}
\SetKwProg{Fn}{Function}{\string:}{}
\SetKwFunction{Witnss}{Witness}
\SetKwFunction{useless}{Witness}
\SetKwFunction{timeLapse}{TimeLapse}
\SetKwFunction{useful}{UseFulPath}
\SetKwFunction{firable}{Firable}
\SetKwFunction{usefulTr}{UsefulTransition}
\SetKwFunction{HWitnss}{HoleWitness}

 \SetKwComment{Comment}{\textbackslash\textbackslash}{}
\Fn{\HWitnss{$\mu,M=(\mathcal{S},\Delta, s_0,\mathcal{S}_f,$ $n,\Sigma,\Gamma)$,\wsr,$AHS_i$,$HS_i$}}{
\KwResult{ A sequence of transitions for an accepting run}
\textbf{global} WitnessStacks = \{$St_i\mid i \in [n]$\};~\Comment*[h]{\textcolor{red}{Witness
  stacks for every stack i}}\\
$\mu_p = Parent(\mu)$;~\Comment*[h]{\textcolor{red}{Parent function
    returns the parent node of $\mu$ in the BFS exploration tree}}\\
$op_{\mu} = ParentOp(\mu)$;~\Comment*[h]{\textcolor{red}{ParentOp function
    returns the operation that extends  $Parent(\mu)$ to $\mu$ in the BFS exploration tree}}\\
\If{$op_{\mu} == ExtendByPop_i(\uparrow_i{\alpha}.wr_{pop}) \wedge wr_{pop} \in \wsr{}$}{
  $(i,s_1,s_2) = lastHole_i(\mu_p)$\;

  \If{$(s_i,\alpha,s_2) \in AHS_i \wedge (s_1,\alpha,s_2) = \downarrow_i(\alpha).wr_{push} \wedge wr_{push} \in \wsr$}{
    $push(St_i,\#)$\;
    $list = \useless(wr_{push})$\;
   $\forall t \in list, push(St_i,t)$\;
   $push(St_i,\downarrow_i(\alpha))$\;
   $list_{pop} = \useless(wr_{pop})$\;
   \Return{\HWitnss{$\mu_{p}$}.$\uparrow_{i}(\alpha).list_{pop}$}{}
    }
  \ElseIf{$(s_i,\alpha,s_2) \notin AHS_i \wedge (i,s_i,s_2) =  (s_i,\alpha,s_3).
    (i,s_3,s_2)\wedge (s_1,\alpha,s_3) \in AHS_i \wedge (i,s_3,s_2) \in HS_i \wedge (s_1,\alpha,s_3) = \downarrow_i(\alpha).wr_{push} \wedge wr_{push} \in \wsr$ }{
     $list = \useless(wr_{push})$\;
   $\forall t \in list, push(St_i,t)$\;
   $push(St_i,\downarrow_i(\alpha))$\;
   $list_{pop} = \useless(wr_{pop})$\;
   \Return{\HWitnss{$\mu_{p}$}.$\uparrow_{i}(\alpha).list_{pop}$}{}
  }
}
\If{$op_{\mu} == ExtendByHole_i$}{
  $list = \epsilon$\;
  \While{ $pop(St_i) \neq \#$}{
$    list = list.pop(St_i)$\;
    }
  \Return{\HWitnss{$\mu_{p}$}.$list$}{}
}

}

\caption{Non-well-nested witness generation for MPDA}

\end{algorithm}

\subsection*{An Illustrating Example for Witness Generation}\label{app:witnesseg}

\begin{figure}[h]
\scalebox{.8}{
\begin{tikzpicture}[->,thick]
\tikzset
  {dest/.style={circle,draw,minimum width=0.005mm,inner sep=0mm}}
 \node[state, draw=white,initial,initial text={}] (s) at (-1.8,0) {$s_0$};
 \draw (-1,-0.3) rectangle (1,0.5);
 \draw[->] (-1.3,0) -- (-1.05,0);
 \node[dest] (s0) at (-.9,0) {};
  \node[dest] (s1) at (-.7,0) {};
 \node[dest] (s2) at (-0.5,0) {};
  \node[dest] (s3) at (-0.1,0) {};
 \node[dest] (s6) at (0.5,0) {};
 \node[dest] (s7) at (.1,0) {};
  \node[dest] (s4) at (0.7,0) {};
 \node[dest] (s5) at (0.9,0) {};
 \node[dest] (s5) at (0.9,0) {};
\put(0.1,-16.9){$ws_1$};
 \path(s1) edge[draw=red,bend left=50] node[above] {} node{}(s4);
\path(s2) edge[draw=blue,bend left=60] node[above] {} node{}(s3);
\path(s0) edge[draw=blue,bend left=50] node[above] {} node{}(s5);
 \path(s7) edge[draw=red,bend left=90] node[above] {} node{}(s6);
\draw[->] (1,0) -- (1.2,0);
\node[circle,dashed,minimum width=0.05mm,inner sep=0.5mm] (p1) at (1.35,0) {$\textcolor{red}{\downarrow^{1}_1}$};
\node[circle,dashed,minimum width=0.05mm, inner sep=0.5mm] (p2) at (1.8,0) {$\textcolor{red}{\downarrow^2_1}$};
\draw[->] (1.45,0) -- (1.65,0);

 \draw (2.15,-0.2) rectangle (2.95,0.3);
 \draw[->] (1.95,0) -- (2.15,0);
 \node[dest] (t0) at (2.21,0) {};
  \node[dest] (t1) at (2.5,0) {};
 \node[dest] (t2) at (2.65,0) {};
  \node[dest] (t3) at (2.85,0) {};
  \path(t0) edge[draw=red,bend left=50] node[above] {} node{}(t1);
\path(t2) edge[draw=blue,bend left=60] node[above] {} node{}(t3);
\node[circle,dashed,minimum width=0.05mm, inner sep=0.5mm] (p3) at (3.35,0) {$\textcolor{red}{\downarrow^3_1}$};
 \node[dest,draw=white] (dum2) at (4.5,0) {};
\put(66,-16.9){$ws_2$};
\put(110,-16.9){$ws_3$};
\put(170,-16.9){$ws_4$};
\put(240,-16.9){$ws_5$};

\draw[->] (2.95,0) -- (3.2,0);
\draw (3.7,-0.2) rectangle (4.5,0.3);
 \draw[->] (3.49,0) -- (3.7,0);
 \node[dest] (r0) at (3.8,0) {};
  \node[dest] (r1) at (4,0) {};
 \node[dest] (r2) at (4.2,0) {};
  \node[dest] (r3) at (4.4,0) {};
  \path(r0) edge[draw=red,bend left=50] node[above] {} node{}(r3);
\path(r1) edge[draw=blue,bend left=60] node[above] {} node{}(r2);

\node[circle,dashed,minimum width=0.05mm, inner sep=0.5mm] (p4) at (4.85,0) {$\textcolor{blue}{\downarrow^1_2}$};
\node[circle,dashed,minimum width=0.05mm, inner sep=0.5mm] (p5) at (5.35,0) {$\textcolor{blue}{\downarrow^2_2}$};
\draw[->] (4.5,0) -- (4.7,0);
 \draw[->] (4.95,0) -- (5.2,0);
 
\draw (5.65,-0.2) rectangle (6.8,0.3);
 \draw[->] (5.45,0) -- (5.65,0);
 \node[dest] (q0) at (5.75,0) {};
  \node[dest] (q1) at (5.95,0) {};
 \node[dest] (q2) at (6.15,0) {};
 \node[dest] (q3) at (6.35,0) {};
   \node[dest] (q4) at (6.5,0) {};
 \node[dest] (q5) at (6.7,0) {};
   \path(q0) edge[draw=red,bend left=50] node[above] {} node{}(q1);
\path(q2) edge[draw=blue,bend left=60] node[above] {} node{}(q5);
\path(q3) edge[draw=blue,bend left=60] node[above] {} node{}(q4);

\node[circle,dashed,minimum width=0.05mm, inner sep=0.5mm] (p9) at (7.15,0) {$\textcolor{red}{\uparrow^3_1}$};
\node[circle,dashed,minimum width=0.05mm, inner sep=0.5mm] (p9) at (7.65,0) {$\textcolor{red}{\uparrow^2_1}$};
 \draw[->] (6.8,0) -- (7,0);
\draw[->] (7.3,0) -- (7.5,0);

\draw (8,-0.2) rectangle (9.2,0.3);
 \draw[->] (7.75,0) -- (8,0);
 \node[dest] (d0) at (8.1,0) {};
  \node[dest] (d1) at (8.3,0) {};
 \node[dest] (d2) at (8.5,0) {};
 \node[dest] (d3) at (8.7,0) {};
   \node[dest] (d4) at (8.9,0) {};
 \node[dest] (d5) at (9.1,0) {};
   \path(d0) edge[draw=blue,bend left=50] node[above] {} node{}(d1);
\path(d2) edge[draw=red,bend left=60] node[above] {} node{}(d5);
\path(d3) edge[draw=red,bend left=60] node[above] {} node{}(d4);

\node[circle,dashed,minimum width=0.05mm, inner sep=0.5mm] (p6) at (9.5,0) {$\textcolor{red}{\downarrow^4_1}$};
\node[circle,dashed,minimum width=0.05mm, inner sep=0.5mm] (p7) at (9.95,0) {$\textcolor{red}{\downarrow^5_1}$};

\node[circle,dashed,minimum width=0.05mm, inner sep=0.5mm] (p8) at (10.5,0) {$\textcolor{blue}{\uparrow^2_2}$};
\node[circle,dashed,minimum width=0.05mm, inner sep=0.5mm] (p9) at (10.9,0) {$\textcolor{red}{\uparrow^5_1}$};
\node[circle,dashed,minimum width=0.05mm, inner sep=0.5mm] (p10) at (11.3,0) {$\textcolor{blue}{\uparrow^1_2}$};
\node[circle,dashed,minimum width=0.05mm, inner sep=0.5mm] (p9) at (11.7,0) {$\textcolor{red}{\uparrow^4_1}$};
\node[circle,dashed,minimum width=0.05mm, inner sep=0.5mm] (p9) at (12.1,0) {$\textcolor{red}{\uparrow^1_1}$};
\node[circle,dashed,minimum width=0.05mm, inner sep=0.5mm] (p10) at (12.6,0) {$s_f$};

\draw[->] (9.2,0)--(9.38,0);
\draw[->] (9.6,0)--(9.78,0);
\draw[->] (10.05,0)--(10.28,0);
 \draw[->] (10.55,0)--(10.73,0);
 \draw[->] (10.95,0)--(11.15,0);
\draw[->] (11.38,0)--(11.55,0);
\draw[->] (11.77,0)--(11.95,0);
\draw[->] (12.15,0)--(12.35,0);

\node[fill=red!50, rectangle, inner sep=0.5, fill opacity=0.2,fit= (p1)(dum2)](H1){};
\node[dest,draw=white] (dum3) at (6.76,0) {};

\node[fill=blue!50, rectangle, inner sep=0.5, fill opacity=0.2,fit= (p4)(dum3)](H1){};

\node[dest,draw=white] (dum4) at (9.8,0) {};

\node[fill=red!50, rectangle, inner sep=0.5, fill opacity=0.2,fit=
(p6)(p7)](H1){};

\end{tikzpicture}
}
 \caption{A run with 3 holes. The blue hole corresponds to the blue stack  and the red holes to the red stack. A final state is reached from $\textcolor{red}{\uparrow^1_1}$ on a discrete transition.}\label{run:holes}
\end{figure}

We illustrate the multistack case on an example.
Note that in figures illustrating examples, 
 we use colored uparrows and downarrows with subscript for stacks, and a superscipt $i$ representing the 
 $i$th push or pop of the relevant colored stack.

Assume that the path we obtain 
on back tracking is the reverse of Figure~\ref{run:holes}.  Holes arising from pending
pushes of stack 1 are red holes, and those from stack 2 are blue holes in the
figure.  We have two red holes\@: the first red hole has a left end point
$\textcolor{red}{\downarrow^1_1}$, and right end point $ws_3$.  The second red
hole has a left end point $\textcolor{red}{\downarrow^4_1}$, and right end point
$\textcolor{red}{\downarrow^5_1}$.  The blue hole has left end point
$\textcolor{blue}{\downarrow^1_2}$ and right end point $ws_4$.
\begin{enumerate}
  \item 	
  From the final configuration $s_f$, on backtracking, we obtain the pop
  operation ($\textcolor{red}{\uparrow^1_1}$).  By the fixed-point algorithm, this
  operation closes the first red hole, matching the first pending push
  $\textcolor{red}{\downarrow^1_1}$.  In the BFS exploration tree, the parent node
  has the red atomic hole consisting of just the
  $\textcolor{red}{\downarrow^1_1}$.  Notice also that, in the parent node, this
  is the only red hole, since the second red hole in Figure~\ref{run:holes} is
  closed, and hence does not exist in the parent node.  We use two witness
  stacks, a red witness stack and a blue witness stack to track the information with respect to
  the red and blue holes.  On encountering a pop transition closing a red
  hole, we populate the red witness stack with (i) a barrier signifying closure of a
  red hole, and (ii) the matching push transition
  $\textcolor{red}{\downarrow^1_1}$.
     
  \item
  Continuing with the backtracking, we obtain the pop operation
  $\textcolor{red}{\uparrow^4_1}$, which, by the fixed-point algorithm, closes the
  second red hole.  
  In the parent node, we have the atomic red hole consisting of just the
  $\textcolor{red}{\downarrow^4_1}$.  The red witness stack contains from bottom to top,
  $\# \textcolor{red}{\downarrow^1_1}$.  Since we encounter a closure of a red
  hole again, we push to the red witness stack, $\# \textcolor{red}{\downarrow^4_1}$.
  This gives the content of the red witness stack as $\# \textcolor{red}{\downarrow^1_1}
  \# \textcolor{red}{\downarrow^4_1}$ from bottom to top.  The next pop transition
  $\textcolor{blue}{\uparrow^1_2}$ is processed the same way, populating the
  blue witness stack with $\# \textcolor{blue}{\downarrow^1_2}$.
      
  \item
  Continuing with backtracking, we have the pop transition
  $\textcolor{red}{\uparrow^5_1}$.  Since this is not closing the second red
  hole, but only shrinking it, we push $\textcolor{red}{\downarrow^5_1}$ on top of
  the red witness stack (no barrier inserted).  This gives the content
  of the red witness
  stack as $\# \textcolor{red}{\downarrow^1_1} \#
  \textcolor{red}{\downarrow^4_1}\textcolor{red}{\downarrow^5_1}$.

  \item
  We next have the pop transition $\textcolor{blue}{\uparrow^2_2}$, which by the
  fixed-point algorithm, shrinks the blue hole.  The parent node has
  the blue hole
  with left end point $\textcolor{blue}{\downarrow^1_2}$, and ends with the
  atomic hole segment $\textcolor{blue}{\downarrow^2_2} ws_4$.  We push onto the
  blue witness stack, this atomic hole obtaining the witness stack contents (bottom to top) $\#
  \textcolor{blue}{\downarrow^1_2}\textcolor{blue}{\downarrow^2_2} ws_4$.  
  
  \item
  In the next step of backtracking, we are at a parent node using the create
  hole operation (creation of the second red hole).  We pop the contents of the
  red witness stack till we hit a $\#$, giving us the transitions
  $\textcolor{red}{\downarrow^5_1}\textcolor{red}{\downarrow^4_1}$ in the reverse
  order. 
  \item
  Next, on backtracking, we encounter the pop operation
  $\textcolor{red}{\uparrow^2_1}$ along with a well-nested sequence
  $ws^5$.  We retrieve from this
  information, $ws^5$, and using the Algorithm~\ref{alg:uniquepath}, obtain
  the sequence of transitions constituting $ws^5$.  The parent node has a hole segment with left
  end point $\textcolor{red}{\downarrow^1_1}$, followed by the atomic hole segment
  $\textcolor{red}{\downarrow^2_1} ws_2$.  We find the matching push transition as
  $\textcolor{red}{\downarrow^2_1}$, and push the last atomic hole segment to the
  red witness stack, obtaining witness stack contents $\# \textcolor{red}{\downarrow^1_1}
  \textcolor{red}{\downarrow^2_1} ws_2$.  The next pop operation
  $\textcolor{red}{\uparrow^3_1}$ leads us to the next parent having a hole with
  left end point $\textcolor{red}{\downarrow^1_1}$, and ending with the atomic
  hole $\textcolor{red}{\downarrow^3_1} ws_3$.  We push this to the red witness stack
  obtaining $\# \textcolor{red}{\downarrow^1_1} \textcolor{red}{\downarrow^2_1} ws_2
  \textcolor{red}{\downarrow^3_1} ws_3$ as the stack contents from bottom to top.
  
  \item
  Next, the backtracking leads us to the parent creating the blue hole.  We pop
  the blue witness stack retrieving $ws_4$ followed by the push transitions
  $\textcolor{blue}{\downarrow^2_2}$ and $\textcolor{blue}{\downarrow^1_2}$.
  The transitions of $ws_4$ are obtained from Algorithm~\ref{alg:uniquepath}.
      
  \item
  Continuing with the backtracking, we arrive at the transition which creates
  the first red hole.  At this time, we pop the red witness stack until we hit a
  barrier.  We obtain $ws_3$, and then we retrieve the transition
  $\textcolor{red}{\downarrow^3_1}$, followed by $ws_2$, and the push transitions
  $\textcolor{red}{\downarrow^2_1}$ and $\textcolor{red}{\downarrow^1_1}$.
  Transitions of $ws_3, ws_2$ are retrieved using Algorithm~\ref{alg:uniquepath}.
  
  \item
  Further backtracking leads us to the parent obtained by extending with the
  well-nested sequence $ws_1$.  We retrieve the transitions in $ws_1$ using
  Algorithm~\ref{alg:uniquepath}.  The last backtracking lands us at the root
  $[s_0]$ and we are done.
\end{enumerate}

\section{Details for Section~\ref{sec:timed}}\label{app:tmpda}
This part of the appendix is devoted to extending our algorithms 
for reachability and witness generation.  We start by defining 
timed multistack push down automata.  Then, Appendix \ref{app:timedreach}  
details the (binary) reachability and algorithms therein, whereas Appendix  
\ref{app:timedwit} describes the generation of a witness for \tmpda{}. 
 
\subsection*{Timed Multi-stack Pushdown Automata (\tmpda{})}
For $N\in \mathbb{N}$, we  denote the set of numbers $\{1,2,3 \cdots N\}$ as
$[N]$. $\mathcal{I}$ denotes the set of closed
  intervals $\{ I | I \subseteq \mathbb{R}_+\}$, such that the end points of the intervals belong to $\mathbb{N}$. $\mathcal{I}$ also contains a special interval $[0,0]$. 
We start by defining the model of timed multi-pushdown automata.
\begin{definition}
    A Timed Multi-pushdown automaton (\tmpda{}~\cite{AGK16})  is a tuple 
    $M=(\mathcal{S},\Delta, s_0,\mathcal{S}_f,$ $ \mathcal{X},n,\Sigma,\Gamma)$ 
        where, 
        $\mathcal{S} $ is a finite non-empty set of locations,
        $\Delta$ is a finite set of transitions,
        $s_0\in \mathcal{S}$ is the initial location,
        $\mathcal{S}_f \subseteq \mathcal{S}$ is a set of final locations,
        $\mathcal{X} $ is a finite set of real valued variables known as clocks,
        $n$ is the number of (timed) stacks,
        $\Sigma $ is a finite input  alphabet,
        and $\Gamma$ is a finite stack alphabet which contains $\bot$. 
        A transition $t \in \Delta$ can be represented as a tuple
        $(s,\varphi,\text{\op{}},a,R,s')$, where, $s,s' \in \mathcal{S}$ are
        respectively, the source and destination locations of the transition $t$,
        $\varphi$ is a finite conjunction of closed
         guards of the
         form $x \in I$ represented as, $(x\in I' \wedge y \in I'' \dots)  $ for
        $x,y \in \mathcal{X}$  and $I',I'' \in \mathcal{I}$, $R \subseteq \mathcal{X}$
        is the set of clocks that are reset, $a \in \Sigma$ is the
        label of the transition, and \op{} is one of the following
        stack operations
            (1) \nop, or no stack operation,
            (2) $\push(\downarrow_i{\alpha})$ which pushes $\alpha \in \Gamma$
            onto stack $i \in [n]$,
            (3) $\pop(\uparrow^I_{i}{\alpha})$ which pops stack $i$ if the top
            of stack $i$ is $\alpha \in \Gamma$ and the time elapsed from the push
            is in the interval $I \in \mathcal{I}$.
      \end{definition}

A \emph{configuration} of \tmpda{} is a tuple
$(s,\nu,\lambda_1,\lambda_2,\ldots,\lambda_n)$ such that, $s \in \mathcal{S}$ is
the current location, $\nu\colon \mathcal{X} \rightarrow \mathbb{R}$ is the current clock
valuation and $\lambda_i \in (\Gamma\times \mathbb{R})^*$
represents the current content of $i^{th}$ stack as well as the \emph{age} of
each symbol, i.e., the time elapsed since it was pushed on the stack.
A pair $(s, \nu)$, where $s$ is a location and $\nu$ is a clock valuation is called a \emph{state}.  

The semantics of the \tmpda  is  defined as follows: a run $\sigma$ is a sequence of  alternating time elapse and discrete 
transitions  from one configuration to another. The time elapses are non-negative real numbers, and, on  discrete transitions,  
the valuation $\nu$  of the current configuration is  checked  
to see if the  clock constraints are satisfied; likewise, on a pop transition, the age 
of the symbol popped is checked. Projecting out the operations of a single stack 
from $\sigma$ results in a well-nested  sequence.
 A run is accepting if it starts from the initial state with all clocks set to $0$, and reaches a final state with all stacks empty.
The language accepted by a \tmpda{} is  defined as the set of 
timed words generated by the accepting runs of the \tmpda.
Since the reachability problem for \tmpda{} is  
Turing complete (this is the case even without time), we consider under-approximate reachability.   
 
A sequence of transitions is said to be \textbf{complete} if each push has a 
matching pop and vice versa.
A sequence of transitions is said to be \textbf{well-nested}, denoted $ws$, if 
it is a sequence of \nop-transitions, or a concatenation of well-nested 
sequences $ws_1ws_2$, or a well-nested sequence surrounded by a matching push-pop 
pair $\push({\downarrow}_i{\alpha}) ~ws ~\pop({\uparrow}^I_{i}{\alpha})$.
If we visualize this by drawing edges between pushes and their corresponding pops, well-nested sequences have no crossing edges, 
as in \includegraphics[scale=0.2]{ws1.pdf} and \includegraphics[scale=0.2]{ws2.pdf}, where we have two stacks, depicted with red 
and violet edges.  We emphasize that a well-nested sequence can have
well-nested edges from any stack.  In a sequence $\sigma$, a push
(pop) is called a \textbf{pending} push (pop)  if its matching pop
(push) is not in the same sequence $\sigma$. For \tmpda{} every
sequence also carries total time elapsed during the sequence, this is
helpful to check stack constraints, and it is sufficient to store time
till the maximum stack constraint, i.e., the maximum constant value
that appeared
in the stack constraints.

\subsection*{Tree Width of Bounded Hole \tmpda{}}
 We will capture the behaviour(any run $\rho$) of $K-$hole bounded
 multistack pushdown systems as a graph $G$ where, every node
 represents a transition $t \in \Delta$ and the edge between the nodes can be of three types.

 \begin{itemize}
     \item Linear order($\preccurlyeq$) between the transition which gives the order
       in which  the transitions are fired. We will use
       $\preccurlyeq^+$ to represent transitive closure of $\preccurlyeq$.
       \item Timing relations $\curvearrowright^{c\in I}~ \in ~\preccurlyeq^+
         \forall c \in \mathcal{X}$  and $I \in \mathcal{I}$ such that, $t_1
         \curvearrowright^{c\in I} t_2$ if and only if
         the clock constraint $c \in I$ is checked in the transition $t_2$ and $t_1
         \preccurlyeq^+ t_2$ has
         the latest reset of clock $c$ with respect to $t_2$.
       
     \item The other type of edges represent the push pop relation
       between two transitions. Which means, if a transition $t_1$ have
       a push operation in any one of the stack $i$  and transition $t_2$
       has pop transition of the stack $i$ which matches with the push
       transition at $t_1$, then we
       have an edge $t_1 \curvearrowright^s t_2$  between them, which will
       represent the stack edge.
     \end{itemize}

     To prove that the tree width of the class of graph $G$ is bounded,
     we will use coloring game \cite{AkshayGKS17} and show that we need bounded number of
     colors to split any graph $g \in G$ to atomic tree terms.

     Eve will start from the right most node of the graph by coloring
     it. The last node of the graph can be any one of the following,
     \begin{itemize}
     \item End point of a well-nested sequence
       \item Pop transition $t_{pp}$ of stack $i$ , such that, the push
         $t_{ps}$ is coming from nearest \emph{hole} of stack $i$.
       \end{itemize}
       \begin{enumerate}
\item If the end point colored is the end point of a well-nested sequence
then Eve can remove the well-nested sequence by adding another color to
the first point of the well-nested sequence. But, there may be some
transitions $t$ in the well-nested sequence with clock  constraints $c\in I$ such
that, the recent reset of the clock $c$, with respect to $t$  is in the left of the well
nested sequence. In order to remove the well-nested sequence she have
 to color the nodes which represent the transitions with recent reset points of the clocks
$c\in \mathcal{X}$. This step require at
most $|\mathcal{X}|$ colors. Now, she can split the
graph in two parts, one of them will be well-nested with two end
points colored. Also, the  clock constraint edges, which are coming from
the left of the well-nested sequence are hanging in the
left, are colored. There can be at most $|\mathcal{X}|$ hanging colored points
possible in the left of the well-nested sequence. The other part will be the remaining graph with the
right most point colored along with the colored recent reset points on
the left of right most colored point.
which are also the hanging points of the previous partition.

If we look at the well-nested part with hanging clock edges, using just one more color we can
split it to atomic tree terms \cite{AkshayGKS17}. 

But the other part still remains a graph of class $G$ so Adam will
choose this partition for Eve to  continue
the coloring game.

\item If the last point of the graph $G$ is a pop point $t_{pp}$ as discussed
  earlier, then the corresponding push $t_{ps}$ can come from a open
  \emph{hole} or a closed \emph{hole}.
  \begin{itemize}
  \item If it is coming from a closed
  \emph{hole} then,  Eve will add color to the corresponding push $t_{ps}$ along
  with the transition $t_{q}$ such that, $t_{ps} \preccurlyeq^+ t_{q}$
  and $t_{ps} \-- t_{q}$ is a well-nested sequence, which forms a atomic
  hole segment $(\uparrow ws)$ where, $\uparrow$ represents the push pop edge $t_{ps}
  \curvearrowright^s t_{pp}$ and $ws$ represents the well-nested sequence
  $t_{ps} \-- t_{q}$. But just as we discussed in previous scenario of
  removing well-nested sequence, there may be
  some clock constraint $c \in \mathcal{X}$ in the well-nested sequence  $ws$ such
  that the transition with the recent resets are from the left of
  $(\uparrow ws)$ and
  without coloring them Eve can not remove the
  $(\uparrow ws)$. Similarly, there may be some clock resets inside $\uparrow
  ws$ from
  which there are clock constraint edges are going to the right of
  $\uparrow ws$. Eve has to color all those points inside the $\uparrow ws$
  which corresponds to those clock reset points in $\uparrow ws$. So, she have
  to color at most $2|\mathcal{X}|$ reset points to remove the stack
  edge $t_1 \curvearrowright t_2$ along with the well-nested sequence
  $t_{ps} \-- t_{q}$($\uparrow ws$), which makes the closed \emph{hole} open with
  colors in both ends of hole and at most $|\mathcal{X}|$ colors in the left
  of the hole and at most $|\mathcal{X}|$ colored hanging points inside the
  \emph{hole}. This operation requires  $2+2|\mathcal{X}|$ more colors. Please note that, the right end of the
  \emph{hole} which got colored after removal of $t_{ps}\--t_{q}$ is another
  push of the \emph{hole}, because \emph{hole} are defined as a sequence
  $(\uparrow ws)^+$. 
  \item If the push is coming from open \emph{hole} then the push
    transition $t_{ps}$ must be colored from previous operation as
    discussed above, hence Eve will add another color  $t_{q'}$ to mark
    the next well-nested sequence $t_{ps} \-- t_{q'}(ws')$ in the right of $t_{ps}$. But,
    similar to above section here also there may be some
    clock resets of clock $i\in \mathcal{X}$ inside the $ws'$ which is being
    checked in the right of the $ws'$. These reset points can be at
    most $|\mathcal{X}|$ and needs $|\mathcal{X}|$ colors. Now, Eve can remove the stack edge
    $t_{pp} \curvearrowright t_{ps}$ along with the well-nested
    sequence $ws'$. This operation  widens the \emph{hole}. Note that at any
    point of the game, hanging clock reset points inside the \emph{hole}
    and in left side of hole is bounded by $|\mathcal{X}|$. This
    operation requires at most $1+|\mathcal{X}|$ colors but subsequent application of
    this operation can reuse colors. 
  \end{itemize}
  In both the above operations, we can split the graph in two parts,
  one with a stack edge $t_{pp} \curvearrowright t_{ps}$ and a well-nested sequence, with at most $|\mathcal{X}|$ hanging points for each clock in
  the left of the $t_{pp}$ and at most $|\mathcal{X}|$ colors inside the
  $ws$. which require at most $1$ color to split into
  atomic tree terms without any extra colors. On the remaining part Eve
  will continue playing the game from right most point.
\end{enumerate}
Here, we claim that at any point of time of the coloring game, there will
be  $2K + (2K+1)|\mathcal{X}| +2 $ active
colors for $K \geq 1$ and $ K \in \mathbb{N}$. Every step of the game
splits the graph in two part, and one part always can be split into
atomic tree terms with at most $2|\mathcal{X}|+3$ colors. The remaining part will require $2 +2|\mathcal{X}|$
colors for every open \emph{hole} in the left of the right most point
of the graph. As the number of open \emph{hole} is bounded by $K$, so we can not have more
than $K$ open \emph{holes} in the left of any point. So, $2K + 2K|\mathcal{X}|$ colors to mark the
\emph{holes}, $1 + |\mathcal{X}|$ for the right most point and recent reset
points with respect to the right most point, $1+|\mathcal{X}|$ for coloring the
well-nested sequence after a matched push and the possible reset
points inside the well-nested sequence, but we will need to color such
well-nested sequence once at any point of time, which gives a total color of
$2K(|\mathcal{X}|+1) + 2(|\mathcal{X}|+1) = (2K+2)(|\mathcal{X}|+1)$.

\section{Reachability in \tmpda{}}
\label{app:timedreach}
In this section, we discuss how the BFS tree exploration extends 
in the timed setting. To begin, we talk about how a list at any 
node in the tree looks like.

\subsection*{Representation of Lists for BFS Tree}
Each node of the BFS tree stores a list of bounded length. A list is a 
 sequence of states $(s,\nu)$ separated by time
 elapses ($t$), representing a  $K$-hole bounded run in a concise
 form. The simplest kind of list is a single state $(s,\nu)$
or a well-nested sequence $(s,\nu,t,s_i,\nu_i)$ with time
elapse $t$.  Note that because of time constraints we need to store
total time elapsed to reach one state from another. This is why we are
keeping a time stamp between two states. Recall, the hole in \mpda{} is
defined as a tuple $(i,s,s')$. For \tmpda{} we need to store
total time elapsed in the hole as well, so it can be represented as a tuple
$H=(i,s,\nu,s',\nu',t_h)$, where, $t_h$ is the time elapse in the hole and
$(s,\nu), (s',\nu')s'$ being the end states of the hole. Also, the maximum possible
value of time stamp is bounded by the maximum integer value in the
constraints (both pop and clock). So, the total possible values that
the variable $t_i$ can take is also bounded.
Let $H, t$ represent respectively holes (of some stack) and time elapses. 
A list with holes has the form $(s_0,\nu_0).t.(H)^*(H.t.(s',\nu'))$. For example, a list with 3 holes 
of stacks $i, j, k$ is

{\center
{\small $[(s_0,\nu_0),t_1$,\textcolor{red}{$(i,s_1,\nu_1,s_2,\nu_2,t_2)$},$t_3$,\textcolor{red}{$(j,s_3,\nu_3,s_4,\nu_4,t_4)$},$t_5$,\textcolor{red}{$(k,s_5,\nu_5,s_6,\nu_6,t_6)$},$t_7,(s_7,\nu_7)]$}
}

\subsection*{Algorithms for \tmpda{}}\label{app:algotimed}

\begin{algorithm}[!h]
  
  {
    \scriptsize
        \SetKwProg{Return}{return}{\string;}{}
        \SetKwProg{Fn}{Function}{\string:}{}
        \SetKwFunction{NumberOfStacks}{NumberOfStacks}
        \SetKwFunction{NumberOfHoles}{NumberOfHoles}
        \SetKwFunction{pushpopclosure}{WellNestedReachTimed}
        \SetKwFunction{concatclosure}{ConcatClosure}
        \SetKwFunction{addpop}{AddPopTimed$_i$}
        \SetKwFunction{addhole}{AddHoleTimed$_i$}
        \SetKwFunction{pushclosure}{AtomicHoleSegmentTimed$_i$}
        \SetKwFunction{pushclosureC}{HoleSegmentTimed$_i$}
        \SetKwFunction{addcomplete}{AddWellNested}
        \SetKwFunction{isempty}{IsEmptyTimed}
        \SetKwComment{Comment}{\textbackslash\textbackslash}{}
        \Fn{\isempty{$M=(\mathcal{S},\Delta, s_0,\mathcal{S}_f,$ $
            \mathcal{X},n,\Sigma,\Gamma) ,K$}}{
        \KwResult{True or False}
            \wsrt{} := \pushpopclosure{$M$};
            \Comment{\textcolor{red}{Solves binary reachability for
                pushdown system}}
            
            \If{some $(s_0,\nu_0,t,s_1,\nu_1)\in$ \wsrt{} with $s_1 \in \mathcal{S}_f$ }{
                \Return{False}{}
              }
                \ForAll{$ i \in [n]$}{
               $AHST_i := \emptyset$\;

 \ForAll{$(s,\phi,{\downarrow}_{i}(\alpha),\rho,a,s_1)\in \Delta$,
   $\nu \models \phi$, and $\nu_1 = \rho[\nu]$  }{

    \ForAll{$ (s_1,\nu_1,t,s',\nu') \in$ \wsrt}{
       $AHST_i:= AHST_i \cup (i, s,\nu,\alpha,s',\nu',t)$\;
    }
  }
   $Set_i :=  \{(s,\nu,t,s',\nu') \mid \exists \alpha (i, s,\nu,\alpha,s',\nu',t) \in AHS_i\}$\;
 $HS_i := \{(i,s,\nu,s',\nu',t)\mid (s,\nu,t,s',\nu') \in \Transitive{$Set_i$} \}$\;
             
        }
        $\mu:= [s_0,\nu_0]$\;
        $\mu.\NumberOfHoles := 0$\;

         $\sol_{new} := \{\mu\}, \sol_{old}:=\emptyset$\;

        \While{$\sol_{new} \setminus \sol_{old} \neq \emptyset$}{
          $\sol_{diff} := \sol_{new} \setminus \sol_{old}$\;
         $ \sol_{old}:= \sol_{new}$\;
         \ForAll{ $\mu' \in \sol_{diff}$}{

            \If{$\mu'.\NumberOfHoles < K$}{

                \ForAll{$ i \in[n]$}
                {
                    $\sol_{h}$ :=
                    \addhole($\mu', HST_i$);~\Comment*[h]{\textcolor{red}{Add
                      hole for stack i}}\\
                  \ForAll{$\mu_2 \in \sol_{h}$}
                  {
                   
                        $\sol_{new} := \sol_{new} \cup \mu_2$\;}
                      
                }
            }

            \If{$\mu'.\NumberOfHoles > 0$}
            {
              \ForAll{$i \in [n]$}
              {
                    $\sol_p$:=
                    \addpop($\mu', M, AHST_i, HST_i$,\wsrt{});~\Comment*[h]{\textcolor{red}{Add
                        pop for stack i}
                    }\\
                    \ForAll{$\mu_3 \in \sol_p$}
                    {
                       \If{$\mu_3.last \in \mathcal{S}_f$ and
                         $\mu_3.\NumberOfHoles = 0$}
                       {
                         \Return(\Comment*[h]{\textcolor{red}{If reached
                             destination state}}){False}{}
                       }
                     
                         $\sol_{new}:=\sol_{new} \cup \mu_3 $\;
                         
                    }
                }
            }
           
          }

          }
       
      \Return{$True$}{}
  } 
  }  
  \caption{Algorithm for Emptiness Checking of hole bounded \tmpda{} }\label{alg:isemptytimed}
    
\end{algorithm}   

  \begin{algorithm}[!h]\label{alg:tlapse}
      \SetKwProg{Return}{return}{\string;}{}
      \SetKwProg{Fn}{Function}{\string:}{}
      \SetKwFunction{min}{Min}
      \SetKwFunction{states}{States}%
       \Fn{\states{$M=(\mathcal{S},\Delta, s_0,\mathcal{S}_f,$ $ \mathcal{X},n,\Sigma,\Gamma)$ }}{
         \KwResult{$F$}
         $F := \{(s,\nu)\mid \forall s \in \mathcal{S} \wedge
         \forall c \in \mathcal{X}, \nu[c] \leq max(c)+1\}$\;
       
        }
        \Return{$F$}{}
       \caption{States}
      \end{algorithm}

      \begin{algorithm}[h]\label{alg:tlapse}
      \SetKwProg{Return}{return}{\string;}{}
      \SetKwProg{Fn}{Function}{\string:}{}
      \SetKwFunction{min}{Min}
      \SetKwFunction{TimeElapse}{TimeElapse}%
       \Fn{\TimeElapse{$(s_1,\nu_1)$}}{
         \KwResult{$Set$}
         $Set := \emptyset$\;
       $t := 0$\;
       \While{$t \leq$ c\textsubscript{max}}{
        $\forall i \in X: \nu_2[i] := \min(\nu_1[i] +t,c_i)$\;
        $Set := Set\cup (s_1,\nu_1,t,s_1,\nu_2)$ \;
        $t := t + 1$\;
        }
        }
        \Return{$Set$}{}
       \caption{Time Elapse}
      \end{algorithm}

The function \href{alg:tlapse}{TimeElapse}
returns the states which are reachable from the state $(s_1,\nu_1)$ via
time elapse. It
also stores the total time elapsed to reach the state. This function
is only useful for timed systems.

\begin{algorithm}[!h]\label{alg:pushpopctimed}
\SetKwProg{Return}{return}{\string;}{}
\SetKwProg{Fn}{Function}{\string:}{}
\SetKwFunction{pushpopclosure}{WellNestedReachTimed}
\SetKwFunction{closure}{ConcatClosureTimed}
\SetKwFunction{timeelapse}{TimeElapse}
\SetKwFunction{Transitive}{TransitiveClosureTimed}
\SetKwFunction{states}{States}
 \Fn{\pushpopclosure{$M=(\mathcal{S},\Delta, s_0,\mathcal{S}_f,$ $ \mathcal{X},n,\Sigma,\Gamma)$}}{

 \KwResult{\wsrt{} $:= \{(s,\nu,t,s',\nu')| (s',\nu') $ is
   reachable from $(s,\nu)$ by time elapse $t$ via a well-nested 
   sequence\}}
 F = \states{M}\;
 Set = $\{(s,\nu,p,s,\nu)\mid (s,\nu) \in F\}$\;
 \ForAll{$(s,\nu) \in F$}{
   $Set = Set \cup $\timeelapse{$(s,\nu)$}\;
   
  \ForAll{$(s,\varphi,\text{\nop{}},a,R,s') \in \Delta$ with $\nu \models \phi$}{
           $Set := Set\cup (s,\nu,0,s',R[\nu])$ 
         }
         }
 $\mathcal{R}_{tc}$= \Transitive{Set}\;
 \While{True}{
     $\wsrt{} :=\mathcal{R}_{tc}$\;
     \ForAll{$(s,\phi_1,{\downarrow}_{i}(\alpha),\rho_1,a,s_1)\in
       \Delta$ and $(s,\nu) \in F$ 
     with $\nu\models\phi_1$}{
         \ForAll{$ (s_1,\rho_1[\nu],t,s_2,\nu_2) \in \mathcal{R}_{tc}$}{
             \ForAll{$(s_2,\phi_2,{\uparrow}_{i}^{I}(\alpha),\rho_2,a,s')\in
               \Delta$
             with $\nu_2\models\phi_2$, $t\in I$}{

             $ \mathcal{R}_{tc} :=  \mathcal{R}_{tc} \cup (s,\nu,t,s',\rho_2[\nu_2])$\;
         }
         }
     }
     $ \mathcal{R}_{tc}$:= \Transitive{$ \mathcal{R}_{tc}$}\;
     \If{ $ \mathcal{R}_{tc} \setminus \wsrt{} = \emptyset$}{
         break\;
     }
 }
 }
 \Return{\wsrt{}}{}
 \caption{Well Nested Reach Timed}

\end{algorithm}

\begin{algorithm}[!h]\label{alg:addholetimed}
\SetKwProg{Return}{return}{\string;}{}
\SetKwProg{Fn}{Function}{\string:}{}
\SetKwFunction{NumberOfHoles}{NumberOfHoles}
\SetKwFunction{pushclosure}{PushClosureTimed$_i$}
\SetKwFunction{pushclosureC}{\ensuremath{PushClosureTimed^*}}
\SetKwFunction{addH}{AddHoleTimed$_i$}
\Fn{\addH{$\mu,HST_i$}}{
\KwResult{$Set$ = \{ $\mu|$ $\mu $ is a list of states and time
elapses\}}

$Set := \emptyset$\;
$(s,\nu) := \last(\mu)$\;
\ForAll{$(i,s,\nu,t,s',\nu')\in HST_i$}{
$\mu' = copy(\mu)$\;
  $ \trunc(\mu')$;
   \qquad /* $\trunc(\mu)$ is defined as $\remove(\last(\mu))$) */\\
   $\mu'.\append[(i,s,\nu, t, s',\nu'), 0, (s',\nu')]$\;
   $\mu'.\NumberOfHoles := \mu.\NumberOfHoles + 1$\;
   $Set := Set \cup \{\mu'\}$\;
}
}
\Return{$Set$}{}

\caption{Add Hole Timed}
\end{algorithm}

\begin{algorithm}[!h]\label{alg:extendwithPopTimed}
\SetKwProg{Return}{return}{\string;}{}
\SetKwProg{Fn}{Function}{\string:}{}
\SetKwFunction{pushcomplete}{PushCompleteTimed$_i$}
\SetKwFunction{pushclosure}{PushClosureTimed$_i$}
\SetKwFunction{addpop}{AddPopTimed$_i$} 
\Fn{\addpop{$\mu,M=(\mathcal{S},\Delta, s_0,\mathcal{S}_f, \mathcal{X},n,\Sigma,\Gamma),AHST_i,HST_i,\wsrt{}$}}{
\KwResult{$Set$ = \{ $\mu \mid \mu $ is a list of states and time
elapses\}}
$Set := \emptyset$\;
$[t_l,(s,\nu)] := \last(\mu)$\;
$\textcolor{red}{[t',(i,s_1,\nu_1,t,s_3,\nu_3),t'']} := lastHole_i(\mu)$\;
$t_3 :=$ The sum of the time elapses in the list $\mu$ between
$(s_2,\nu_2)_{R_i}$ and $(s,\nu)$\;
\ForAll{$ (i,s_1,\nu_1,t_1,s_2,\nu_2) \in HST_i$, 
    $(i,s_2,\nu_2,t_2,\alpha,s_3,\nu_3)\in AHST_i$, 
    $(s,\phi,R,{\uparrow}^{I}_{i}(\alpha),s') \in \Delta$ 
    with $t=t_1+t_2$, $\nu\models\phi$ and $t_2+t_3\in I$, and
    $(s',R[\nu],t_4,s'',\nu'')\in \wsrt{} $}{
    $\mu' = copy(\mu)$\;
  $\trunc(\mu')$\;
  $\mu'.\append([t_l \oplus t_4,(s'',\nu'')]$\;
  $\mu'.\mathsf{replace}(\textcolor{red}{[t',(i,s_1,\nu_1,t,s_3,\nu_3),t'']},$
        \mbox{\qquad\qquad}$\textcolor{blue}{[t',(i,s_1,\nu_1,t_1,s_2,\nu_2),t_2\oplus t'']})$\;
  $Set := Set \cup \{\mu'\}$\;
  \If{$t_1=0$ and $(s_1,\nu_1)=(s_2,\nu_2)$}{
$    \mu'' = copy(\mu)$\;
     $\trunc(\mu'')$\;
     $\mu''.\append([t_l \oplus t_4,(s'',\nu''))$\;
     $\mu''.\mathsf{replace}(\textcolor{red}{[t',(i,s_1,\nu_1,t,s_3,\nu_3),t'']}$,
     $\textcolor{blue}{(t' \oplus t \oplus t'')})$\;
     $\mu''.\NumberOfHoles = \mu.\NumberOfHoles-1$\;
     $Set := Set \cup \{\mu'\}$\;
  }
}

}
\Return{$Set$}{}

\caption{Extend with a pop Timed}

\end{algorithm}    

\newpage

    \section{Witness Generation for \tmpda{}}
    \label{app:timedwit}
    In this section, we focus on the important question of generating a witness for an accepting run whenever our fixed-point algorithm guarantees non-emptiness. Since we use fixed-point computations  to speed up our reachability algorithm, finding a witness, i.e., an explicit run witnessing reachability, becomes non-trivial. In fact, the difficulty of the witness generation depends on the system under consideration : while it is reasonably straight-forward for timed automata with no stacks, it is quite non-trivial when we have (multiple) stacks with non-well nested behavior.

  \begin{algorithm}[!h]\label{alg:uniquepathtimed}
 
\SetKwProg{Return}{return}{\string;}{}
\SetKwProg{Fn}{Function}{\string:}{}
\SetKwFunction{Witnss}{Witness}
\SetKwFunction{useless}{WitnessTimedWR}
\SetKwFunction{timeLapse}{TimeLapse}
\SetKwFunction{useful}{UseFulPath}
\SetKwFunction{firable}{Firable}
\SetKwFunction{usefulTr}{UsefulTransition}

\Fn{\useless{$s_1,s_2,\nu,M=(\mathcal{S},\Delta, s_0,\mathcal{S}_f,$ $ \mathcal{X},n,\Sigma,\Gamma)$,\wsrt}}{
\KwResult{ A sequence of transitions for an accepting run}

\If{$s_1 == s_2$}{
  \Return{$\epsilon$}{}
}
\If{$\exists  t= (s,\phi,R,\nop,s') \in \Delta \wedge \nu \models
  \phi \wedge \nu = R[\nu]$}{
  \Return{$t$}{}
  }
  \ForAll{$s',s'' \in \mathcal{S}$}{
  
   \If{     $((s_1 \neq s') \vee (s'' \neq s_2)) \wedge (s',s'')\in \wsrt$
     $\wedge \exists t=(s_1,\phi,R,\downarrow_i(\alpha),a,s') \in \Delta \wedge$
     $\exists t_2 =(s'',\phi',R',\uparrow_i(\alpha),a',s_2) \in
     \Delta \wedge \nu = R[\nu] = R[\nu'] \wedge \nu \models \phi
     \wedge \nu \models \phi'$ }{
     path=\useless{$s',s'',\nu,M$,\wsrt}\;
     \Return{$t.path.t_2$}{}
     }
   }
   \ForAll{$s \in M.\mathcal{S}$}{
     \If{$(s \neq s_1 \vee s \neq s_2)\wedge (s,0,s_1) \in \wsrt \wedge
       (s,0,s_2) \in \wsrt $}{
       path1=\useless{$s_1,s,\nu,M$,\wsrt}\;
       path2 = \useless{$s,s_2,\nu,M$,\wsrt}\;
       \Return{path1.path2}{}
     
     }

}
}

\caption{ Well-nested Timed  Witness Generation}

\end{algorithm}

\begin{algorithm}[!h]
 \scriptsize{
\SetKwProg{Return}{return}{\string;}{}
\SetKwProg{Fn}{Function}{\string:}{}
\SetKwFunction{Witnss}{Witness}
\SetKwFunction{useless}{WitnessTimedWR}
\SetKwFunction{timeLapse}{TimeLapse}
\SetKwFunction{useful}{UseFulPath}
\SetKwFunction{firable}{Firable}
\SetKwFunction{usefulTr}{UsefulTransition}
\Fn{\Witnss{$(s_1,\nu_1),t,(s_2,\nu_2),M=(\mathcal{S},\Delta, s_0,\mathcal{S}_f,$ $ \mathcal{X},n,\Sigma,\Gamma)$,\wsrt}}{
\KwResult{ A sequence of transitions for an accepting run}

   \ForAll{$t_1 \in [T]$}{
   midPath = \Witnss{$(s_1,\nu_1+t_1),t-t_1,(s_2,\nu_2),M,\wsrt$}~
   {\textcolor{red}{Progress Measure 1}}\;
   \If{midPath $\neq \emptyset$}{
    \Return{$ t_1 \cdot$midPath}{}
   }
   }
   \ForAll{$\delta=(s'',\phi',R',\nop{},a',s_2) \in M.\Delta$}{
   \If{$\delta.R'[\nu_1] \neq  \nu_1$ and $\nu_1 \models \delta.\phi'$)}{
    $s_3$ = $\delta.s_2$\;
    $\nu_3$ = $\delta.R'[\nu_1]$\;
    midPath2 = \Witnss{$(s_3,\nu_3),t,(s_2,\nu_2),M,\wsrt$}  {\textcolor{red}{Progress Measure 2}}\;
    \If{midPath2 $\neq \emptyset$}{
    \Return{ $\delta \cdot$midPath2}{}
  }
}
}
  \ForAll{$s \in M.\mathcal{S}$}{
   path = \useless{$s_1,s,\nu_1,M$,\wsrt}  {\textcolor{red}{Progress Measure 3}}\;
   \If{path $\neq \emptyset$}{
     midPath3 = \Witnss{$(s,\nu_1),t,(s_2,\nu_2),M$,\wsrt}\;
     \If{midPath3 $\neq \emptyset$}{
       \Return{path $\cdot$ midPath3}{}
     }
     }
}

}
}

\caption{Timed Pushdown Automata Witness Generation}\label{alg:witnesstimed}
\end{algorithm}

\noindent{\bf 0-holes}. We start discussing the witness generation in the case of timed automata. As described in the algorithm in section \ref{sec:algo}, non-emptiness is guaranteed 
if a final state $(s_f, \nu_f)$ is reached from the initial state
$(s_0, \nu_0)$ by computing the transitive closure of the
transitions. The transitive closure computation results in generating
a tuple $(s_0,\nu_0,t,s_f,\nu_f) \in \wsrt{}$ (Algorithm~\ref{alg:pushpopctimed}),  
for some time $0\leq t \in \mathbb{R}$. Notice however that, in the
Algorithms~ \ref{alg:pushpopctimed}, 
we do not keep track of the sequence of states that led to the final state, and 
this is why we need to 
reconstruct a witness. To generate a witness run, 
we consider a \emph{normal form} for any run in the underlying timed automaton, and check for the existence of a witness in the normal form. A run is in the normal form if it is a sequence of \emph{time-elapse}, \emph{useful}, and \emph{useless} transitions. Time-elapse transitions have already been explained earlier. 
A discrete transition $(s,\nu) \stackrel{}{\rightarrow} (s',\nu')$ is \emph{useful}
if $\nu \neq \nu'$ , that is, there is at least one clock $x$ such that $\nu'(x)=0$ and $\nu(x)\neq 0$. A discrete transition is \emph{useless} 
if $\nu=\nu'$. 
 
If a tuple $(s_0,\nu_0,t,s_f,\nu_f)$, $t\geq  0$ is generated by  Algorithm \ref{alg:pushpopctimed}, 
we know that the system is non-empty. Now, 
 we describe an algorithm to generate   
the witness run for obtaining $(s_0,\nu_0,t,s_f,\nu_f)$, 
 by associating a \emph{lexicographic progress measure} while exploring 
runs starting from $(s_0, v_0)$. Integral time elapses, useful transitions and useless transitions 
are the three entities constituting the progress measure, ordered lexicographically.  
\begin{itemize}
\item First we check if it is possible to obtain a witness run of the form 
$(s_0, \nu_0) \stackrel{t_1}{\rightsquigarrow}(s, \nu)\stackrel{t_2}{\rightsquigarrow}(s_f,\nu_f)$, where 
$\stackrel{t}{\rightsquigarrow}$ denotes a sequence of transitions whose total time elapse is $t$. 
In case $t_1, t_2>0$, with $t_1+t_2=t$, we can recurse on obtaining witnesses 
to reach $(s,\nu)$ from $(s_0, \nu_0)$, and $(s_f, \nu_f)$ from  
  $(s,\nu)$, with strictly smaller time elapses, guaranteeing progress to termination.   
\item In case $t_1=0$ or $t_2=0$, we move to the second component of our progress measure, namely useful 
transitions. Assume $t_2=0$. Then indeed, there is no time elapse 
in reaching $(s_f, \nu_f)$ from $(s,\nu)$, but only a sequence of discrete transitions. 
Let $\#_X(\nu)$ denote the number of non-zero entries in the valuation $\nu$. To obtain the witness, 
we look at a maximal sequence of useful transitions from $(s,\nu)$ of the form $(s,\nu) \stackrel{}{\rightarrow} (s_1, \nu_1) \stackrel{}{\rightarrow} \dots 
\stackrel{}{\rightarrow} (s_k, \nu_k)$ such that $\#_X(\nu) >  \#_X(\nu_1) >  \dots >  \#_X(\nu_k)$, where 
 $k \leq $ the number of clocks. When we reach some $(s_i, \nu_i)$ from where we cannot make a useful transition, we go for a useless transition. Since there is no time elapse, and no 
useful resets, the clock valuations do not change on discrete transitions. We are left with enumerating 
all the locations  to check the reachability to $s_f$ (or to some $s_j$, from where we can again have a maximal sequence 
of useful transitions). Indeed, if $(s_f, \nu_f)$ is reachable from $(s,
\nu)$ with no time elapse, there is a path having at most $|\mathcal{X}|$ useful transitions, interleaved with 
a sequence of useless transitions. 
\end{itemize}

Generation of witness for timed automata 
is given in Algorithm~\ref{alg:witnesstimed}.
Notice that when  $\kappa=(s_0,v_0,0,s_f,v_f)$, 
the progress measure is $m(\kappa)=\#_X(\nu_0)-\#_X(\nu_f)$. If $m(\kappa)=0$, then $\nu_0=\nu_f$, and 	the path takes only useless transitions.  
In this case, we consider the graph with nodes as states $(s, \nu)$, and there is an edge from $(s_1, \nu_1)$ to $(s_2, \nu_2)$ 
if there is a transition $(s_1, \varphi, R, s_2)$ such that $\nu_1 \models \varphi$ and $\nu_1[R]=\nu_1$, that is, for all $x \in R$, 
$\nu_1(x)=0$. If $m(\kappa) \neq 0$, then we take at least one useful transition. We can check 
if there exists a transition $(s_1, \varphi, R, s_2)$ such that $s_1$ is reachable from $s_0$, and 
$\nu_0 \models \varphi, \nu_0[R] \neq \nu_0$, and the tuple 
$\kappa'=(s_2, \nu_0[R], 0, s_f, \nu_f) \in \wsrt{}$. In this case, we have 
$m(\kappa') < m(\kappa)$ and we can conclude by induction. 

 The case of a timed pushdown system with a single stack is similar to the case of timed automata, 
except for the fact that a discrete transition may involve push/pop
operations. We use the same progress measures as in the
timed automaton case, using the notion of runs in normal form.

\noindent{\bf Getting Witness from Holes}. We
can extend the backtracking algorithm for witness generation for
\mpda{} to generate witness for \tmpda{} without much
modification. In timed settings we need to take care of the time
elapses within a hole and an atomic hole segment. When a hole is partitioned
to an atomic hole segment and a hole, the time must be partitioned
satisfying possible atomic hole segments and holes along with other
constraints.

\section{Details for Experimental Section}
   \subsection{Multi-Producer Consumer}
   \paragraph{Description}
Suppose in a factory two machines (called producers)
produce products `A' and `B' respectively in batches. Another
machine (called consumer) first consumes one `A' and
then one `B' and repeats this process. $A$'s are produced in batches
of  size $M>0$, and $B$'s are produced in batches of size $N>0$.
On any day,
the number of `A's produced is a non-zero multiple of $M$ and the
number of `B's produced is a non-zero multiple of $N$ ($M,N>
0$).  
To avoid wastage, the factory
aims  to consume all `A's and `B's generated by the producers each
day. At any point of time, the factory does one of the following:
 produce a batch of $M$ $A$s, or produce a batch of $N$ $B$s, or 
 consume  $A,B$ in order, alternately, one at a time.
 Given $M, N$, the problem we are interested is to check if  
  all the $A$s and $B$s which are produced in a day can be completely consumed
 by end of day. 
\paragraph{Modeling}
The factory is modeled using a parametric two-stack pushdown system,
one stack per product, $A, B$.  The parameters used are the sizes of the batches,  $M$
and $N$ respectively. The production and consumption of the `A's and `B's are modeled using
push and pop in the respective stack. For a given $M$ and
$N$, the language accepted by the system is empty if it is not possible to satisfy the
conditions mentioned above. Otherwise, there will be a run where, all the produced `A's and
`B's are consumed.
\paragraph{Language} The language accepted by the two-stack pushdown
system is given by $((a^M+b^N)^+ (\bar{a}\bar{b})^+)^+$, where  
 $a,b$ represent respectively,  the push on stack 1, 2. Likewise,
 $\bar{a}, \bar{b}$ represent respectively,  the pop on stack 1, 2. 
\paragraph{Hole Bound}For any parameter $M, N>0$, the shortest
accepting run of
the two stack pushdown system is not well-nested.
\begin{theorem}
  1-hole bounded runs are same as 0-hole bounded or 
  well-nested runs.
  \label{thm:1}
\end{theorem}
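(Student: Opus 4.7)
The easy direction, that $0$-hole boundedness implies $1$-hole boundedness, is immediate. For the converse, my plan is to assume for contradiction that a $1$-hole bounded complete run $\rho$ has at least one hole and use the $1$-hole bound to show this hole is in fact subsumed by a strictly larger $(\downarrow_i \, ws)^+$ factor, contradicting maximality.

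I would take $H$ to be the leftmost hole of $\rho$, say of stack $i$, with explicit pushes $p_1, \dots, p_k$ (for some $k \geq 1$) matched by pops $r_1, \dots, r_k$ (stack discipline: $r_j$ pops $p_{k-j+1}$). Leftmostness gives that no explicit push is pending before $p_1$, and at every position strictly between $p_1$ and $r_k$ the hole $H$ is open since some $p_\ell$ is still pending, so by the $1$-hole bound no other hole can be open there. Consequently the interval between $p_1$ and $r_k$ contains no pending push of any stack other than those of $H$ itself.

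The structural step is then to decompose the interval from the end of $H$ up to $r_k$ uniquely as $\sigma_1 \cdot r_1 \cdot \sigma_2 \cdot r_2 \cdots \sigma_k \cdot r_k$, and argue that each $\sigma_j$ is a well-nested sequence: an unmatched push of a stack $j' \neq i$ inside $\sigma_j$ would open a second simultaneously-open hole, and an unmatched $\downarrow_i$ inside $\sigma_j$ would clash with the stack-discipline pairing $r_j \leftrightarrow p_{k-j+1}$. A short induction on $k$, using only the closure of $\ws$ under concatenation and under the $\downarrow_i \, ws \, \uparrow_i$ bracketing, then shows that $H \cdot \sigma_1 \, r_1 \cdots \sigma_k \, r_k = \downarrow_i \, ws' \, \uparrow_i$ for some $ws' \in \ws$ with outermost pair $(p_1, r_k)$.

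This last factor is a strictly larger $(\downarrow_i \, ws)^+$-factor of $\rho$ that extends $H$ (read it as the single repetition with slot $ws'$), contradicting the maximality of $H$ as a hole. The main obstacle in this plan is the careful decomposition of the post-$H$ region into the $\sigma_j \cdot r_j$ blocks and verifying well-nestedness of each $\sigma_j$ from the $1$-hole bound; once this is in place, the contradiction, and hence the theorem, follow at once.
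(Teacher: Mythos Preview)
Your proposal is correct and rests on the same observation as the paper's proof: under the $1$-hole bound, the segment from a hole's first push $p_1$ to the matching pop $r_k$ is forced to be well-nested. The paper argues this via a short forward scan (from the right end of the open hole, each subsequent step is either well-nested material or a pop that shrinks the hole, so the accepting run collapses to a well-nested sequence), whereas you recast the same fact as a contradiction with the maximality of $H$. The two arguments are essentially the same idea in different packaging; yours is slightly more explicit about the decomposition $\sigma_1 r_1 \cdots \sigma_k r_k$ and why each $\sigma_j$ is well-nested.

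Two small points to tighten. First, the factor $\downarrow_i\, ws'\, \uparrow_i$ you exhibit is not itself of the form $(\downarrow_i\, ws)^+$, since it ends in a pop; the factor you actually want is $\downarrow_i\, ws'$, obtained by dropping the final $r_k$. Second, for $\downarrow_i\, ws'$ to \emph{strictly} extend $H$ when $k=1$ you need $\sigma_1 \neq \epsilon$; this follows from the maximality clause in Definition~\ref{defn:hole} (the event immediately following a hole of stack $i$ is a push or pop of some stack $j \neq i$), and you should invoke it explicitly rather than leave it implicit.
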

\begin{proof}
  Assume, there exists an accepting run with 1-hole bound. From the definition of
  hole, we know that from any point $p$ of the run, there exist only one
  open hole to the left of $p$. Let $\ell< r$ respectively 
  be the left and right end points of this hole.
  Any point $r < i < p$ will be a part of some 
  well-nested sequence (due to hole bound 1). Let us consider
  extending the run beyond point $p$. There are two possibilities :
  (1) extending with a well-nested sequence, or (2) adding a
  pop that matches the last push of the hole. In the latter case,
  we have ``shrunk'' the hole or extended the last well-nested
  sequence between $r$ and $p$. Since an accepting run
  has no pending pushes to be matched, if we proceed
  with (1), (2) above, we will eventually
  complete the  run as a well-nested sequence.  By definition of
  holes, such an accepting run is a well-nested sequence having 0
  holes.  
   \qed{}
\end{proof}
Hence, there can be no run with hole bound less than 2.
  For any $M$ and $N$ \trim{} can
generate the minimum possible run with 2 bound on
holes. Because, we need one hole each for producing `A's and `B's.
Notice that the in the smallest run the number of `A's and `B's is equal to $LCM(M,N)$.
The consumption of `A's and `B's happen in an
order one by one i.e., in a sequence where consumption of `A' and `B'
alternate. This increases the number of context changes and if we
count them the bound on
scope turns out to be $O(2\times LCM(M,N))$ and bound on context is $O(2\times LCM(M,N))$.

  The run of the automata is as follows:
  \textcolor{red}{t3--t4--t5--t3--t4--t5}--\textcolor{blue}{t12--t14--t12--t14--t12--t14}--t15--t16--t19--t20--t19--t20--t19--t20--t19--t20--t19--t20 

The length of the witness is 24. The red color represents the hole of
stack 1 and the blue color represents the hole of stack 2. As marked
we need 2 holes to accept this run in \trim{}. If we try to compute
the scope bound on the run above, we will see that the maximum scope
starts at $t3$ and ends in $t20$, the number of context changes
between this transitions is 15. And total context change in this run
is 17.
 
\subsection{ Concurrent Insertion in Binary Search Tree (BST)}\label{expt:bst}
As explained earlier, we consider the algorithm proposed
in~\cite{kung1980concurrent} which solves this problem for concurrent
implementations.  But, if the locks are not implemented properly then it is
possible for a thread to overwrite others as shown in~\cite{chaki2006verifying}.  We have modified this algorithm so
that it becomes buggy and then we tried to model it using \mpda{},
\trim{} was able to detect the bug.  In short, we need two stacks to simulate
two recursive threads of binary  tree insertion.  The problem
(reachability query) arises when both threads tries to update the same sub tree,
simultaneously which is a bad run of the system.  Then we note that to find such
bad runs at least 4 context changes are  required in the underlying \mpda{}, \trim{}
can find such bad runs with bound 2 on holes.

In more detail, assume the following implementation for the
\texttt{insert}~\cite{kung1980concurrent} operation in a BST:

{\scriptsize
\begin{lstlisting}[language=C]
BSTNode* insert(BSTNode* root,int value)
{
1    if(root == NULL){
2       return createNewNode(value);
3    }
4    
5    if(root->data == value){
6        return root;
7    } 
8    else if(root->data > value){
9        while(root->leftLock);
10        if(!root->left){
11            root->leftLock = true;
12            root->left = insert(root->left,value);
13            root->leftLock = false;
14        }
15        else{
16            root->left = insert(root->left,value);
17        }
18    }
19    else{
20        while(root->rightLock);
21        if(!root->right){
22            root->rightLock = true;
23            root->right = insert(root->right,value);
24            root->rightLock = false;
25        }
26        else{
27            root->right = insert(root->right,value);
28        }
29    }
30    
31    return root;
32    
}
\end{lstlisting}
}

 \begin{figure}[h]

    \centering
  \tikzstyle{level 1}=[level distance=2cm, sibling distance=3.5cm]
\tikzstyle{level 2}=[level distance=2cm, sibling distance=2cm]
\tikzstyle{bag} = [rectangle,draw=black]
\scalebox{0.7}{     
\begin{tikzpicture}[grow=down]
\node[bag] {10}
    child {
      node[bag] {5}
      child{
        node[bag] {2}
        edge from parent
      }
      child{
        node[bag] {6}
        edge from parent
        }
      edge from parent
    }
    child{
      node[bag] {15}
      child{
        node[bag] {13}
        edge from parent
      }
      child{
        node[bag] {20}
        edge from parent
        }
            edge from parent
            };
        \end{tikzpicture}
}        \caption{An example of BST}\label{fig:bst}

 \end{figure}

Consider the BST in Figure~\ref{fig:bst}.

Suppose 2 threads t1 and t2 are invoked simultaneously trying to
insert values 25 and 26 respectively and currently are at BSTNode with
value 20 (The rightmost node). To reach that node both of the threads
require to do recursive calls, which depend on the size of the
tree. In between a lot of context switching can
happen between the threads, here we present just the context changes
upon reaching the destination sub-tree i.e., 20.

{\scriptsize
\begin{lstlisting}[language=C]
  
a. t1:
          1. if(root == NULL)  //not true, will go to line 5.
          //switch

b. t2:
          1. if(root == NULL)  //not true, will go to line 5.
          //switch

c. t1:
          5. if(root->data == value){  //not true, will go to line 8.
          8. else if(root->data > value) //not true, will go to line 19.
          //switch

d. t2:
          5. if(root->data == value){  //not true, will go to line 8.
          //switch

e. t1:
          19    else{
          20        while(root->rightLock);  // lock is not held by anyone, so continue.
          21        if(!root->right){
          //switch
f. t2:
          8. else if(root->data > value) //not true, will go to line 19.
          19    else{
          20        while(root->rightLock);  // lock is not helpd by anyone, so continue.
          21        if(!root->right){
          22            root->rightLock = true;
          //switch

g. t1:
          22            root->rightLock = true;
          23            root->right = insert(root->right,value);
          //switch

h. t2: 
          23            root->right = insert(root->right,value);
          24            root->rightLock = false;
          //switch
        \end{lstlisting}
        }

As we can see in the above code section f,g and h that both t1 and t2 are entering into critical section without knowing the presence of each other.

Now we wish to check if the following set of instructions executed in one go:
\begin{lstlisting}[language=C]
20        while(root->rightLock);
21        if(!root->right){
22            root->rightLock = true;
\end{lstlisting}
\textbf{Observations:}

\begin{itemize}
  \item We need two stacks to simulate the recursive procedure
    \texttt{insert} on two threads.
    \item Note, that we can simulate the sequence of recursion call of each
      thread to destination sub-tree as a hole.
      \item Here the reachability query is whether there exists a run,
        where both the threads enter in the same sub tree and try to
        access same location.
\end{itemize}

\subsection{Double free in dm-target.c}\label{sbs:dmtarget} As mentioned
in~\cite{HagueL12}, to fix a memory-leak, in version 2.5.71 of the
Linux kernel, a double free was introduced to drivers/md/dm-target.c. When
registering a new target, memory was allocated for the target and
then a check made to see if the target was already known. If the
target exists, the allocated memory was freed and an flag (``exists'')
is set. Otherwise, the target was added to the target list as intended. Before
returning, the flag (``exists'') was checked and the object was freed
(again) if it was set. A run is faulty where, either an
item was removed from the empty list, the number of free calls was
greater than the number of allocations, or, the code exited normally,
but more memory was allocated than freed. Note, one of the stack is required to
track the size of the list which ensures that the number of allocations matches the
number of frees.

\subsection{$Maze_T$}
\paragraph{Description} We consider a parametrized example of a robot navigating a maze, picking
items, while visiting locations
in the maze respecting some time constraints, 
by extending the example 
 from~\cite{AGK16} to handle multiple stacks. 
 There are 9 locations in the maze as shown in
the Fig~\ref{fig:maze}. The global time $T$ to be spent in the maze,
is given as the parameter. 
The robot enters at  point 1, and exits via  point 9  after spending
$T >0$ time in the maze, respecting the following constraints.

\begin{enumerate}
\item   Each time it visits locations 3 and 5, it picks up one item from 
  these locations;
   \item On each visit to location 7, the robot drops there, the  items collected
   from  locations 3 and 5 in order one by one (first item from
   location 3 then item from location 5); 
\item  The time elapsed between picking up the  item from location 3
  and 
  dropping  it at location 
    7 must be within the interval \textcolor{red}{[4,6]};
 
 \item Likewise, the time elapse between picking up the  item at
   location 5 and  dropping it at  location
    7  must be within the interval
    \textcolor{red}{[1,4]};
    \item Before exiting the maze, the robot must have dropped  all items
      it has picked up; 
 \item The time difference between moving from location 8 to location
   2 and moving from  location 2 to location 4 must be within the interval
   \textcolor{red}{[2,3]};
   \item Similarly, the time difference between moving from location 2
     to location 4 and moving from location 4 to location 6 must be within 
     the interval \textcolor{red}{[2,3]};
 \item It can elapse time in
     the interval \textcolor{red}{[0,1]} at locations 3, 5 and, 6; and 
   \item it can not elapse time in any other location.
     \end{enumerate}
    
Thanks to conditions 1 and 2,  the robot must visit
locations 7,3 and, 5 equal number of times. We capture this succinctly
using 2 stacks; note that in case we know the global time $T$ a
priori, this maze can possibly be
modeled without any stacks, incurring a blow up in state space and we
need different modeling for different values of $T$. 
In the general case when $T$ is not fixed, we were able to model it using two
stacks and the value of $T$ can be changed without any modification of
the model.
\paragraph{Modeling} We model this maze using a timed multi-stack
pushdown automata (\tmpda{}) cf. Fig~\ref{fig:maze}.  2 clocks $x, y$
are used to capture the conditions 6--9; conditions 1--3 are captured
using the notion of age in the stacks. Indeed, at the cost of using
more clocks, incurring an increase in the state space,  we can capture
all time constraints without using  age constraints. The capability of
\trim{} to handle time on stacks aids in this succinct
representation. 
 \begin{wrapfigure}[13]{l}{.5\linewidth}
       \vspace{-.5cm}
  \centering
  \includegraphics[scale=.15]{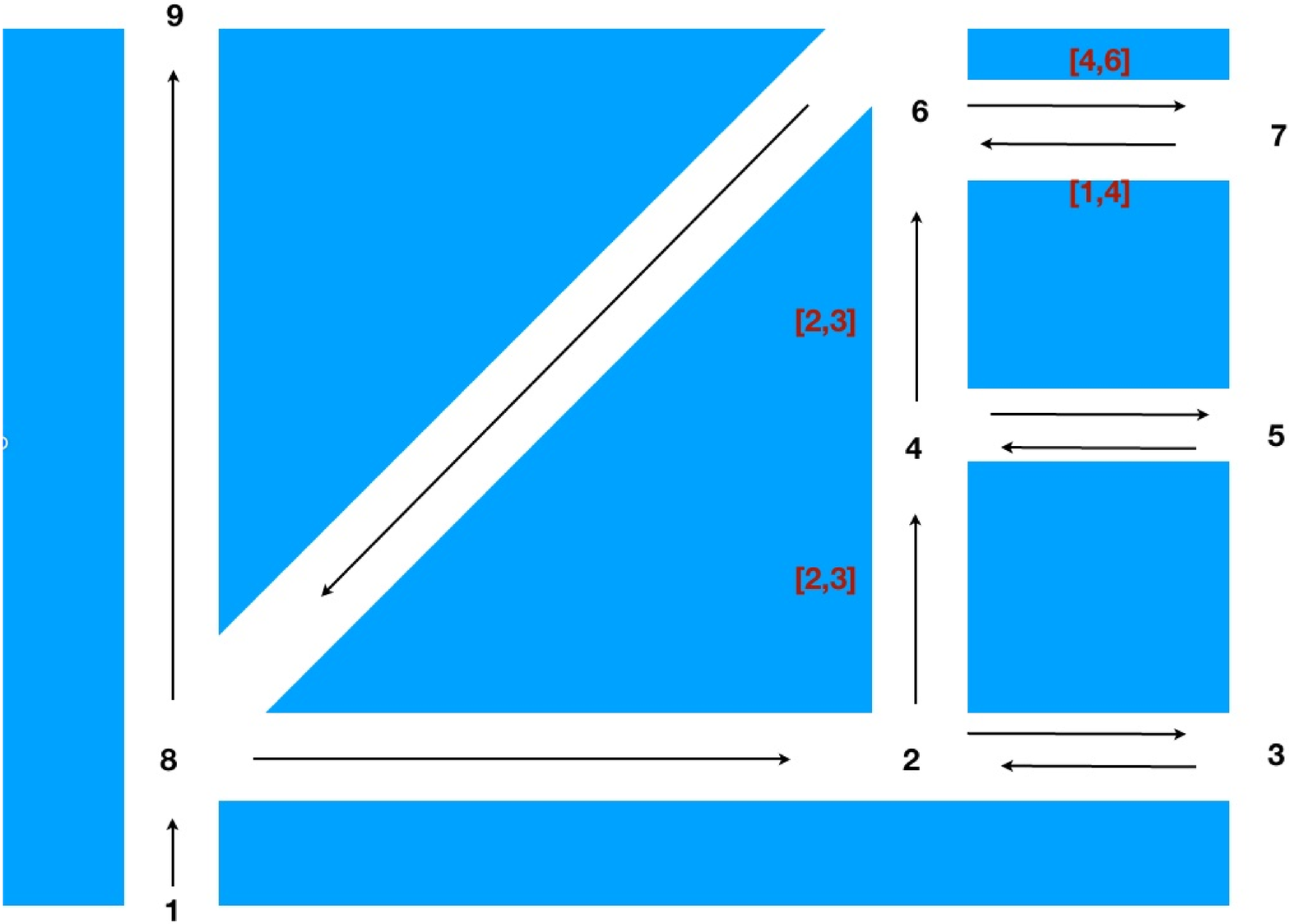}
  \caption{Maze}\label{fig:maze}
\end{wrapfigure}

To satisfy the above conditions, the robot must spend time within 2
  to 3 units  between the transitions from location 8 to location 2 and
   from location 2 to location 4. Similarly it must spend time within
   2 to 3 units between the transitions from location 2 to location 4
   and from location 4 to location 6. So, the
  minimum time it must spend is 4 units while travelling from location 8
  to location 6 in the maze.  Because the robot can only spend time in
  location 3,5 and 6, it will spend the time while picking items from
  location 3 and 5 but it can not spend more than 1 time unit in those locations. So, to
  satisfy the timing constraints, the robot will be forced to pick up two
  items by visiting locations 3 and 5 respectively. Then these items will be dropped in
  location 7 in specific order (first it will drop the location 3's item
  then it will drop location 5's item). We simulate the picking up
  using  push operations and dropping using pop operations. To satisfy the
  stack constraint mentioned in 3, the time difference between picking
  up an element in location 3 and dropping it at location 7 must be
  within the interval [4,6]. But, if the robot spend only 4 time units
  moving from location 8 to location 6, the second item collected from
  location 3 will only have age 3 when the robot reaches location
  6. So, the robot must spend 1 time unit in location 6 to satisfy the
  stack constraint. Moreover, if we project out the timing
  constraints, even if the robot picks one item from location 3 and 5
  respectively, it has to drop them in location 7, in the same order.
  This will not be a well-nested run.
  
Minimum value of $T$ that satisfies the above mentioned conditions is
5, with $T=5$, we get the following run,

  \[ t_1\-- t_2\-- t_4\-- TL(1)\-- t_5\-- t_4\-- TL(1)\-- t_5\-- t_6\-- t_9\-- TL(1)\-- t_{10}\-- t_9\-- TL(1)\-- t_{10}\-- t_{11}\-- \]
  \[TL(1)\--t_{14}\-- t_{15}\-- t_{14}\-- t_{15}\-- t_{16}\-- t_{19} \]
  Note that, $TL(\tau)$ represents time elapse of $\tau$
units and the $t_i$ represents transitions of the \tmpda{} cf. Fig~\ref{fig:maze}.

\paragraph{Hole bound} As there is no well-nested sequence possible in
the \tmpda{} for any $T>0$, from Theorem~\ref{thm:1}, there will not be any run with
0 or 1 hole bound. But, it is possible to find the run 
with $T=5$ as shown above has hole bound $2$. If we remove the timing
constraints from the stack as well as the clocks, then the robot will
not be forced to go inside the maze, in fact it can directly move to
location 9 from location 1 after visiting 8. The witness can be obtained with 0
hole-bound. The number of context changes in the above mentioned run of
timed maze is,  6 which is equal to the bound on scope.

\section{ Varying Parameters of \trim{}  }\label{sec:scalability}
We experimented by varying different
parameters like number of \emph{holes}, \cmax{} (for timed setting), number of locations, and number of transitions. Which gives us an excellent idea about the
performance of \trim{}. Here we are presenting the graphs generated by
varying the parameters for the example $L_{crit}$ by modifying the
\mpda{} \acritn{} to \acrit{} by
making the set of accepting states to empty. For the timed case we
call the \mpda{} \atcrit{}. Recall that we chose
\acrit{} because,
according to Algorithm~\ref{alg:isempty} if the language accepted
by the given \mpda{}(\tmpda{}) M is empty, the algorithm will generate all
possible lists $\mu_i$, recall by $\mu_i$ we represent the nodes of the BFS
exploration tree.  The number of such $\mu_i$ will depend on
  the number of open \emph{holes} K. As we already discussed that the number of such
  $\mu_i$ is finite; hence, the algorithm will terminate. So, when we
  run the algorithm with the input \acrit{} the
  algorithm will check all possible $\mu_i$ for an accepting run. In
  other words, this captures the worst case running time of the tool
  \trim{}. First we will show the performance of \trim{} in the un-timed
  settings and then we will redo the experiments on \atcrit{}.
  
\paragraph{\textbf{Un-timed:}} The graphs for un-timed system showing
the running time of \trim{} is in 
Figures~\ref{fig:graph-hole-l-crit-empty},~\ref{fig:graph-location-transitions-l-crit-empty}. And
the memory requirement for the same is shown in Figures~\ref{fig:graph-hole-l-crit-empty-memory},~\ref{fig:graph-location-transitions-l-crit-empty-memory}.

It is clearly seen in  Figure~\ref{fig:graph-hole-l-crit-empty}
that the time is exponential with the holes, and linear with the
number of locations as seen in Figure~\ref{fig:graph-location-transitions-l-crit-empty}.

\begin{figure}[h!]
 
    \begin{minipage}{.48\textwidth}
    \centering
  
  \begin{tikzpicture}[scale=.7]
  \begin{axis}[
    xlabel=Holes,
    ylabel=Run Time(s),
    xtick=data,
    enlarge y limits={upper, value=0.2}]
    \addplot[smooth,color=red,nodes near coords] plot coordinates {
       (4,0.3723)
      (5,0.6749)
      (6,2.5988)
      (7,4.7267)
      (8,15.681)
      (9,25.4822)
      (10,80.5287)

    };
     \addplot+[ybar,color=blue,fill] plot coordinates {
             (1,0.0110)
      (2,0.0371)
      (3,0.0891)
      (4,0.3723)
      (5,0.6749)
      (6,2.5988)
      (7,4.7267)
      (8,15.681)
      (9,25.4822)
      (10,80.5287)
    };
  \end{axis}

\end{tikzpicture}
  \caption{Performance of \trim{} with the number of \emph{holes}
    varying on input \acrit{}}\label{fig:graph-hole-l-crit-empty} 
\end{minipage}\hfill
 \begin{minipage}{.48\textwidth}
    \centering
    
  \begin{tikzpicture}[scale=.7]
  \begin{axis}[
    xlabel=Holes,
    ylabel=Memory(MB),
    xtick=data,
    enlarge y limits={upper, value=0.2}]
    \addplot[smooth,color=red,nodes near coords] plot coordinates {
       (1,3.9)
       (2,4)
       (3,4.7)
       (4,9.8)
      (5,17.8)
      (6,38.9)
      (7,75.6)
      (8,219.8)
      (9,377.7)
      (10,1127.6)

    };
    \addplot+[ybar,color=blue,fill] plot coordinates {
             (1,3.9)
       (2,4)
       (3,4.7)
       (4,9.8)
      (5,17.8)
      (6,38.9)
      (7,75.6)
      (8,219.8)
      (9,377.7)
      (10,1127.6)

    };
  \end{axis}

\end{tikzpicture}
  \caption{Memory consumption of \trim{} with the number of \emph{holes}
    varying on input \acrit{}}\label{fig:graph-hole-l-crit-empty-memory}
\end{minipage}
  \end{figure}

  \begin{figure}[h!]
    \begin{minipage}{.48\textwidth}
    \centering
\begin{tikzpicture}[scale=.7]
  \begin{axis}[
    xlabel=Locations,
    ylabel=Run Time(s),
    xtick=data,
    enlarge y limits={upper, value=0.2}]
    \addplot[smooth,color=red,nodes near coords] plot coordinates {
      (6,0.3723)
       (7,0.3853)
       (8,0.4308)
       (9,0.4718)
      (10,0.56005)
      (11,0.57922)
      (12,0.621077)
      (13,0.6925)
      (14,0.7468)
      (15,0.7707)

    };
    \addplot+[ybar,color=blue,fill] plot coordinates {
            (6,0.3723)
       (7,0.3853)
       (8,0.4308)
       (9,0.4718)
      (10,0.56005)
      (11,0.57922)
      (12,0.621077)
      (13,0.6925)
      (14,0.7468)
      (15,0.7707)
    };
  \end{axis}

\end{tikzpicture}
  \caption{Performance of \trim{} with the number of locations
    varying on input \acrit{}.}\label{fig:graph-location-transitions-l-crit-empty}
\end{minipage}\hfill
\begin{minipage}{.48\textwidth}
    \centering
\begin{tikzpicture}[scale=.7]
  \begin{axis}[
    xlabel=Locations,
    ylabel=Memory(MB),
    xtick=data,
    enlarge y limits={upper, value=0.2}]
    \addplot[smooth,color=red,nodes near coords] plot coordinates {
             (6,9.9)
       (7,10)
       (8,10.1)
       (9,10.2)
      (10,11.3)
      (11,12.7)
      (12,13.8)
      (13,14.0)
      (14,15.8)
      (15,16.9)

    };
    \addplot+[ybar,color=blue,fill] plot coordinates {
             (6,9.9)
       (7,10)
       (8,10.1)
       (9,10.2)
      (10,11.3)
      (11,12.7)
      (12,13.8)
      (13,14.0)
      (14,15.8)
      (15,16.9)

    };
  \end{axis}

\end{tikzpicture}
  \caption{Memory consumption of \trim{} with the number of locations
    varying on input
   \acrit{}.}\label{fig:graph-location-transitions-l-crit-empty-memory}
\end{minipage}
  \end{figure}
  
\paragraph{\textbf{Timed:}}  Figure~\ref{fig:graph-hole-time-l-crit-empty} shows 
  the result when we vary the number of \emph{holes} allowed in a
  given run. When we run the experiments by varying the maximum constant in the
\tmpda{}, we get the results in Fig.~\ref{fig:graph-cmax-time-l-crit-empty}.  Here we can
see the time increases polynomially with $c_{max}$  and
exponentially   with the number of \emph{holes} as expected. 

\begin{figure}[h]

\begin{minipage}{.48\textwidth}
  \centering

\begin{tikzpicture}[scale=.7]
  \begin{axis}[
    xlabel=Holes,
    ylabel=Run Time(s),
    xtick=data,
    enlarge y limits={upper, value=0.2}]
    
    \addplot[smooth,color=red,nodes near coords] plot coordinates {
      (2,9.05609)
      (3,46.0823)
      (4,404.218)
      (5,2726.01)

    };
     \addplot+[ybar,color=blue,fill] plot coordinates {
      (2,9.05609)
      (3,46.0823)
      (4,404.218)
      (5,2726.01)

    };
  \end{axis}

\end{tikzpicture}
  \caption{Performance of \trim{} with the number of \emph{holes}
    varying on \atcrit{} as input}\label{fig:graph-hole-time-l-crit-empty}

\end{minipage}\hfill
 \begin{minipage}{.48\textwidth}
  \centering

\begin{tikzpicture}[scale=.7]
  \begin{axis}[
    xlabel=Holes,
    ylabel=Memory(MB),
    xtick=data,
    enlarge y limits={upper, value=0.2}]
    \addplot[smooth,color=red,nodes near coords] plot coordinates {
      (2,54.108)
      (3,244.548)
      (4,2436.689)
      (5,11415.860)

    };
     \addplot+[ybar,color=blue,fill] plot coordinates {
       (2,54.108)
      (3,244.548)
      (4,2436.689)
      (5,11415.860)

    };
  \end{axis}

\end{tikzpicture}
  \caption{Memory consumption of \trim{} with the number of \emph{holes}
    varying on \atcrit{} as input}\label{fig:graph-hole-time-l-crit-empty-memory}
\end{minipage}

\end{figure}

\begin{figure}[h]

 \begin{minipage}{.48\textwidth}
  \centering
  \begin{tikzpicture}[scale=.7]
  \begin{axis}[
    xlabel=\cmax{},
    ylabel=Run Time(s),
    enlarge y limits={upper, value=0.2}]
    \addplot[smooth,color=red,nodes near coords] plot coordinates {
      (4,9.0560)
      (6,24.0387)
      (8,60.6463)
      (10,132.648)
      (12,261.001)
      (14,506.806)

    };
     \addplot+[ybar,color=blue,fill] plot coordinates {
      (4,9.0560)
      (6,24.0387)
      (8,60.6463)
      (10,132.648)
      (12,261.001)
      (14,506.806)

    };
  \end{axis}

\end{tikzpicture}
\caption{Performance of \trim{} with \cmax{} varying on
 \atcrit{} as input}\label{fig:graph-cmax-time-l-crit-empty}
\end{minipage}\hfill
  \begin{minipage}{.48\textwidth}
  \centering
  \begin{tikzpicture}[scale=.7]
  \begin{axis}[
    xlabel=\cmax{},
    ylabel=Memory(MB),
    enlarge y limits={upper, value=0.2}]
    \addplot[smooth,color=red,nodes near coords] plot coordinates {
      (4,54.108)
      (6,117.076)
      (8,406.236)
      (10,811.124)
      (12,1624.132)
      (14,3252.700)

    };
     \addplot+[ybar,color=blue,fill] plot coordinates {
        (4,54.108)
      (6,117.076)
      (8,406.236)
      (10,811.124)
      (12,1624.132)
      (14,3252.700)

    };
  \end{axis}

\end{tikzpicture}
\caption{Memory consumption of \trim{} with \cmax{} varying on
\atcrit{}
  as input}\label{fig:graph-cmax-time-l-crit-empty-memory}

\end{minipage}
\end{figure}

Finally, 
when we run the experiments by varying the maximum constant
(\cmax{}) in the \tmpda{}, we get the results as showed in  Fig.~\ref{fig:graph-cmax-time-l-crit-empty}. We can 
see increasing the \cmax{} increases
the time as a polynomial function. However, when we run experiments by varying the number of clocks, we get the
result showed in Fig.~\ref{fig:graph-clock-time-l-crit-empty} and the Graph in
indicates that time grows exponentially when the number of clocks
increased. For both cases memory consumption is shown in
Fig.~\ref{fig:graph-cmax-time-l-crit-empty-memory} and
Fig.~\ref{fig:graph-clock-time-l-crit-empty-memory} respectively.

We also checked the performance by varying the number of locations of
\tmpda{}. As expected, the time increased polynomially with the increase
of locations in the \tmpda{} c.f. Graph in
Fig.~\ref{fig:graph-location-time-l-crit-empty} and for memory in Fig.~\ref{fig:graph-location-time-l-crit-empty-memory}
\begin{figure}[h]
 
  \begin{minipage}{.48\textwidth}
   \centering
    \begin{tikzpicture}[scale=.7]
  \begin{axis}[
    xlabel=Clocks,
    ylabel=Run Time(s),
    enlarge y limits={upper, value=0.2}]
    \addplot[smooth,color=red,nodes near coords] plot coordinates {
      (2,9.0560)
      (3,14.1044)
      (4,21.9928)
      (5,39.1363)
      (6,87.4203)
      (7,264.72)

    };
     \addplot+[ybar,color=blue,fill] plot coordinates {
    (2,9.0560)
      (3,14.1044)
      (4,21.9928)
      (5,39.1363)
      (6,87.4203)
      (7,264.72)

    };
  \end{axis}

\end{tikzpicture}
\caption{Performance of \trim{} with the number of clocks  varying on
  \atcrit{} as input.}\label{fig:graph-clock-time-l-crit-empty}
\end{minipage}\hfill
  \begin{minipage}{.48\textwidth}
   \centering
    \begin{tikzpicture}[scale=.7]
  \begin{axis}[
    xlabel=Clocks,
    ylabel=Memory(MB),
    enlarge y limits={upper, value=0.2}]
    \addplot[smooth,color=red,nodes near coords] plot coordinates {
      (2,54.108)
      (3,55.124)
      (4,65.248)
      (5,84.440)
      (6,153.572)
      (7,401.572)

    };
     \addplot+[ybar,color=blue,fill] plot coordinates {
    (2,54.108)
      (3,55.124)
      (4,65.248)
      (5,84.440)
      (6,153.572)
      (7,401.572)

    };
  \end{axis}

\end{tikzpicture}
\caption{Memory consumption of \trim{} with the number of clocks  varying on
  \atcrit{} as input.}\label{fig:graph-clock-time-l-crit-empty-memory} 
\end{minipage}

\end{figure}

\begin{figure}[h]

 \begin{minipage}{.48\textwidth}
   \centering
   \begin{tikzpicture}[scale=0.7]
  \begin{axis}[
    xlabel=Locations,
    ylabel=Run Time(s),
    enlarge y limits={upper, value=0.2}]
    \addplot[smooth,color=red,nodes near coords] plot coordinates {
           (6,9.0560)
      (7,10.2249)
      (8,11.6579)
      (9,12.7217)
      (10,14.913)
      (11,15.6878)

    };
     \addplot+[ybar,color=blue,fill] plot coordinates {
         (6,9.0560)
      (7,10.2249)
      (8,11.6579)
      (9,12.7217)
      (10,14.913)
      (11,15.6878)

    };
  \end{axis}

\end{tikzpicture}
\caption{Performance of \trim{} with the number of locations  varying
  on \atcrit{} as input.}\label{fig:graph-location-time-l-crit-empty} 
  \end{minipage}\hfill
   \begin{minipage}{.48\textwidth}
   \centering
   \begin{tikzpicture}[scale=0.7]
  \begin{axis}[
    xlabel=Locations,
    ylabel=Memory(MB),
    enlarge y limits={upper, value=0.2}]
    \addplot[smooth,color=red,nodes near coords] plot coordinates {
           (6,54.108)
      (7,55.192)
      (8,61.212)
      (9,74.732)
      (10,84.176)
      (11,93.660)

    };
     \addplot+[ybar,color=blue,fill] plot coordinates {
          (6,54.108)
      (7,55.192)
      (8,61.212)
      (9,74.732)
      (10,84.176)
      (11,93.660)

    };
  \end{axis}

\end{tikzpicture}
\caption{Memory consumption of \trim{} with the number of locations  varying
  on \atcrit{} as input.}\label{fig:graph-location-time-l-crit-empty-memory} 
  \end{minipage}
 \end{figure}
\subsection{Varying parameters of $L_{bh}$}

We also did the same for $L_{bh}$ example, we constructed the \mpda{}
$M_{L_{bh}}$ which accepts the language $L_{bh}$ and then modified
$M_{L_{bh}}$ to make its set of final states empty, so that \trim{}
generates all possible list $(\mu)$ and returns emptiness. As we
discussed earlier this pushes \trim{} to its worst case
complexity. Details can be found in the Tables~\ref{tab-l-bh-untimed-empty-holes},~\ref{tab-l-bh-untimed-empty-locations} and in
the Graphs~\ref{fig:graph-hole-l-bh-empty},~\ref{fig:graph-location-transitions-l-bh-empty}.

\begin{figure}
  \begin{minipage}{.48\textwidth}
  \centering
  \begin{tikzpicture}[scale=.7]
  \begin{axis}[
    xlabel=Holes,
    ylabel=Run Time(s),
    enlarge y limits={upper, value=0.2}]
    \addplot[smooth,color=red,nodes near coords] plot coordinates {
      (0,0.0134)
      (1,0.0156)
      (2,0.2066)
      (3,1.6664)
      (4,11.6436)
      (5,55.2659)

    };
     \addplot+[ybar,color=blue,fill] plot coordinates {
            (0,0.0134)
      (1,0.0156)
      (2,0.2066)
      (3,1.6664)
      (4,11.6436)
      (5,55.2659)
    };
  \end{axis}

\end{tikzpicture}
  \caption{Performance of \trim{} with the number of \emph{holes}
    varying as in
    Table~\ref{tab-l-bh-untimed-empty-holes}}\label{fig:graph-hole-l-bh-empty}
\end{minipage}\hfill
\begin{minipage}{.48\textwidth}
    \centering
\begin{tikzpicture}[scale=.7]
  \begin{axis}[
    xlabel=Locations,
    ylabel=Run Time(s),
    enlarge y limits={upper, value=0.2}]
    \addplot[smooth,color=red,nodes near coords] plot coordinates {
     
       (7,0.2099)
       (8,0.2549)
       (9,0.2634)
      (10,0.3059)
      (11,0.3265)
      (12,0.3797)
      (13,0.4059)
      (14,0.4234)

    };
    \addplot+[ybar,color=blue,fill] plot coordinates {
       (7,0.2099)
       (8,0.2549)
       (9,0.2634)
      (10,0.3059)
      (11,0.3265)
      (12,0.3797)
      (13,0.4059)
      (14,0.4234)
      
    };
  \end{axis}

\end{tikzpicture}
  \caption{Performance of \trim{} with the number of locations  varying as in Table~\ref{tab-l-bh-untimed-empty-locations}}\label{fig:graph-location-transitions-l-bh-empty}
\end{minipage}
  \end{figure}
  \subsection{Varying parameters on non-empty \mpda{}}

   \subsubsection{L\textsup{crit}} This example is one of the most simple
    examples that we worked on, and the parameters of this example,
    like c\textsub{max}, Clocks can easily be changed to see the impact on
    running time. Below we present some tables generated by varying
    the parameters.
    \subsubsection{Varying Maximum Constant(c\textsub{max})}
    Table~\ref{tab:anbmcndm-max-const} shows the run time and memory
    consumption by the algorithm when the maximum constant(\cmax{}) is
    varied. Note, that the automaton is non-empty so it has a run, and
    as soon as it finds a path from the start state to the final state
    it stops and produces an accepting run using witness algorithm as
    described in the main paper.
 \subsubsection{Varying Holes} Changing \emph{holes} should increase time
exponentially but, in this example as the automaton has a run with
only 2 \emph{holes} so it terminates when ever it finds the final
state. Even if we allow more \emph{holes} the program terminates
before introducing more \emph{holes}, which is why time
saturates after 3 \emph{holes}. Look at Table \ref{tab:anbmcndm-holes}
for details.
\subsubsection{Increasing Clocks} Though the
  increasing number of clocks in this current model makes no sense,
  due to the lack of proper benchmarks, we had to find some way to
  understand the scalability of our algorithm so, we randomly increased
  the number of clocks with a fixed value of clock constraint, so that we
  get some idea about how the algorithm scales when the clocks are
  increased for a given example.
Increasing clocks increases the number of states exponentially. So, every time we add a clock, the number of state valuation pairs multiplies according to the maximum constant of that clock, which in turn increases the time to compute transitive closure. On the other hand, the number of lists depends on the state valuation pairs, which are reachable from the previous state valuation pairs by some transitions. The addition of an extra clock does not change the number of reachable pairs in this example. Hence, even if the number of state valuation pairs increased, the number of lists $\mu$ generated remain
 same s.f. Table \ref{tab:anbmcndm-clock}
\begin{figure}
  \begin{minipage}{.48\textwidth}
  \centering
\begin{tikzpicture}[scale=0.7]
  \begin{axis}[
    xlabel=Holes,
    ylabel=Run Time(s),
    enlarge y limits={upper, value=0.2}]
    \addplot[smooth,color=red,nodes near coords] plot coordinates {
      ( 2 , 9.9652)
      (3,  23.2813)
      (4, 30.4787)
      (5,  30.4565)
      (6, 30.7647)
      (7, 30.2986)
      (8, 30.5274)

    };
     \addplot+[ybar,color=blue,fill] plot coordinates {
      ( 2 , 9.9652)
      (3,  23.2813)
      (4, 30.4787)
      (5,  30.4565)
      (6, 30.7647)
      (7, 30.2986)
      (8, 30.5274)

    };
  \end{axis}

\end{tikzpicture}
\caption{Performance of \trim{} with varying number of holes on $\mathcal{A}_{crit}$
     (as in Table~\ref{tab:anbmcndm-holes})}
\end{minipage}\hfill
\begin{minipage}{.48\textwidth}
   \centering
\begin{tikzpicture}[scale=.7]
  \begin{axis}[
    xlabel=c\textsubscript{max},
    ylabel=Run Time(s),
    enlarge y limits={upper, value=0.2}]
    \addplot[smooth,color=red,nodes near coords] plot coordinates {
      ( 4 , 0.9326)
      (5, 1.86942)
      (6,3.6386)
      (7, 6.3043)
      (8,9.6519)
      (9,15.9538)
      (10,24.3861)

    };
     \addplot+[ybar,color=blue,fill] plot coordinates {
     ( 4 , 0.9326)
      (5, 1.86942)
      (6,3.6386)
      (7, 6.3043)
      (8,9.6519)
      (9,15.9538)
      (10,24.3861)

    };
  \end{axis}

\end{tikzpicture}
\caption{Performance of \trim{} with the c\textsubscript{max}
    varying on $\mathcal{A}_{crit}$ (as in Table~\ref{tab:anbmcndm-max-const})}
  \end{minipage}
  
\end{figure}

\begin{figure}[h!]
 \begin{minipage}{.48\textwidth}
  \centering
\begin{tikzpicture}[scale=.7]
  \begin{axis}[
    xlabel=Clocks,
    ylabel=Run Time(s),
    enlarge y limits={upper, value=0.2}]
    \addplot[smooth,color=red,nodes near coords] plot coordinates {
      ( 2 ,  0.276977)
      (3,   0.362076)
      (4,  0.753226)
      (5,  2.84098)
      (6, 16.4543 )
      (7, 118.46)
      (8, 917.355)

    };
     \addplot+[ybar,color=blue,fill] plot coordinates {
       ( 2 ,  0.276977)
      (3,   0.362076)
      (4,  0.753226)
      (5,  2.84098)
      (6, 16.4543 )
      (7, 118.46)
      (8, 917.355)

    };
  \end{axis}

\end{tikzpicture}
  \caption{Performance of \trim{} with varying number of clocks
    on \acritn{}   (as in Table~\ref{tab:anbmcndm-clock})}
\end{minipage}\hfill
\begin{minipage}{.48\textwidth}

  \centering
\begin{tikzpicture}[scale=.7]
  \begin{axis}[
    xlabel={LCM(M,N)},
    ylabel=Run Time(s),
    enlarge y limits={upper, value=0.2}]
    \addplot[smooth,color=red,nodes near coords] plot coordinates {

      (12,  172.542)
      (45,  716.35)
      (66, 957.338 )
      (84, 1194.28)
      (168, 1922.49)

    };
     \addplot+[ybar,color=blue,fill] plot coordinates {
       ( 6 ,  130.85)
      (10,   144.732)
      (12,  172.542)
      (45,  716.35)
      (66, 957.338 )
      (84, 1194.28)
      (168, 1922.49)

    };
  \end{axis}
\end{tikzpicture}
  \caption{\aprodp{}  with varying parameters `M' and `N'
on \trim{} shows that the run time is proportional to the LCM(M,N)
(as in Table~\ref{tab:para-prodcon})}
  \end{minipage}
\end{figure}

\subsection{Multi Producer Consumer Problem}
Multi producer consumer problem is a parameterized example (the size
of the production batches of `A' and `B' denoted as `M' and `N' respetively). We run \trim{} with
different parameter values of \aprodp{} and we show the results in the
table~\ref{tab:para-prodcon}.

    \begin{table}[t]
    \resizebox{\textwidth}{!}{
    \centering
    \begin{tabular}{c|c|c|c|c|c|c|c}
      \toprule
M & N & States
& Transitions
& Length(Witness)
(transitions)
      & Time(ms) & Time(Witness)(ms) & LCM(M,N) \\
      \hline
      3 & 2 & 7 & 11 & 24 & 130.85 & 0.27 & 6 \\

\hline
9 & 5 & 22 & 25 & 180 & 716.35 & 4.944 & 45 \\
\hline
10 & 5 & 23 & 26 & 40 & 144.732 & 0.854 & 10 \\
\hline
  11 & 6 & 25 & 28 & 264 & 957.338 & 6.002 & 66 \\
  \hline
  12 & 6 & 26 & 29 & 48 & 172.542 & 1.008 & 12 \\
  \hline
  12 & 7 & 27 & 30 & 336 & 1194.28 & 8.071 & 84 \\
  \hline
      24 & 7 & 32 & 34 & 672 & 1922.49 & 10.767 & 168 \\
      \hline
      
      \bottomrule
      
\end{tabular}
}
\caption{Experimental results on the parameterized \aprodp{} example}
\label{tab:para-prodcon}
\end{table}

\section{Scalability}
We also tried to run scalability test of \trim{} by checking maximum
number of holes \trim{} can handle within a given memory limit. Here we
used the \mpda{} \acrit{}, and \atcrit{}.
Table~\ref{tab-l-crit-untimed-empty-holes-scale}, and
Table~\ref{tab:lcrit-bounded-ram}  shows the
scalability of \trim{}  on \acrit{}, and \atcrit{} with memory limit of 8GB.

 \begin{table}[h!]
\centering
\resizebox{\textwidth}{!}{
  \begin{tabular}{|c|c|c|c|c|c|c|c|}
   \hline  Stacks & Location & Transitions &  Holes & Time(sec) & Memory(MB)  \\
    \hline \hline
         2 & 6 & 9 &  0 & 0.015 & 3.889 \\
    \hline

     2 & 6 & 9 &  1 & 0.0110 & 3.892  \\
    \hline
     2 & 6 & 9 &  2 & 0.0371 & 4.096 \\
    \hline
     2 & 6 & 9 &  3 & 0.0891 & 4.788 \\
    \hline
    2 & 6 & 9 &  4 & 0.3723 & 9.960 \\
    \hline
         2 & 6 & 9 &  5 & 0.6749 & 17.776  \\
    \hline
    2 & 6 & 9 &  6 & 2.5988 &38.964  \\
    \hline
    2 & 6 & 9 &  7 & 4.7267 & 75.564  \\
    \hline
    2 & 6 & 9 &  8 & 15.681 & 219.844 \\
    \hline
    2 & 6 & 9 &  9 & 25.4822& 377.624  \\
    \hline
    2 & 6 & 9 &  10 & 80.5287 & 1127.9  \\
    \hline
     2 & 6 & 9 &  11 & 151.612 & 1937.5  \\
    \hline
     2 & 6 & 9 &  12 & 211.584 & 5468.2  \\
    \hline
     2 & 6 & 9 &  13 & 662.98 & 8GB(Killed)\\
    \hline
\end{tabular}}
\caption{\acrit{}  scalability with respect to holes and  memory limited to 8GB}\label{tab-l-crit-untimed-empty-holes-scale}
\end{table}

\begin{table}[h!]
  \centering
  \resizebox{\textwidth}{!}{
    \begin{tabular}{|c|c|c|c|c|c|c|c |}
      \hline
      
Clocks & Stack & Locations & Transitions &\cmax{} & Holes & Time(sec) &
                                                                      Memory(MB)  \\
      \hline\hline
      2 & 2 & 6 & 9 & 4 & 0 & 0.0802 & 5.7 \\
      \hline
      2 & 2 & 6 & 9 & 4 & 1 & 0.09 & 5.7 \\
      \hline
      2 & 2 & 6 & 9 & 4 & 2 & 67.126 & 40.6 \\
      \hline
      2 & 2 & 6 & 9 & 4 & 3 & 385.59 & 1980.220 \\
      \hline
      2 & 2 & 6 & 9 & 4 & 4 & 858.9 & 8GB(Killed)  \\
      \hline

\end{tabular} 
  }
  \caption{\atcrit{} scalability with respect to holes and memory
    limited to 8GB}
  \label{tab:lcrit-bounded-ram} 

\end{table}
\begin{figure}[h!]
  \begin{minipage}{.48\textwidth}
  \centering
  \begin{tikzpicture}[scale=.7]
  \begin{axis}[
    xlabel=Holes,
    ylabel=Run Time(s),
    xtick=data,
    enlarge y limits={upper, value=0.2}]
    \addplot+[ybar,color=blue,fill] plot coordinates {
           (0,0.015)
       (1,0.0110)
       (2,0.0371)
       (3,0.0891)
      (4,0.3723)
      (5,0.6749)
      (6,2.5988)
      (7,4.7267)
      (8,15.681)
      (9,25.4822)
      (10,80.5287)
      (11,151.612)
      (12,211.584)
      (13,662.98)
    };
    \addplot[smooth,color=red,nodes near coords] plot coordinates {
       (10,80.5287)
      (11,151.612)
      (12,211.584)
      (13,662.98)

    };
     
  \end{axis}

\end{tikzpicture}
  \caption{Scalability of \trim{} with the number of \emph{holes}
    varying with memory limited to 8GB(8192MB) on \acrit{} (as in
    Table~\ref{tab-l-crit-untimed-empty-holes-scale})}\label{fig:graph-hole-l-crit-time-scale}
\end{minipage}\hfill
\begin{minipage}{.48\textwidth}
    \centering
\begin{tikzpicture}[scale=.7]
  \begin{axis}[
    xlabel=Holes,
    ylabel=Memory(MB),
    xtick=data,
    enlarge y limits={upper, value=0.2}]
    \addplot+[ybar,color=blue,fill] plot coordinates {
           (0,3.889)
       (1,3.392)
       (2,4.096)
       (3,4.788)
      (4,9.960)
      (5,17.776)
      (6,38.964)
      (7,75.564)
      (8,219.844)
      (9,377.624)
      (10,1127.9)
      (11,1937.5)
      (12,5468.2)
      (13,8192) 
     
    };
    \addplot[smooth,color=red,nodes near coords] plot coordinates {
      (10,1127.9)
      (11,1937.5)
      (12,5468.2)
      (13,8192)

    };
    
  \end{axis}

\end{tikzpicture}
  \caption{Memory consumption of \trim{} with  number of \emph{ holes}
    varying and
    memory limited to 8GB(8196MB) on \acrit{} (as in Table~\ref{tab-l-crit-untimed-empty-holes-scale})}\label{fig:graph-memory-l-crit-scale}
\end{minipage}
  \end{figure}

\begin{figure}
  \begin{minipage}{.48\textwidth}
  \centering
  \begin{tikzpicture}[scale=.7]
  \begin{axis}[
    xlabel=Holes,
    ylabel=Run Time(s),
    enlarge y limits={upper, value=0.2}]
    \addplot[smooth,color=red,nodes near coords] plot coordinates {
       
      (0,0.0802)
      (1,0.09)
      (2,67.126)
      (3,385.59)
      (4,858.9)

    };
     \addplot+[ybar,color=blue,fill] plot coordinates {
             (0,0.0802)
      (1,0.09)
      (2,67.126)
      (3,385.59)
      (4,858.9)
      
    };
  \end{axis}

\end{tikzpicture}
  \caption{ Scalability of \trim{} with the number of \emph{holes}
    varying as  with memory limited to 8GB on
    \atcrit{} (as in
    Table~\ref{tab:lcrit-bounded-ram})}\label{fig:graph-atcrtit-bounded-ram-time}
\end{minipage}\hfill
\begin{minipage}{.48\textwidth}
    \centering
\begin{tikzpicture}[scale=.7]
  \begin{axis}[
    xlabel=Locations,
    ylabel=Memory(MB),
    enlarge y limits={upper, value=0.2}]
    \addplot[smooth,color=red,nodes near coords] plot coordinates {
       (0,5.7)
      (1,5.7)
      (2,40.6)
      (3,1980.22)
      (4,8192)

    };
    \addplot+[ybar,color=blue,fill] plot coordinates {
       (0,5.7)
      (1,5.7)
      (2,40.6)
      (3,1980.22)
      (4,8192)      
    };
  \end{axis}

\end{tikzpicture}
  \caption{Memory consumption of \trim{} with the number of
    \emph{holes}   varying  with memory limited to 8GB on
    \atcrit{} (as in Table~\ref{tab:lcrit-bounded-ram})}\label{fig:graph-atcrit-bounded-ram-memory}
\end{minipage}
  \end{figure}

  \paragraph{Scalability in terms of states:} We also tried to check the scalability of \trim{} by
  increasing the number of states of \mpda{}. We were able to handle
  4000 state \mpda{} with 0 holes, in 9181 sec and 4.7GB of memory.

\section{Tables}

    \begin{table}[h!]
    \centering
    \resizebox{\textwidth}{!}{
      \begin{tabular}{|c|c|c|c|c|c|c|c|c|}
        \hline
Clocks & Stack & Locations & Transitions & $c_{max}$ & Holes &
                                                                   Time(sec)
        & Memory(KB)&Empty(Y/N) \\
      \hline\hline
      2 & 2 & 6 & 9 & 4 & 2 & 0.9326 & 28804&N \\
      \hline
      2 & 2 & 6 & 9 & 5 & 2 & 1.8694 & 54056 & N \\
      \hline
      2 & 2 & 6 & 9 & 6 & 2 & 3.6386 & 103904 & N \\
      \hline
      2 & 2 & 6 & 9 & 7 & 2 & 6.3043 & 204560 & N \\
      \hline
      2 & 2 & 6 & 9 & 8 & 2 & 9.6519 & 303012 &  N \\
      \hline
      2 & 2 & 6 & 9 & 9 & 2 & 15.9538 & 404908 &  N \\
      \hline
      2 & 2 & 6 & 9 & 10 & 2 & 24.3861 & 811040 &  N \\
      \hline
    \end{tabular}
      
  }
  \caption{$L_{crit}$ With Changing Maximum Constant}~\label{tab:anbmcndm-max-const}
  \end{table}
\begin{table}[h!]
  \centering
  \resizebox{\textwidth}{!}{
    \begin{tabular}{|c|c|c|c|c|c|c|c|c|}
      \hline
Clocks & Stack & Locations & Transitions & $c_{max} $& Holes & Time(sec) &
                                                                      Memory(KB)& Empty(Y/N) \\
      \hline\hline
      2 & 2 & 6 & 9 & 8 & 2 & 9.9652 & 203396&N \\
      \hline
      2 & 2 & 6 & 9 & 8 & 3 & 23.2813 & 414100 & N \\
      \hline
      2 & 2 & 6 & 9 & 8 & 4 & 30.4787 & 443064 & N \\
      \hline
      2 & 2 & 6 & 9 & 8 & 5 & 30.4565 & 443192 & N \\
      \hline
      2 & 2 & 6 & 9 & 8 & 6 & 30.7647 & 443236 & N \\
      \hline
      2 & 2 & 6 & 9 & 8 & 7 & 30.2986 & 443120 & N \\
      \hline
      2 & 2 & 6 & 9 & 8 & 8 & 30.5274 & 443188 & N \\
      \hline
\end{tabular} 
  }
  \caption{$L_{crit}$ With Changing Holes}
  \label{tab:anbmcndm-holes}
\end{table}

\begin{table}[h!]
  \centering
  \resizebox{\textwidth}{!}{
    \begin{tabular}{|c|c|c|c|c|c|c|c |c|}
      \hline
      
Clocks & Stack & Locations & Transitions &$c_{max}$ & Holes & Time(sec) &
                                                                      Memory(KB) & Empty(Y/N) \\
      \hline\hline
      2 & 2 & 6 & 9 & 4 & 2 & 0.276977 & 7336 & N \\
      \hline
      3 & 2 & 6 & 9 & 4 & 2 & 0.362076 & 8268 & N \\
      \hline
      4 & 2 & 6 & 9 & 4 & 2 & 0.753226 & 12880 & N \\
      \hline
      5 & 2 & 6 & 9 & 4 & 2 & 2.84098 & 29020 & N \\
      \hline
      6 & 2 & 6 & 9 & 4 & 2 & 16.4543 & 104832 &  N \\
      \hline
      7 & 2 & 6 & 9 & 4 & 2 & 118.46 & 397944 &  N \\
      \hline
      8 & 2 & 6 & 9 & 4 & 2 & 917.355 & 1580424   & N \\
      \hline
\end{tabular} 
  }
  \caption{$L_{crit}$ With Changing Clocks}
  \label{tab:anbmcndm-clock}

\end{table}

\begin{table}[h!]
\centering
\resizebox{\textwidth}{!}{
  \begin{tabular}{|c|c|c|c|c|c|c|c|c|}
   \hline  Stacks & Location & Transitions &  Holes & Time(sec) & Memory(KB) &  Empty(Y/N) \\
    \hline \hline
     2 & 7 & 13 &  0 & 0.0134& 4128 &  Y \\
    \hline
     2 & 7 & 13 &  1 & 0.0156 & 4200 & Y \\
    \hline
     2 & 7 & 13 &  2 & 0.2099 & 7300 &  Y \\
    \hline
    2 & 7 & 13 &  3 & 1.6664 & 33408 &  Y \\
    \hline
         2 & 7 & 13 &  4 & 11.6436 & 297656 &  Y \\
    \hline
    2 & 7 & 13 &  5 & 55.2659 &1408160  &  Y \\
   
    \hline
\end{tabular}}
\caption{$M_{L_{bh}}$ Varying Holes}\label{tab-l-bh-untimed-empty-holes}
\end{table}

\begin{table}[h!]
\centering
\resizebox{\textwidth}{!}{
  \begin{tabular}{|c|c|c|c|c|c|c|c|c|}
   \hline  Stacks & Location & Transitions &  Holes & Time(sec) & Memory(KB) &  Empty(Y/N) \\
    \hline \hline
     2 & 7 & 13 &  2 & 0.2099& 7300 &  Y \\
    \hline
     2 & 8 & 14 &  2 & 0.2549 & 9804 & Y \\
    \hline
     2 & 9 & 15 &  2 & 0.2634 & 9904 &  Y \\
    \hline
    2 & 10 & 16 &  2 & 0.3059 & 9996 &  Y \\
    \hline
         2 & 11 & 17 &  2 & 0.3255 & 10600 &  Y \\
    \hline
    2 & 12 & 18 &  2 & 0.3797 &10164 &  Y \\
   \hline
    2 & 13 & 19 &  2 & 0.4059 &15736 &  Y \\
    \hline
    2 & 14 & 20 &  2 & 0.4234 &15876 &  Y \\
    \hline
\end{tabular}}
\caption{$M_{L_{bh}}$ Varying Locations}\label{tab-l-bh-untimed-empty-locations}
\end{table}
 \begin{table}[h!]
\centering
\resizebox{\textwidth}{!}{
  \begin{tabular}{|c|c|c|c|c|c|c|c|c|}
   \hline  Stacks & Location & Transitions &  Holes & Time(sec) & Memory(KB) &  Empty(Y/N) \\
    \hline \hline
     2 & 6 & 9 &  4 & 0.3723 & 9960 &  Y \\
    \hline
     2 & 7 & 10 &  4 & 0.3853 & 9996 & Y \\
    \hline
     2 & 8 & 11 &  4 & 0.4308 & 10092 &  Y \\
    \hline
    2 & 9 & 12 &  4 & 0.4718 & 10160 &  Y \\
    \hline
         2 & 10 & 13 &  4 & 0.5600 & 11336 &  Y \\
    \hline
    2 & 11 & 14 &  4 & 0.57922 &12763  &  Y \\
    \hline
    2 & 12 & 15 &  4 & 0.62017 & 13808 &  Y \\
    \hline
    2 & 13 & 16 &  4 & 0.692599 & 14028 &  Y \\
    \hline
    2 & 14 & 17 &  4 & 0.746832& 15716 &  Y \\
    \hline
    2 & 15 & 18 &  4 & 0.770715 & 16918 &  Y \\
    \hline
\end{tabular}}
\caption{\acrit{} Varying Locations-Transitions}\label{tab-l-crit-untimed-empty-location-transitions}
\end{table}
\begin{table}[h]
  \centering
  \resizebox{\textwidth}{!}{
    \begin{tabular}{|c|c|c|c|c|c|c|c|c|}
      \hline
Clocks & Stacks & Location & Transitions & \cmax{} & Holes & Time(sec) & Memory(KB) & Empty(Y/N) \\
      \hline\hline
      2 & 2 & 6 & 9 & 4 & 2 & 9.05609 & 54108  & Y \\
      \hline
      2 & 2 & 6 & 9 & 6 & 2 & 24.0387 & 117076 & Y \\
      \hline
      2 & 2 & 6 & 9 & 8 & 2 & 60.6463 & 406236  & Y \\
      \hline
      2 & 2 & 6 & 9 & 10 & 2 & 132.648 & 811124  & Y \\
      \hline
      2 & 2 & 6 & 9 & 12 & 2 & 261.001 & 1624132 & Y \\
      \hline
      2 & 2 & 6 & 9 & 14 & 2 & 506.806 & 3252700 & Y \\
      \hline
    \end{tabular} }
  \caption{\atcrit{} Varying \cmax{}}\label{tab:l-crit-cmax}
\end{table}

\begin{table}[h]
  \centering
  \resizebox{\textwidth}{!}{
    \begin{tabular}{|c|c|c|c|c|c|c|c|c|}
      \hline
Clocks & Stacks & Location & Transitions & \cmax{} & Holes & Time(sec) & Memory(KB) &  Empty(Y/N) \\
      \hline\hline
      2 & 2 & 6 & 9 & 4 & 2 & 9.05609 & 54108 & Y \\
      \hline
      3 & 2 & 6 & 9 & 4 & 2 & 14.1044 & 55124 &  Y \\
      \hline
      4 & 2 & 6 & 9 & 4 & 2 & 21.9928 & 65248 & Y \\
      \hline
      5 & 2 & 6 & 9 & 4 & 2 & 39.1368 & 84440 &  Y \\
      \hline
      6 & 2 & 6 & 9 & 4 & 2 & 87.4203 & 153572 &  Y \\
      \hline
      7 & 2 & 6 & 9 & 4 & 2 & 264.71 & 401572 &  Y \\
      \hline
    \end{tabular}
     }
  \caption{\atcrit{} Varying Clocks}
  \label{tab:l-crit-clock}
\end{table}
\begin{table}[h!]
  \centering
  \resizebox{\textwidth}{!}{
    \begin{tabular}{c|c|c|c|c|c|c|c|c}
      \toprule
Clocks & Stacks & Location & Transitions & \cmax{} & Holes & Time(sec) & Memory(KB) &  Empty(Y/N) \\
      \hline
      2 & 2 & 6 & 9 & 4 & 2 & 9.05609 & 54108 & Y \\
      \hline
      2 & 2 & 7 & 10 & 4 & 2 & 10.2249 & 55192  & Y \\
      \hline
      2 & 2 & 8 & 11 & 4 & 2 & 11.6579 & 61212 &  Y \\
      \hline
      2 & 2 & 9 & 12 & 4 & 2 & 12.7217 & 74732 &  Y \\
      \hline
      2 & 2 & 10 & 13 & 4 & 2 & 14.913 & 84176 &  Y \\
      \hline
      2 & 2 & 11 & 14 & 4 & 2 & 15.6878 & 93660 &  Y \\
      \bottomrule
    \end{tabular}
    
     }
  \caption{\atcrit{} Varying Locations}\label{tab:l-crit-location}
\end{table}

\end{document}